\titleformat{\paragraph}[runin]
{\normalfont\normalsize\itshape}{\theparagraph}{1em}{}[.]
\newcolumntype{P}[1]{>{\centering\arraybackslash}p{#1}}
\theoremstyle{plain}
\newtheorem{theorem}{\protect\theoremname}[section]
\theoremstyle{definition}
\newtheorem{definition}[theorem]{\protect\definitionname}
\theoremstyle{definition}
\theoremstyle{remark}
\newtheorem{claim}[theorem]{\protect\claimname}
\theoremstyle{remark}
\newtheorem{example}[theorem]{Example}
\theoremstyle{remark}
\newtheorem{corollary}[theorem]{\protect\corollaryname}
\theoremstyle{remark}
\newtheorem{remark}[theorem]{Remark}
\theoremstyle{remark}
\newtheorem{lemma}[theorem]{Lemma}
\theoremstyle{remark}
\newtheorem{fact}[theorem]{Fact}
\providecommand{\claimname}{Claim}
\providecommand{\corollaryname}{Corollary}
\providecommand{\definitionname}{Definition}
\providecommand{\theoremname}{Theorem}
\DeclareMathOperator{\tr}{Trace}
\DeclareMathOperator{\rk}{Rank}
\renewcommand{\i}{\bar{i}}
\renewcommand{\j}{\bar{j}}
\renewcommand{\k}{\bar{k}}
\newcommand{\szone}{{\mathtt{TA}\text{-}\mathtt{New25}}}
\newcommand{\sztwo}{{\mathtt{TA}\text{-}\mathtt{New25b}}}
\newcommand{\supplementalurl}{\url{https://www.cs.huji.ac.il/~odedsc/papers/trilinear_aggregation_algorithms_decomposed-2025-07-29.zip}}
\begin{document}

\title{Towards Faster Feasible Matrix Multiplication \\ by Trilinear Aggregation}
\author{
    Oded Schwartz \orcidlink{0000-0003-1309-5566} \\ 
    \href{mailto:odedsc@cs.huji.ac.il}{odedsc@cs.huji.ac.il} \\
    The Hebrew University of Jerusalem \\
    Jerusalem, Israel
    \and
    Eyal Zwecher \orcidlink{0009-0007-4648-904X} \\
    \href{mailto:eyal.zwecher@mail.huji.ac.il}{eyal.zwecher@mail.huji.ac.il} \\
    The Hebrew University of Jerusalem \\
    Jerusalem, Israel
}
\date{}
\maketitle

\begin{abstract}
Matrix multiplication is a fundamental kernel in high performance computing. Many algorithms for fast matrix multiplication can only be applied to enormous matrices ($n>10^{100}$) and thus cannot be used in practice. Of all algorithms applicable to feasible input, Pan's $O(n^{2.773372})$ algorithm (1982) is asymptotically the fastest. We obtain an $O(n^{2.773203})$ algorithm applicable to the same input sizes as Pan's algorithm. This algorithm is the fastest matrix multiplication algorithm with base case smaller than $1000$. Further, our method obtains the best asymptotic complexity for many small base cases, starting at $n_0=28$. We also obtain better exponents for larger base cases.
To construct our algorithm, we use the trilinear aggregation method. We find parts of the algorithms that are equivalent to matrix multiplication with smaller base case, and use the de Groote equivalence to replace these parts in a way that allows further optimization of our algorithms. Finally, we improve the additive complexity of our algorithms by finding a sparse decomposition and reducing the leading coefficient. These mark a fundamental step towards outperforming existing fast matrix multiplication algorithms in practice.

\end{abstract}

\section{Introduction}
Matrix multiplication is a fundamental operation in computer science, and is used in many applications, including scientific computations, and machine learning. Hence, optimizing matrix multiplication is essential. In 1969, Strassen obtained the first sub-cubic time matrix multiplication algorithm, with complexity of $O(n^{2.807355})$ \cite{strassen1969}. It took nearly a decade to obtain a faster algorithm - Pan's trilinear aggregation algorithm \cite{pan78}. Much of later research uses techniques which result in algorithms applicable only to matrices of huge dimensions ($n>10^{100}$), due to huge base case sizes, as well as very large constants hidden in the $O$-notation, and are thus impractical (cf. \cite{alman2024asymmetryyieldsfastermatrix, cohn2003group, coppersmith1987matrix, davie2013improved, le2014powers, williams2012multiplying, williams2023newboundsmatrixmultiplication, strassen1986asymptotic, bini_first_apa_algorithm, partial_and_total, duan2023asymetric_hashing, Alman2024Refined}). Nonetheless, many studies produce matrix multiplication algorithms applicable to \emph{feasible}\footnote{Feasible algorithms are algorithms that can run on real world hardware and perform multiplication of matrices that are needed for practical applications. Thus, the soft definition of \emph{feasible algorithms} changes based on technological advances. Nevertheless, all known matrix multiplication algorithms we are aware of have either relatively small base cases $n_0<10^4$ or enormous base cases $n_0>10^{100}$.} size (cf. \cite{lipshitz2012communication, laderman1992practical, smirnov2013bilinear, drevet2011optimization, pan82}) with several new recent ones (cf. \cite{flipgraph, moosbauer2025flip, local_search, kaporin4_4_4_48, dumas2025noncommutative, alphaevolve, google_algorithms}). Yet, Pan's $O\left(n^{2.773372}\right)$ algorithm (1982) is asymptotically the fastest among those.

\subsection{Previous Work}

\paragraph{Techniques resulting in huge dimensions}
The minimal $\omega$ such that two matrices of size $n\times n$ can be multiplied in time $O\left(n^{\omega +\varepsilon}\right)$ for every $\varepsilon>0$ is called the matrix multiplication exponent \cite{partial_and_total}. Finding $\omega$ and whether $\omega=2$ is a fundamental open problem in theoretical computer science \cite{von1988algebraic, legall_bound_of_coppersmith}. Bini et al. \cite{bini_first_apa_algorithm} obtained an $O\left(n^{2.779886}\right)$ approximate algorithm for matrix multiplication, which is applicable to small input size. Bini later demonstrated how to convert it to an exact algorithm \cite{bini1980relations}. However, the exactification process creates an algorithm with an enormous base case. Sch\"{o}nhage \cite{partial_and_total} obtained an $O\left(n^{2.521813}\right)$ algorithm by introducing the $\tau$-theorem. This theorem converts disjoint and partial matrix multiplication algorithms into an algorithm that multiplies one pair of matrices. Again, the resulting algorithm has a huge base case. Strassen \cite{strassen1986asymptotic} later obtained an $O\left(n^{2.478496}\right)$ algorithm using the laser method. This method requires exponentiation of a tensor with a low rank, and thus results in huge base cases. In 1987, Coppersmith and Winograd \cite{coppersmith1987matrix} used this method to obtain an $O\left(n^{2.387190}\right)$ algorithm by using the laser method on a different tensor. The Coppersmith-Winograd tensor cannot yield complexity better than $O\left(n^{2.3078}\right)$ \cite{legall_bound_of_coppersmith}. Nevertheless, almost all upper bounds on $\omega$ since 1987 utilize the Coppersmith-Winograd method (cf. \cite{coppersmith1987matrix, williams2012multiplying, le2014powers, duan2023asymetric_hashing, williams2023newboundsmatrixmultiplication, Alman2024Refined, alman2024asymmetryyieldsfastermatrix}). At the time of writing, the best upper bound on $\omega$ is $\omega \leq 2.371339$ by Alman et al. \cite{alman2024asymmetryyieldsfastermatrix}.

\paragraph{Feasible Algorithms}

Strassen's recursive algorithm uses a base case that multiplies a $2\times 2$ matrix by a $2\times 2$ matrix, with $7$ multiplications. Namely, it is a $\left<2,2,2;7\right>$-algorithm\footnote{An $\left<m_0, n_0, p_0; t\right>$-algorithm is an algorithm for multiplying an $m_0\times n_0$ matrix by an $n_0 \times p_0$ matrix using $t$ multiplications. See Section \ref{sec: preliminaries}, Definition \ref{def: n0, m0, r0; t algorithm}. This means that $\rk(\left<n_0, m_0, p_0\right>)\leq t$, where $\left<n_0, m_0, p_0\right>$ is the matrix multiplication tensor, see Definition \ref{def: mm tensor} and Claim \ref{claim: tensor rank and mm}.}. Brent \cite{brent_equations} describes a set of trilinear equations on the parameters of a bilinear algorithm that are sufficient and necessary for the algorithm to compute matrix multiplication. Laderman \cite{laderman333-23} found a solution to these equations and obtained a $\left<3,3,3;23\right>$-algorithm with complexity of $O(n^{2.85405})$, which is not faster than Strassen's algorithm.
Nearly a decade after Strassen obtained the first non-trivial matrix multiplication exponent, Pan \cite{pan78} improved upon Strassen's bound, by the trilinear aggregation technique obtaining an $O\left(n^{2.795123}\right)$ algorithm. He later improved that to $O\left(n^{2.780142}\right)$ \cite{pan80}, and to $O\left(n^{2.773372}\right)$ \cite{pan82} by introducing implicit canceling. His algorithms are applicable to small input sizes with the best algorithm having a base case $n_0=44$. Four decades after they were published, they still outperform all other feasible algorithms starting at $n_0=30$. Johnson and McLoughlin \cite{johnson_mcloughlin_33_23} found further $\left<3,3,3;23\right>$-algorithms. Laderman et al. \cite{laderman1992practical} and Drevet et al. \cite{drevet2011optimization} found new algorithms based on Pan's trilinear aggregation. They outperform Pan's algorithm for base cases $n_0 \le 28$, though none are asymptotically faster than Pan's $O(n^{2.773382})$ algorithm \cite{pan82}. Drevet et al. \cite{drevet2011optimization} also describe techniques for combining existing matrix multiplication algorithms into new ones.
Smirnov \cite{smirnov2013bilinear} discovered a  $\left<3,3,6;40\right>$-algorithm with complexity of $O\left(n^{2.77430}\right)$\footnote{Algorithms for rectangular matrix multiplication can be converted into square matrix multiplication algorithms using symmetrization \cite{hopcroft1973duality}. See Claim \ref{claim: symmetrization}. For ease of comparison we provide the exponent of the square variant.} by relaxing Brent's equations and then solving iteratively using the least squares method. This algorithm is asymptotically faster than Strassen's algorithm. Tichavsk\`{y} and Kov\'{a}\v{c} obtained a different $\left<3,3,6;40\right>$-algorithm \cite{tichavsky_kovac_336_40}. Ballard and Benson \cite{benson2015framework} obtained new algorithms using computer aided search similar to \cite{smirnov2013bilinear, johnson_mcloughlin_33_23}, though none are better than Strassen's algorithm. Heule et al. \cite{local_search, new_ways_to_multiply_33} used SAT-based methods and discovered thousands of new $\left<3,3,3;23\right>$-algorithms. Fawzi et al. \cite{google_algorithms} found new algorithms with small base cases, such as a $\left<3,4,5;47\right>$-algorithm and $\left<4,5,5;76\right>$-algorithm using the reinforcement learning model AlphaTensor. These algorithms have complexities of $O\left(n^{2.821073}\right)$ and $O\left(n^{2.821221}\right)$ respectively. Kauers and Moosbauer \cite{flipgraph} obtained new algorithms, including a $\left<4,4,5;62\right>$-algorithm with complexity of $O\left(n^{2.825498}\right)$, using the flip graph method. This method was further used to obtain new algorithms \cite{moosbauer2025flip, arai2024adaptive, kauers_short_communication}. Kaporin \cite{kaporin4_4_4_48} obtained a $\left<4,4,4;48\right>$-algorithm with complexity of $O\left(n^{2.792482}\right)$ by solving a nonlinear least squares problem. This exponent is better than Strassen's algorithm. This result was matched by Novikov et al. \cite{alphaevolve}, who obtained a different $\left<4,4,4;48\right>$-algorithm over the complex numbers. Dumas et al. \cite{dumas2025noncommutative} used the complex $\left<4,4,4;48\right>$-algorithm by Novikov et al. \cite{alphaevolve} to obtain a $\left<4,4,4;48\right>$-algorithm with non-complex coefficients.

To-date, only a few feasible algorithms have better exponent than Strassen's $O\left(n^{\log_2 7}\right)$ algorithm. These include Pan's algorithms \cite{pan78, pan80, pan82}, Laderman et al.'s algorithms \cite{laderman1992practical}, Drevet et al.'s algorithms \cite{drevet2011optimization}, the $\left<3,3,6;40\right>$-algorithms by Smirnov \cite{smirnov2013bilinear} and by Tichavsk\`{y} and Kov\'{a}\v{c} \cite{tichavsky_kovac_336_40}, and the $\left<4,4,4;48\right>$-algorithm by Dumas et al. \cite{dumas2025noncommutative}. Some of the feasible algorithms are applicable only to certain fields, such as a $\left<4,4,4;47\right>$-algorithm \cite{google_algorithms} and a $\left<4,4,5;60\right>$-algorithm \cite{flipgraph} that work over $\mathbb{Z}_2$.


\paragraph{Trilinear Aggregation}
Pan \cite{pan78, pan80} introduces the techniques of trilinear aggregation, uniting, and canceling (see Section \ref{sec: preliminaries}). These techniques result in matrix multiplication algorithms with better asymptotic complexity than Strassen's algorithm for many even base cases starting at $n_0\geq 20$. He later introduced implicit canceling \cite{pan82}, which further improved the complexity of the algorithms. Using these techniques he obtained a $\left<44,44,44;36133\right>$-algorithm. This algorithm has aymptotic complexity of $O(n^{2.773372})$ and is thus the fastest feasible algorithm to-date. Laderman et al. \cite{laderman1992practical} (see correction in \cite{respondek2024correction}) present new schemes for matrix multiplication based on \cite{pan80, pan82}. They describe an improved cancellation technique compared to \cite{pan80}, though they do not obtain algorithms that are asymptotically faster than \cite{pan82}. Drevet et al. \cite{drevet2011optimization} improve trilinear aggregation for smaller base cases. They also describe a variation of their algorithms for odd dimension base cases. Particularly, their algorithms have the best known exponents for base cases $22 \leq n_0 \leq 28$. Thus, for small base cases starting at $n_0 = 22$, the algorithms with the best exponents are trilinear aggregation based algorithms, using the methods of \cite{pan82} or \cite{drevet2011optimization}. For $n_0<22$, the best exponent has been obtained using other methods, sometimes in combination with trilinear aggregation (cf. \cite{strassen1969, laderman333-23, moosbauer2025flip, kaporin4_4_4_48, smirnov2013bilinear, drevet2011optimization, laderman1992practical, hopcroft1971minimizing}).

\paragraph{Leading Coefficient}
Of the fast matrix multiplication algorithms with small base cases that are known today, only few are used in practice. One reason is that many have large constants hidden in the $O$-notation. For example, Pan's $O\left(n^{2.773372}\right)$ algorithm \cite{pan82} has the best exponent. However, its leading coefficient is $737.4$. Thus, it can outperform Strassen's algorithm only for enormous input sizes $n>10^{60}$ \cite{hs23}. Even when the coefficients are not too large, the smaller they are the more practical the algorithm becomes. Hence, many studies deal with reducing these coefficients (cf. \cite{winogradadditivecomplexity, probert1976additive, bshouty1995additive, bodrato2010strassen, cenk2017arithmetic, ks, bs19, nathan2020sparsifying, martensson_number_of_the_beast, hs23}).

Winograd \cite{winogradadditivecomplexity} reduces the leading coefficient of Strassen's algorithm from $7$ to $6$. Probert \cite{probert1976additive} and Bshouty \cite{bshouty1995additive} show this to be optimal under the implicit assumptions that the algorithm has a uniform recursive structure, and that the matrices are given in the standard basis. Bodrato \cite{bodrato2010strassen} reduces the leading coefficient of Strassen's algorithm to $5$ when the multiplication is done as part of repeated squaring or chain multiplication. Cenk and Hasan \cite{cenk2017arithmetic} reduce the leading coefficient to $5$ while increasing communication costs and memory footprint by creating a non-uniform recursive structure. Karstadt and Schwartz \cite{ks} reduce the leading coefficient to $5$ at the cost of a low overhead, while maintaining low communication cost and memory footprint. To this end, they apply fast transformations to the matrices and perform the multiplication in an alternative basis. Beniamini and Schwartz \cite{bs19} generalize the alternative basis method into the sparse decomposition method, further reducing the leading coefficient, in some cases down to $2$, while increasing communication costs. Beniamini et al. \cite{nathan2020sparsifying} improve the leading coefficient of several algorithms by using a search heuristic for sparse decomposition. Hadas and Schwartz \cite{hs23} use the sparse decomposition technique to reduce the leading coefficient of Pan's $O\left(n^{2.773372}\right)$ algorithm \cite{pan82} from $737.4$ to $8.08$ and of the $O\left(n^{2.780142}\right)$ algorithm \cite{pan80} from $23.18$ to $2$, while increasing communication costs. M{\aa}rtensson and Wagner \cite{martensson_number_of_the_beast} reduce the leading coefficient of many algorithms by reusing computations without introducing an overhead, and improve the leading coefficients of \cite{williams2023newboundsmatrixmultiplication, laderman333-23, smirnov2013bilinear}.

\subsection{Our Contribution}
\begin{table}[t]
\centering
\begin{tabular}{|c||P{25mm}|P{30mm}||c|c|}
    \hline
     \multirow{2}{*}{Base case $n_0$} & \multicolumn{2}{c||}{Tensor Rank Bound $M(n_0)$} & \multicolumn{2}{c|}{Exponent $\omega_0$} \\
     \cline{2-5}
     & Previous & Here & Previous & Here \\
     \hline\hline
\hline
$28$ & $10556$ \cite{drevet2011optimization} & $10550$ & $2.780277$ & $2.780106$\\
\hline
$30$ & $12704$ \cite{pan82} & $12688$ & $2.778337$ & $2.777967$\\
\hline
$32$ & $15113$ \cite{pan82} & $15096$ & $2.776701$ & $2.776376$\\
\hline
$34$ & $17808$ \cite{pan82} & $17790$ & $2.775498$ & $2.775211$\\
\hline
$36$ & $20805$ \cite{pan82} & $20786$ & $2.774633$ & $2.774378$\\
\hline
$38$ & $24120$ \cite{pan82} & $24100$ & $2.774037$ & $2.773809$\\
\hline
$40$ & $27769$ \cite{pan82} & $27748$ & $2.773655$ & $2.77345$\\
\hline
$42$ & $31768$ \cite{pan82} & $31746$ & $2.773444$ & $2.773258$\\
\Xhline{2\arrayrulewidth}
$\mathbf{44}$ & $\mathbf{36133}$ \cite{pan82} & $\mathbf{36110}$ & $\mathbf{2.773372}$ & $\mathbf{2.773203}$\\
\Xhline{2\arrayrulewidth}
$46$ & $40880$ \cite{pan82} & $40856$ & $2.773412$ & $2.773258$\\
\hline
$48$ & $46025$ \cite{pan82} & $46000$ & $2.773543$ & $2.773403$\\
\hline
$50$ & $51584$ \cite{pan82} & $51558$ & $2.773749$ & $2.77362$\\
\hline
$60$ & $86149$ \cite{pan82} & $86118$ & $2.775496$ & $2.775408$\\
\hline
\end{tabular}

\caption{Comparing number of multiplications (bound on the tensor rank, see Claim \ref{claim: tensor rank and mm}) and resulting exponent $\omega_0$ for various base case dimensions $n_0$. Using the algorithms with base case $n_0$ recursively results in a matrix multiplication algorithm with complexity of $O\left(n^{\omega_0}\right)$ where $\omega_0 = \log_{n_0}M(n_0)$.
The exponent is minimal for base case of dimension $n_0=44$. This is the best exponent of all algorithms applicable to input matrices with feasible dimensions.}
\label{tab: comparing exponents of small algorithm}

\end{table}

\begin{table}[t]
\centering
\begin{tabular}{|l@{$\null=\null$}l||c|c|c||c|c|c|}
    \hline
     \multicolumn{2}{|c||}{\multirow{3}{*}{Base case $n_0$}} & \multicolumn{3}{c||}{Tensor Rank Bound $M(n)$} & \multicolumn{3}{c|}{Exponent $\omega_0$} \\
     \cline{3-8}
      \multicolumn{2}{|c||}{} & \multicolumn{2}{c|}{Two recursive steps} & \multirow{2}{*}{$\sztwo$} & \multicolumn{2}{c|}{Two recursive steps} & \multirow{2}{*}{$\sztwo$} \\
      \cline{3-4}\cline{6-7}
      \multicolumn{2}{|c||}{} & \cite{pan82} &  $\szone$ && \cite{pan82} & $\szone$ & \\
     \hline\hline
$28^{2}$ & $784$ &$111619225$ & $111302500$ & $111258400$ & $2.780533$ & $2.780106$ & $2.780047$\\
\hline 
$30^{2}$ & $900$ &$161391616$ & $160985344$ & $160927744$ & $2.778337$ & $2.777967$ & $2.777914$\\
\hline 
$32^{2}$ & $1024$ &$228402769$ & $227889216$ & $227815232$ & $2.776701$ & $2.776376$ & $2.776329$\\
\hline 
$34^{2}$ & $1156$ &$317124864$ & $316484100$ & $316390464$ & $2.775498$ & $2.775211$ & $2.775169$\\
\hline 
$36^{2}$ & $1296$ &$432848025$ & $432057796$ & $431940832$ & $2.774633$ & $2.774378$ & $2.774340$\\
\hline 
$38^{2}$ & $1444$ &$581774400$ & $580810000$ & $580665600$ & $2.774037$ & $2.773809$ & $2.773775$\\
\hline 
$40^{2}$ & $1600$ &$771117361$ & $769951504$ & $769775104$ & $2.773655$ & $2.773450$ & $2.773418$\\
\hline 
$42^{2}$ & $1764$ &$1009205824$ & $1007808516$ & $1007595072$ & $2.773444$ & $2.773258$ & $2.773230$\\
\Xhline{2\arrayrulewidth}$\mathbf{44^{2}}$ & $\mathbf{1936}$ &$\mathbf{1305593689}$ & $\mathbf{1303932100}$ & $\mathbf{1303676064}$ & $\mathbf{2.773372}$ & $\mathbf{2.773203}$ & $\mathbf{2.773177}$\\
\Xhline{2\arrayrulewidth}
$46^{2}$ & $2116$ &$1671174400$ & $1669212736$ & $1668908032$ & $2.773412$ & $2.773258$ & $2.773234$\\
\hline 
$48^{2}$ & $2304$ &$2118300625$ & $2116000000$ & $2115640000$ & $2.773543$ & $2.773403$ & $2.773381$\\
\hline 
$50^{2}$ & $2500$ &$2660909056$ & $2658227364$ & $2657804864$ & $2.773749$ & $2.773620$ & $2.773600$\\
\hline 
$60^{2}$ & $3600$ &$7421650201$ & $7416309924$ & $7415445024$ & $2.775496$ & $2.775408$ & $2.775394$\\
\hline 
\end{tabular}

\caption{Comparing number of multiplications (bound on tensor rank, see Claim \ref{claim: tensor rank and mm}) and resulting exponent $\omega_0$ for various base case dimensions $n_0$. The results of \cite{pan82}, and $ALG_1$ are based on two recursive steps of the base algorithms. The complexity of the algorithms is $O\left(n^{\omega_0}\right)$ where $\omega_0 = \log_{n_0}M(n_0)$.
The exponent of $\sztwo$ is minimal for base case of dimension $n_0=44^2$.}

\label{tab: comparing exponents of all algorithms}

\end{table}
We obtain two new families of fast matrix multiplication algorithms that are applicable to feasible sizes (see Tables \ref{tab: comparing exponents of small algorithm} and \ref{tab: comparing exponents of all algorithms}). The first family of algorithms (Table \ref{tab: comparing exponents of small algorithm}) contains a $\left<44,44,44;36110\right>$-algorithm with complexity of $O\left(n^{2.773203}\right)$. This implies that $\rk\left(\left<44,44,44\right>\right)\leq 36110$ (see Claim \ref{claim: tensor rank and mm}), where $\left<44,44,44\right>$ is the tensor representing $44\times 44$ matrix multiplication.\footnote{For a formal definition of this tensor, refer to Definition \ref{def: mm tensor}.} These algorithms have exponents better than Pan's \cite{pan82} and are applicable to the same input dimensions. They improve the exponents for many feasible even base cases of size $n_0 \geq 28$. The second family of algorithms further reduces the exponents, at the cost of squaring the base case size (see Table \ref{tab: comparing exponents of all algorithms}). The best algorithm of this family is a $\left<1936, 1936, 1936; 1303676064\right>$-algorithm with complexity of $O\left(n^{2.7731768}\right)$. 

Both new families of algorithms are based on the trilinear aggregation method \cite{pan1972schemes, pan80, pan82} and the implicit canceling method \cite{pan82}. In both families, we find sets of rows which are isomorphic to smaller matrix multiplication algorithms. We replace these rows with other algorithms to obtain improved algorithms. To obtain the first family of algorithms, we utilize algorithms from the de Groote \cite{DEGROOTE1978PT2} equivalence class of Strassen's $\left<2,2,2;7\right>$-algorithm to create duplicate rows in the encoding and decoding matrices. We then unite these rows to reduce multiplications. To obtain the second family of algorithms we replace sub tensors that are isomorphic to $\left<4,4,4;49\right>$-algorithms with a $\left<4,4,4;48\right>$-algorithm \cite{kaporin4_4_4_48, alphaevolve, dumas2025noncommutative}.

Combined with leading coefficient reduction from $736.3$ to $8.17$, this work is an important step towards accelerating matrix multiplication in practice.

\subsection{Organization}
Section \ref{sec: preliminaries} provides preliminaries regarding matrix multiplication, tensor operations, and the methods of trilinear aggregations and implicit canceling. Section \ref{sec: improving pans algorithms} describes our first family of algorithms. Section \ref{sec: using kaporins algorithm} presents the second family of algorithms. In Section \ref{sec: discussion} we compare our algorithms with existing ones, and discuss open problems.
The encoding and decoding matrices (see Definition \ref{def: encoding decoding matrices}) are provided as supplemental material to this work \footnote{\supplementalurl}.

\section{Preliminaries}
\label{sec: preliminaries}
\subsection{Fast Matrix Multiplication}
\begin{definition}
    Let $r\in\mathbb{N}$. We define $\left[r\right] = \left\{0,\dots,r-1\right\}$.
\end{definition}

\begin{definition}
    Let $U\in\mathbb{F}^{m\times n}$. We denote the $i$-th row of $U$ by $[U]_{i}$.
\end{definition}

\begin{definition}[Notation 2.1 in \cite{bs19}]
    \label{def: vectorization}
    Let $A\in\mathbb{F}^{m\times n}$ be a matrix, where $m=m_0^{\ell}$ and $n=n_0^{\ell}$.
    Let $A_{i,j}$ be the $(i,j)$-th block of size $\frac{m}{m_0}\times\frac{n}{n_0}$. The \textit{vectorized form} of $A$ is recursively defined as $$\vec{A}=\left(\vec{A}_{1,1}\dots\vec{A}_{1,n_0}\dots\vec{A}_{m_0,1}\dots\vec{A}_{m_0,n_0}\right)^T$$
\end{definition}

\begin{claim} [Fact 2.7 in \cite{bs19}]
\label{claim: encoding and decoding matrices}
Let $n,m,k\in\mathbb{N}$. Let $f(x,y):(\mathbb{F}^{n}\times \mathbb{F}^{m})\to\mathbb{F}^{k}$ be a bilinear algorithm that performs $t$ multiplications. There exist three matrices $U\in\mathbb{F}^{t\times n}, V\in\mathbb{F}^{t\times m},W\in\mathbb{F}^{t\times k}$ such that for all $x\in\mathbb{F}^{n}, y\in\mathbb{F}^{m}$,
$f(x,y)=W^{T}((U\cdot \vec{x}) \odot (V \cdot \vec{y}))$ where $\odot$ is the Hadamard (element-wise) product.
\end{claim}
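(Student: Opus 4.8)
The plan is to unwind the very definition of a bilinear algorithm and simply read off the three matrices from its coefficients. Recall that a bilinear algorithm computing $f$ with $t$ multiplications is specified by linear forms $u_0,\dots,u_{t-1}$ on $\mathbb{F}^{n}$ and $v_0,\dots,v_{t-1}$ on $\mathbb{F}^{m}$, together with scalars $w_{j,\ell}\in\mathbb{F}$ for $j\in[k]$ and $\ell\in[t]$, such that the only $t$ nonscalar multiplications performed are the products $p_\ell := u_\ell(x)\cdot v_\ell(y)$, and each output coordinate is the fixed linear combination $f(x,y)_j=\sum_{\ell\in[t]} w_{j,\ell}\,p_\ell$. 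If one prefers to start from a more permissive notion of ``bilinear algorithm'' (a straight-line program whose every multiplication of two non-constant quantities is of the form ``a form in $x$ times a form in $y$''), then I would first normalize it to the coefficient description above by distributing the subsequent additions and scalar multiplications over the $t$ products; alternatively, one simply takes the coefficient description as the definition, as is done in \cite{bs19, brent_equations}.

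Given this data, I would set $U\in\mathbb{F}^{t\times n}$ to be the matrix whose $\ell$-th row $[U]_\ell$ is the coefficient (row) vector of the linear form $u_\ell$, so that $u_\ell(x)=[U]_\ell\cdot x$; likewise let $V\in\mathbb{F}^{t\times m}$ have $\ell$-th row equal to the coefficient vector of $v_\ell$; and let $W\in\mathbb{F}^{t\times k}$ have $(\ell,j)$-entry $w_{j,\ell}$, i.e.\ its $\ell$-th row records the contribution of the product $p_\ell$ to each of the $k$ outputs. Here $\vec{x}$ and $\vec{y}$ denote $x$ and $y$ regarded as column vectors (they are already in vectorized form in the sense of Definition~\ref{def: vectorization}), so $U\cdot\vec{x}$ and $V\cdot\vec{y}$ are well defined vectors of length $t$.

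It then remains to verify the claimed identity coordinatewise, which is purely mechanical. The $\ell$-th entry of $U\cdot\vec{x}$ is $u_\ell(x)$ and the $\ell$-th entry of $V\cdot\vec{y}$ is $v_\ell(y)$, so the $\ell$-th entry of the Hadamard product $(U\vec{x})\odot(V\vec{y})$ is exactly $p_\ell=u_\ell(x)v_\ell(y)$. Multiplying on the left by $W^{T}\in\mathbb{F}^{k\times t}$, the $j$-th entry of $W^{T}\big((U\vec{x})\odot(V\vec{y})\big)$ equals $\sum_{\ell\in[t]}(W^{T})_{j,\ell}\,p_\ell=\sum_{\ell\in[t]}w_{j,\ell}\,p_\ell=f(x,y)_j$, which is precisely the output of the algorithm. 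Since $x\in\mathbb{F}^{n}$ and $y\in\mathbb{F}^{m}$ were arbitrary, this proves $f(x,y)=W^{T}\big((U\cdot\vec{x})\odot(V\cdot\vec{y})\big)$ for all $x,y$.

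The only step with genuine content — and hence the one to state carefully — is the first: making precise that a bilinear algorithm with $t$ multiplications really does compute each output as a fixed $\mathbb{F}$-linear combination of $t$ products, each of which is a linear form in $x$ times a linear form in $y$ (no higher-degree terms, no division, no reuse of partial products in a nonlinear way). Once that normal form is established, matching indices to build $U,V,W$ and checking the identity presents no further obstacle.
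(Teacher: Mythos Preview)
Your argument is correct and is precisely the standard unfolding-of-definitions proof one expects here. Note that the paper does not actually prove this claim at all: it is quoted as Fact~2.7 from \cite{bs19} and stated without proof, so there is no ``paper's own proof'' to compare against beyond observing that your argument is the canonical one.
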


\begin{definition}
    \label{def: encoding decoding matrices}
    A recursive fast matrix multiplication algorithm computing the matrix product $C=AB$ is represented by a triplet of matrices $\left< U,V,W\right>$ such that $\vec{C^T}=W^T\left((U\cdot\vec{A})\odot(V\cdot\vec{B})\right)$ for all matrices $A\in\mathbb{F}^{n_0\times m_0}, B\in\mathbb{F}^{m_0\times p_0}$. $U$ and $V$ are the \emph{encoding matrices} while $W$ is the \emph{decoding matrix}.
\end{definition}

\begin{remark}
    Notice that the algorithm computes $C^T$ instead of $C$. This allows an elegant relation between the bilinear and trilinear forms of the algorithm, as described in Fact \ref{fact: trilinear form of mm algorithm}.
\end{remark}

\begin{definition}
        \label{def: n0, m0, r0; t algorithm}
    An $\left<n_0, m_0, p_0; t\right>$-algorithm is an algorithm for multiplying an $n_0\times m_0$ matrix by an $m_0 \times p_0$ matrix using $t$ multiplications.
\end{definition}

\begin{definition}
        Let $a, b, c\in\mathbb{F}^t$. The \emph{triple inner product} of $a, b, c$ is  defined as
    $\left<a,b,c\right> = \sum_{i\in[t]}a_i b_i c_i$.
\end{definition}

\begin{claim}[\cite{brent_equations}]
    Let $U\in\mathbb{F}^{t\times m_0\cdot n_0}, V\in\mathbb{F}^{t\times n_0\cdot p_0}, W\in\mathbb{F}^{t\times p_0 \cdot m_0}$. 
    Then $\left<U,V,W\right>$ are the encoding and decoding matrices of an $\left<m_0, n_0, p_0;t\right>$-algorithm if and only if for all $i_1,i_2\in[n_0],j_1,j_2\in[p_0], k_1,k_2\in[m_0]$,
    $$\left<u_{n_0\cdot k_2 + i_1}, v_{p_0\cdot i_2 + j_1}, w_{m_0 \cdot j_2 + k_1}\right> = \delta_{i_1,i_2}\delta_{j_1,j_2}\delta_{k_1,k_2}$$
    where $u_r$ is the $r$-th column of $U$, and similarly for $v_r, w_r$.
\end{claim}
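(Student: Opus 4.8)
The plan is to convert the defining identity of Definition~\ref{def: encoding decoding matrices} into an identity of bilinear polynomials in the entries of $A$ and $B$, regarded as independent indeterminates, and then read off coefficients. By Definition~\ref{def: encoding decoding matrices} (with its dimension labels instantiated to match the convention of Definition~\ref{def: n0, m0, r0; t algorithm}, i.e.\ an $\langle m_0,n_0,p_0;t\rangle$-algorithm multiplies an $m_0\times n_0$ matrix by an $n_0\times p_0$ matrix), the triple $\langle U,V,W\rangle$ is the pair of encoding and decoding matrices of such an algorithm if and only if $\vec{C^T}=W^T\big((U\vec{A})\odot(V\vec{B})\big)$ holds identically for all $A\in\mathbb{F}^{m_0\times n_0}$, $B\in\mathbb{F}^{n_0\times p_0}$ with $C=AB$. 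Both sides are bilinear in $(A,B)$, so they agree for all inputs exactly when they agree as polynomials, i.e.\ when the coefficient of every monomial $\vec{A}_\alpha\vec{B}_\beta$ matches (these monomials are distinct, hence linearly independent over $\mathbb{F}$).

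First I would expand the right-hand side. Writing $u_\alpha,v_\beta,w_\gamma$ for the columns of $U,V,W$, the $r$-th coordinate of $U\vec{A}$ is $\sum_\alpha U_{r,\alpha}\vec{A}_\alpha$ and similarly for $V\vec{B}$; taking the Hadamard product and applying $W^T$, the $\gamma$-th output coordinate is $\sum_r W_{r,\gamma}\big(\sum_\alpha U_{r,\alpha}\vec{A}_\alpha\big)\big(\sum_\beta V_{r,\beta}\vec{B}_\beta\big)=\sum_{\alpha,\beta}\big(\sum_r U_{r,\alpha}V_{r,\beta}W_{r,\gamma}\big)\vec{A}_\alpha\vec{B}_\beta=\sum_{\alpha,\beta}\langle u_\alpha,v_\beta,w_\gamma\rangle\,\vec{A}_\alpha\vec{B}_\beta$, by the definition of the triple inner product. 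Hence the coefficient of $\vec{A}_\alpha\vec{B}_\beta$ in the $\gamma$-th output coordinate is precisely $\langle u_\alpha,v_\beta,w_\gamma\rangle$.

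Next I would expand the left-hand side using the block vectorization of Definition~\ref{def: vectorization} (at recursion depth one the blocks are single entries, so this is just row-major ordering). The coordinate of $\vec{C^T}$ with linear index $m_0 j_2+k_1$ equals $(C^T)_{j_2,k_1}=C_{k_1,j_2}=\sum_{i\in[n_0]}A_{k_1,i}B_{i,j_2}$, where $A_{k_1,i}$ occupies column index $n_0 k_1+i$ of $U$ and $B_{i,j_2}$ occupies column index $p_0 i+j_2$ of $V$. Thus, for $\alpha=n_0 k_2+i_1$ and $\beta=p_0 i_2+j_1$, the coefficient of $\vec{A}_\alpha\vec{B}_\beta$ in this coordinate is $1$ when $k_2=k_1$, $i_1=i_2$, $j_1=j_2$ and $0$ otherwise, i.e.\ it equals $\delta_{i_1,i_2}\delta_{j_1,j_2}\delta_{k_1,k_2}$. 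Since every pair $(\alpha,\beta)$ of column indices arises from a unique tuple $(i_1,i_2,j_1,j_2,k_1,k_2)$, equating the two coefficient formulas over all such tuples is equivalent to the polynomial identity, and it yields exactly the stated system $\langle u_{n_0 k_2+i_1}, v_{p_0 i_2+j_1}, w_{m_0 j_2+k_1}\rangle=\delta_{i_1,i_2}\delta_{j_1,j_2}\delta_{k_1,k_2}$.

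The only real obstacle is keeping the three indexing conventions consistent — the recursive block vectorization of Definition~\ref{def: vectorization}, the transpose hidden in $\vec{C^T}$, and the linear column indices $n_0 k_2+i_1$, $p_0 i_2+j_1$, $m_0 j_2+k_1$ in the statement — and checking that under this dictionary the monomial $A_{k_1,i}B_{i,j_2}$ occurring in $C_{k_1,j_2}$ maps to the column triple $(n_0 k_1+i,\;p_0 i+j_2,\;m_0 j_2+k_1)$. Once that bookkeeping is pinned down, the equivalence is immediate from coefficient matching, with no further computation.
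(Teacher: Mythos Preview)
The paper does not prove this claim; it merely states it with a citation to Brent. Your argument is the standard coefficient-matching derivation of Brent's equations and is correct: expanding both sides of $\vec{C^T}=W^T\big((U\vec{A})\odot(V\vec{B})\big)$ as bilinear expressions in the entries of $A$ and $B$, and then identifying coefficients, yields exactly the stated triple-inner-product conditions. The one step that deserves a word of justification over an arbitrary field $\mathbb{F}$ is the claim that a bilinear form $\sum_{\alpha,\beta}c_{\alpha\beta}\vec{A}_\alpha\vec{B}_\beta$ vanishing for all inputs forces every $c_{\alpha\beta}=0$; this follows by evaluating at $\vec{A}=e_\alpha$, $\vec{B}=e_\beta$, so no issue arises even for finite $\mathbb{F}$. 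Your index bookkeeping (row-major vectorization at depth one, the transpose in $\vec{C^T}$, and the resulting column indices $n_0k+i$, $p_0i+j$, $m_0j+k$) is consistent with the paper's conventions and lands on the stated formula.
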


\begin{fact}
    Let $A\in\mathbb{F}^{m_0\times n_0}, B\in\mathbb{F}^{n_0 \times p_0},C\in\mathbb{F}^{p_0\times m_0}$. Then
    $\tr(ABC)=\sum_{i,j,k} A_{ij}B_{jk}C_{ki}$.
\end{fact}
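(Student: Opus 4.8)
The plan is to expand the product $ABC$ entrywise from the definition of matrix multiplication and then read off the diagonal. Since $A \in \mathbb{F}^{m_0 \times n_0}$, $B \in \mathbb{F}^{n_0 \times p_0}$, and $C \in \mathbb{F}^{p_0 \times m_0}$, the product $AB$ lies in $\mathbb{F}^{m_0 \times p_0}$ and $ABC$ lies in $\mathbb{F}^{m_0 \times m_0}$, so its trace is well defined. First I would compute, for $i \in [m_0]$ and $k \in [p_0]$, the entry $(AB)_{ik} = \sum_{j \in [n_0]} A_{ij} B_{jk}$.

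Next, multiplying on the right by $C$ and using associativity of matrix multiplication (so the grouping $((AB)C)$ versus $(A(BC))$ is immaterial), I would obtain for $i, \ell \in [m_0]$ that $(ABC)_{i\ell} = \sum_{k \in [p_0]} (AB)_{ik} C_{k\ell} = \sum_{j \in [n_0]} \sum_{k \in [p_0]} A_{ij} B_{jk} C_{k\ell}$.

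Finally I would take the trace by summing the diagonal entries, i.e.\ specializing $\ell = i$ and summing over $i$: $\tr(ABC) = \sum_{i \in [m_0]} (ABC)_{ii} = \sum_{i,j,k} A_{ij} B_{jk} C_{ki}$, which is exactly the claimed identity. There is no genuine obstacle here; the only points requiring care are keeping the three index ranges straight ($i \in [m_0]$, $j \in [n_0]$, $k \in [p_0]$) and checking that $ABC$ is square so that the trace is meaningful, both of which are immediate from the stated dimensions. An alternative route is to invoke cyclic invariance of the trace together with the rank-one/scalar case, but the direct expansion above is the shortest and most self-contained.
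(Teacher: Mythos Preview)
Your proof is correct; the paper states this as a Fact without giving any proof, and your direct entrywise expansion of $ABC$ followed by summing the diagonal is the standard and fully adequate argument.
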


\begin{fact} [see \cite{pan78}]
\label{fact: trilinear form of mm algorithm}
Let $U\in\mathbb{F}^{t\times m_0\cdot n_0}, V\in\mathbb{F}^{t\times n_0\cdot p_0}, W\in\mathbb{F}^{t\times p_0 \cdot m_0}$. Then $\left<U,V,W\right>$ is an $\left<m_0,n_0,p_0;t\right>$-algorithm if and only if for all $A\in\mathbb{F}^{m_0\times n_0}, B\in \mathbb{F}^{n_0 \times p_0}, C\in\mathbb{F}^{p_0 \times m_0}$, $$\tr(ABC)=\left<U\cdot\vec{A}, V\cdot\vec{B}, W\cdot\vec{C}\right>$$
\end{fact}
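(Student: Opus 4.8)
The plan is to translate both sides of the claimed equivalence into explicit sums of triple products of matrix entries, and then use the elementary fact that two vectors of $\mathbb{F}^{p_0 m_0}$ coincide precisely when they have the same Euclidean pairing against every vector — the ``test vector'' here being $\vec{C}$. First I would restate what the left-hand side of the \emph{iff} means: by Definition~\ref{def: encoding decoding matrices} (with the dimension labels renamed to match this Fact), $\left<U,V,W\right>$ being an $\left<m_0,n_0,p_0;t\right>$-algorithm means exactly that $W^{T}\big((U\vec{A})\odot(V\vec{B})\big)=\vec{(AB)^{T}}$ for all $A\in\mathbb{F}^{m_0\times n_0}$ and $B\in\mathbb{F}^{n_0\times p_0}$. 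So the whole task reduces to showing that this vector identity (quantified over $A,B$) is equivalent to the scalar identity $\tr(ABC)=\left<U\vec{A},V\vec{B},W\vec{C}\right>$ (quantified over $A,B,C$).

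The core of the proof is two coordinate computations. On the one hand, for all $A,B,C$,
\[
\left<U\vec{A},\,V\vec{B},\,W\vec{C}\right>
=\sum_{i\in[t]}(U\vec{A})_i\,(V\vec{B})_i\,(W\vec{C})_i
=\big((U\vec{A})\odot(V\vec{B})\big)^{T}(W\vec{C})
=\Big(W^{T}\big((U\vec{A})\odot(V\vec{B})\big)\Big)^{T}\vec{C},
\]
where the middle step is the definition of the Hadamard product and the last step moves $W$ across the pairing via $(W^{T}\xi)^{T}=\xi^{T}W$ with $\xi=(U\vec{A})\odot(V\vec{B})$. On the other hand, by the Fact immediately preceding this one, $\tr(ABC)=\sum_{i,j,k}A_{ij}B_{jk}C_{ki}=\sum_{i,k}(AB)_{ik}C_{ki}$; and since on a single block the map $\vec{\cdot}$ of Definition~\ref{def: vectorization} is the row-major flattening, the coordinate of $\vec{(AB)^{T}}$ indexed by the pair (row $k\in[p_0]$, column $i\in[m_0]$) equals $[(AB)^{T}]_{k,i}=(AB)_{ik}$, while the corresponding coordinate of $\vec{C}$ equals $C_{ki}$. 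Summing over $(k,i)$ gives $\tr(ABC)=\big(\vec{(AB)^{T}}\big)^{T}\vec{C}$.

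Both directions then fall out of these two identities. If $\left<U,V,W\right>$ is an $\left<m_0,n_0,p_0;t\right>$-algorithm, I would substitute $W^{T}\big((U\vec{A})\odot(V\vec{B})\big)=\vec{(AB)^{T}}$ into the last term of the display and compare with the trace computation to get $\left<U\vec{A},V\vec{B},W\vec{C}\right>=\big(\vec{(AB)^{T}}\big)^{T}\vec{C}=\tr(ABC)$ for all $A,B,C$. Conversely, if $\tr(ABC)=\left<U\vec{A},V\vec{B},W\vec{C}\right>$ for all $A,B,C$, then for each fixed $A,B$ the vectors $\vec{(AB)^{T}}$ and $W^{T}\big((U\vec{A})\odot(V\vec{B})\big)$ have equal pairing with $\vec{C}$ for every $C\in\mathbb{F}^{p_0\times m_0}$; since $C\mapsto\vec{C}$ is a bijection onto $\mathbb{F}^{p_0 m_0}$, the two vectors are equal, which is exactly the correctness condition of Definition~\ref{def: encoding decoding matrices}.

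I expect the only genuinely delicate point to be the index bookkeeping in the second paragraph: one must line up the cyclic trace expansion $\sum_{i,j,k}A_{ij}B_{jk}C_{ki}$ with the row-major convention of Definition~\ref{def: vectorization} and, above all, with the transpose appearing in $\vec{C^{T}}$ in Definition~\ref{def: encoding decoding matrices}. That transpose is precisely what forces $C$ to enter $\tr(ABC)$ through the entries $C_{ki}$ (indices in cyclic order) rather than $C_{ik}$, and getting this pairing right is the crux; the remaining manipulations of Hadamard products and transposes are routine.
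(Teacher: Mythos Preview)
Your proof is correct. The paper itself does not prove this statement: it is stated as a \emph{Fact} with a citation to Pan~\cite{pan78} and no argument is given, so there is nothing in the paper to compare your approach against. Your derivation via the two identities
\[
\left<U\vec{A},V\vec{B},W\vec{C}\right>=\Big(W^{T}\big((U\vec{A})\odot(V\vec{B})\big)\Big)^{T}\vec{C}
\qquad\text{and}\qquad
\tr(ABC)=\big(\vec{(AB)^{T}}\big)^{T}\vec{C},
\]
together with nondegeneracy of the standard pairing on $\mathbb{F}^{p_0 m_0}$, is exactly the natural way to supply the omitted details and is fully compatible with the paper's Definition~\ref{def: encoding decoding matrices} (including the transpose convention you flagged).
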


\begin{remark}
    By Fact \ref{fact: trilinear form of mm algorithm}, an $\left<m_0, n_0, p_0; t\right>$-algorithm is equivalent to an algorithm computing the trace of a product of three matrices $\tr(ABC)$ using $t$ multiplications where $A\in\mathbb{F}^{m_0\times n_0},B\in\mathbb{F}^{n_0\times p_0}, C\in\mathbb{F}^{p_0 \times m_0}$.
\end{remark}

\begin{fact}
    \label{fact:trace is cyclic}
    Let $A\in\mathbb{F}^{n\times m}, B\in\mathbb{F}^{m\times n}$.
    Then $\tr(AB)=\tr(BA)$.
\end{fact}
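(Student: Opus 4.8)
The plan is to reduce both $\tr(AB)$ and $\tr(BA)$ to one and the same finite double sum by unfolding the definitions of the matrix product and of the trace. First I would use that $AB\in\mathbb{F}^{n\times n}$, so $\tr(AB)=\sum_{i\in[n]}(AB)_{ii}$, and then expand each diagonal entry by the definition of matrix multiplication, $(AB)_{ii}=\sum_{j\in[m]}A_{ij}B_{ji}$. Combining these gives $\tr(AB)=\sum_{i\in[n]}\sum_{j\in[m]}A_{ij}B_{ji}$.

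Next I would carry out the mirror-image computation for the other product: since $BA\in\mathbb{F}^{m\times m}$, we get $\tr(BA)=\sum_{j\in[m]}(BA)_{jj}=\sum_{j\in[m]}\sum_{i\in[n]}B_{ji}A_{ij}$. The last step is to match the two expressions: both are sums of the scalars $A_{ij}B_{ji}$ over the index set $[n]\times[m]$, so interchanging the order of summation in a finite sum (multiplication in $\mathbb{F}$ being commutative) yields $\sum_{i\in[n]}\sum_{j\in[m]}A_{ij}B_{ji}=\sum_{j\in[m]}\sum_{i\in[n]}B_{ji}A_{ij}$, hence $\tr(AB)=\tr(BA)$.

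I do not expect any genuine obstacle, as the claim is purely formal. The only subtlety worth noting is that $AB$ and $BA$ need not have the same size when $n\neq m$, so the two traces are a priori sums over index sets of different cardinalities; the argument works precisely because each side collapses to the common double sum over $[n]\times[m]$, with no appeal to $AB$ or $BA$ being square beyond what is needed to speak of their diagonal entries.
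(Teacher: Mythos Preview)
Your argument is correct and is the standard proof of this identity. The paper itself states this as a \emph{Fact} without proof, so there is nothing to compare against; your direct double-sum computation is exactly what one would expect.
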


\begin{corollary} [\cite{hopcroft1973duality}]
\label{cor: cyclicness of UVW}
    Let $\left<U,V,W\right>$ be an $\left<m_0,n_0,p_0;t\right>$-algorithm. Then the algorithm obtained by cyclic rotation $\left<V,W,U\right>$ is an $\left<n_0, p_0, m_0;t\right>$-algorithm.
\end{corollary}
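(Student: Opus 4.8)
The plan is to derive this directly from the trilinear characterization in Fact~\ref{fact: trilinear form of mm algorithm}, combined with the cyclicity of the trace (Fact~\ref{fact:trace is cyclic}) and the full symmetry of the triple inner product. First I would note that $\left<U,V,W\right>$ being an $\left<m_0,n_0,p_0;t\right>$-algorithm means, by Fact~\ref{fact: trilinear form of mm algorithm}, that $\tr(ABC)=\left<U\cdot\vec{A},\,V\cdot\vec{B},\,W\cdot\vec{C}\right>$ for every $A\in\mathbb{F}^{m_0\times n_0}$, $B\in\mathbb{F}^{n_0\times p_0}$, $C\in\mathbb{F}^{p_0\times m_0}$. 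The goal is to show the analogous identity for $\left<V,W,U\right>$ in the rotated dimensions, namely $\tr(A'B'C')=\left<V\cdot\vec{A'},\,W\cdot\vec{B'},\,U\cdot\vec{C'}\right>$ for all $A'\in\mathbb{F}^{n_0\times p_0}$, $B'\in\mathbb{F}^{p_0\times m_0}$, $C'\in\mathbb{F}^{m_0\times n_0}$; by Fact~\ref{fact: trilinear form of mm algorithm} applied with the triple $(n_0,p_0,m_0)$, this is exactly the claim.

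The key step is a change of variables. Given arbitrary $A'\in\mathbb{F}^{n_0\times p_0}$, $B'\in\mathbb{F}^{p_0\times m_0}$, $C'\in\mathbb{F}^{m_0\times n_0}$, set $A:=C'$, $B:=A'$, $C:=B'$. Then $A\in\mathbb{F}^{m_0\times n_0}$, $B\in\mathbb{F}^{n_0\times p_0}$, $C\in\mathbb{F}^{p_0\times m_0}$, so the hypothesis applies and gives $\tr(C'A'B')=\left<U\cdot\vec{C'},\,V\cdot\vec{A'},\,W\cdot\vec{B'}\right>$. On the left-hand side, Fact~\ref{fact:trace is cyclic} yields $\tr(C'A'B')=\tr(A'B'C')$ (one may also note the product $A'B'C'$ is square of size $n_0\times n_0$, so the trace is well defined and cyclic shifting is legitimate). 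On the right-hand side, the triple inner product $\left<a,b,c\right>=\sum_{i\in[t]}a_ib_ic_i$ is invariant under any permutation of its three arguments, so $\left<U\cdot\vec{C'},\,V\cdot\vec{A'},\,W\cdot\vec{B'}\right>=\left<V\cdot\vec{A'},\,W\cdot\vec{B'},\,U\cdot\vec{C'}\right>$. Combining these gives the desired identity.

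Since $A',B',C'$ were arbitrary of the stated shapes, Fact~\ref{fact: trilinear form of mm algorithm} then certifies that $\left<V,W,U\right>$ is an $\left<n_0,p_0,m_0;t\right>$-algorithm, completing the proof. There is no real obstacle here beyond bookkeeping: the only points requiring care are tracking which matrix lives in which space after the cyclic relabeling (so that the hypothesis of Fact~\ref{fact: trilinear form of mm algorithm} is literally applicable), and explicitly invoking the permutation-symmetry of the triple inner product rather than just its bilinearity. An alternative, essentially equivalent route would be to argue at the level of Brent's equations (Claim~\cite{brent_equations}), observing that the defining system for $\left<U,V,W\right>$ is carried to that for $\left<V,W,U\right>$ under simultaneously cyclically permuting the three index pairs $(i_1,i_2),(j_1,j_2),(k_1,k_2)$; but the trace/trilinear argument above is shorter and reuses facts already established in this section.
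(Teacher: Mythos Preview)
Your proof is correct and is exactly the argument the paper intends: the corollary is stated immediately after Fact~\ref{fact: trilinear form of mm algorithm} and Fact~\ref{fact:trace is cyclic} (with no explicit proof given, only the citation), precisely because it follows from combining the trilinear characterization with trace cyclicity and the symmetry of the triple inner product, just as you do.
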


\begin{claim}
    \label{claim: composition of algorithms}
    If there exist an $\left<m_0, n_0, p_0; t\right>$-algorithm and an $\left<m_0', n_0', p_0'; t'\right>$-algorithm, then there exists an $\left<m_0m_0',n_0n_0', p_0p_0', tt'\right>$-algorithm.
\end{claim}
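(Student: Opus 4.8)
The plan is to construct the composed algorithm as the Kronecker (tensor) product of the two given algorithms' encoding and decoding matrices. Let $\left<U,V,W\right>$ be an $\left<m_0,n_0,p_0;t\right>$-algorithm and $\left<U',V',W'\right>$ an $\left<m_0',n_0',p_0';t'\right>$-algorithm. I would set $\widetilde U = U\otimes U' \in \mathbb{F}^{tt'\times m_0 m_0' n_0 n_0'}$, and likewise $\widetilde V = V\otimes V' \in \mathbb{F}^{tt'\times n_0 n_0' p_0 p_0'}$ and $\widetilde W = W\otimes W' \in \mathbb{F}^{tt'\times p_0 p_0' m_0 m_0'}$, and claim $\left<\widetilde U,\widetilde V,\widetilde W\right>$ is an $\left<m_0 m_0', n_0 n_0', p_0 p_0'; tt'\right>$-algorithm. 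The multiplication count $tt'$ is immediate since these matrices have $tt'$ rows. The one algebraic fact I would record first is that the triple inner product is multiplicative under Kronecker products of vectors: for $a,b,c\in\mathbb{F}^{t}$ and $a',b',c'\in\mathbb{F}^{t'}$ one has $\left<a\otimes a', b\otimes b', c\otimes c'\right> = \left<a,b,c\right>\cdot\left<a',b',c'\right>$, because the entry of $a\otimes a'$ at position $(s,s')$ is $a_s a'_{s'}$, so $\sum_{s,s'} a_s a'_{s'} b_s b'_{s'} c_s c'_{s'}$ factors as a product of a sum over $s$ and a sum over $s'$.

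Next I would verify Brent's equations (the claim stated above) for $\left<\widetilde U,\widetilde V,\widetilde W\right>$. The columns of $\widetilde U$ are, up to the indexing convention, the vectors $u_r\otimes u'_{r'}$ where $u_r$ ranges over columns of $U$ and $u'_{r'}$ over columns of $U'$. The point to check is that the column of $\widetilde U$ playing the role of Brent's index $(n_0 n_0')\cdot K_2 + I_1$ with outer row index $K_2 = m_0' k_2 + k_2'$ and outer column index $I_1 = n_0' i_1 + i_1'$ — matching the block-then-recursive ordering of the vectorization in Definition \ref{def: vectorization} — is exactly $u_{n_0 k_2 + i_1}\otimes u'_{n_0' k_2' + i_1'}$, and similarly for $\widetilde V$ (indices over $n_0,p_0$) and $\widetilde W$ (indices over $p_0,m_0$). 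Granting this, Brent's equation for $\left<\widetilde U,\widetilde V,\widetilde W\right>$ becomes, by the multiplicativity above, the product of Brent's equation for $\left<U,V,W\right>$ and Brent's equation for $\left<U',V',W'\right>$, that is $(\delta_{i_1 i_2}\delta_{j_1 j_2}\delta_{k_1 k_2})\cdot(\delta_{i_1' i_2'}\delta_{j_1' j_2'}\delta_{k_1' k_2'})$, which is precisely the Kronecker delta required for the composed indices $I_1 = n_0' i_1 + i_1'$, $I_2 = n_0' i_2 + i_2'$, and so on.

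Alternatively, and perhaps more transparently, I could argue via the trilinear/trace form of Fact \ref{fact: trilinear form of mm algorithm}: view $A\in\mathbb{F}^{m_0 m_0'\times n_0 n_0'}$ as an $m_0\times n_0$ block matrix with $m_0'\times n_0'$ blocks, and $B$, $C$ likewise. Run the outer algorithm treating each of its $t$ scalar multiplications as a single multiplication over the (noncommutative) ring of matrices — this is legitimate because $\left<U,V,W\right>$ expresses $\tr(ABC)$ as a bilinear combination that never multiplies two of its $t$ products together — obtaining $t$ products of an $m_0'\times n_0'$ matrix by an $n_0'\times p_0'$ matrix; then compute each of these by the inner $\left<m_0',n_0',p_0';t'\right>$-algorithm, for a total of $tt'$ scalar multiplications. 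Reading off the encoding and decoding matrices of this two-level scheme recovers the Kronecker products $\widetilde U,\widetilde V,\widetilde W$ above, so the two arguments coincide.

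The main obstacle is purely the index bookkeeping: confirming that the row/column orderings forced by the vectorization convention of Definition \ref{def: vectorization} are compatible with the Kronecker-product ordering, so that ``column $n_0 k_2 + i_1$ of $U$ tensored with column $n_0' k_2' + i_1'$ of $U'$'' really is the column of $U\otimes U'$ indexed by the composed pair, and the analogous statements for $V$ and $W$. Once the indexing is pinned down, the remainder is the one-line multiplicativity of the triple inner product together with the trivial factoring of the product of six Kronecker deltas.
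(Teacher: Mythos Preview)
The paper does not actually prove this claim; it is stated in the preliminaries as a standard fact and then invoked in Appendix~\ref{app: composition of algorithms} without further justification. Your proof via Kronecker products of the encoding/decoding matrices, together with multiplicativity of the triple inner product, is correct and is precisely the classical argument one would expect here; the block-matrix interpretation you sketch as an alternative is also standard and equivalent. The indexing caveat you raise is real but routine: if the Kronecker ordering does not literally match the vectorization of Definition~\ref{def: vectorization}, the mismatch is repaired by a fixed permutation of columns, which does not affect whether $\langle\widetilde U,\widetilde V,\widetilde W\rangle$ computes the matrix multiplication tensor.
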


\begin{claim}[\cite{hopcroft1973duality}]
    \label{claim: symmetrization}
    If there exists an $\left<m_0, n_0, p_0; t\right>$-algorithm, then there exists an $\left<r_0, r_0, r_0;t^3\right>$-algorithm where $r_0=m_0n_0p_0$.
\end{claim}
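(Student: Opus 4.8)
The plan is to combine the cyclic-rotation fact (Corollary \ref{cor: cyclicness of UVW}) with the tensor composition fact (Claim \ref{claim: composition of algorithms}), so that essentially no new computation is needed beyond bookkeeping of the dimension parameters.

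First I would use Corollary \ref{cor: cyclicness of UVW} twice. Starting from the given $\left<m_0,n_0,p_0;t\right>$-algorithm $\left<U,V,W\right>$, one application yields the $\left<n_0,p_0,m_0;t\right>$-algorithm $\left<V,W,U\right>$, and a second application (now to $\left<V,W,U\right>$) yields the $\left<p_0,m_0,n_0;t\right>$-algorithm $\left<W,U,V\right>$. Thus from one $\left<m_0,n_0,p_0;t\right>$-algorithm we obtain, for free, algorithms realizing all three cyclic shifts of the dimension triple, each still using $t$ multiplications.

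Next I would apply Claim \ref{claim: composition of algorithms} twice to these three algorithms. Composing the $\left<m_0,n_0,p_0;t\right>$-algorithm with the $\left<n_0,p_0,m_0;t\right>$-algorithm gives an $\left<m_0 n_0,\, n_0 p_0,\, p_0 m_0;\, t^2\right>$-algorithm; composing the result with the $\left<p_0,m_0,n_0;t\right>$-algorithm gives an $\left<m_0 n_0 p_0,\, n_0 p_0 m_0,\, p_0 m_0 n_0;\, t^3\right>$-algorithm. Since multiplication in $\mathbb{N}$ is commutative, each of the three dimension entries equals $r_0 = m_0 n_0 p_0$, so this is an $\left<r_0,r_0,r_0;t^3\right>$-algorithm, as claimed.

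There is no real obstacle here: the only thing to be careful about is tracking the order of the dimension labels through the cyclic rotations so that the three algorithms being composed have matching inner dimensions in the sense required by Claim \ref{claim: composition of algorithms} (which is automatic, since composition of matrix multiplication tensors is just the tensor Kronecker product and imposes no compatibility constraint), and checking that the three products $m_0 n_0 p_0$, $n_0 p_0 m_0$, $p_0 m_0 n_0$ coincide. If one prefers, the same argument can be phrased directly in terms of tensors: $\left<r_0,r_0,r_0\right> \cong \left<m_0,n_0,p_0\right> \otimes \left<n_0,p_0,m_0\right> \otimes \left<p_0,m_0,n_0\right>$, and rank is submultiplicative under $\otimes$, giving $\rk(\left<r_0,r_0,r_0\right>) \le t^3$.
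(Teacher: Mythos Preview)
Your proof is correct and is exactly the standard argument for symmetrization. The paper does not supply its own proof of this claim (it simply cites \cite{hopcroft1973duality}), so there is nothing further to compare; your use of Corollary~\ref{cor: cyclicness of UVW} followed by two applications of Claim~\ref{claim: composition of algorithms} is the intended route.
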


\begin{definition}
    \label{def: mm tensor}
    The \emph{matrix multiplication tensor} $\left<m_0, n_0, p_0\right>$ is the tensor $T\in\mathbb{R}^{m_0n_0\times n_0p_0\times p_0m_0}$ such that $T_{r,s,t}=1$ if and only if $r=m_0\cdot k + i, s=n_0 \cdot i + j, t=p_0 \cdot j + k$ for some $i\in[m_0], j\in[n_0], k\in[p_0]$ and $0$ otherwise.
\end{definition}

\begin{claim}[\cite{Strassen1973}]
    \label{claim: tensor rank and mm}
    $\rk(\left<m_0,n_0,p_0\right>) \leq t$ if and only if there exists an $\left<m_0, n_0, p_0;t\right>$-algorithm.
\end{claim}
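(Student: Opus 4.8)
The plan is to identify $\rk\left(\left<m_0,n_0,p_0\right>\right)$ with the length of the shortest bilinear algorithm for the associated matrix product, using the trilinear form of the tensor as the bridge. Recall the definition of tensor rank: a tensor $T\in\mathbb{F}^{d_1\times d_2\times d_3}$ satisfies $\rk(T)\le t$ exactly when it is a sum of $t$ rank-one tensors, i.e.\ there exist vectors $a^{(\ell)}\in\mathbb{F}^{d_1}$, $b^{(\ell)}\in\mathbb{F}^{d_2}$, $c^{(\ell)}\in\mathbb{F}^{d_3}$ for $\ell\in[t]$ with $T_{\alpha,\beta,\gamma}=\sum_{\ell\in[t]}a^{(\ell)}_\alpha b^{(\ell)}_\beta c^{(\ell)}_\gamma$ for all coordinates $\alpha,\beta,\gamma$.

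The first step is to rephrase this as a statement about matrices. Given such a decomposition, stack the vectors into matrices $U\in\mathbb{F}^{t\times d_1}$, $V\in\mathbb{F}^{t\times d_2}$, $W\in\mathbb{F}^{t\times d_3}$ whose $\ell$-th rows are $a^{(\ell)},b^{(\ell)},c^{(\ell)}$ respectively. Expanding the triple inner product and collecting coefficients gives, for every $x\in\mathbb{F}^{d_1}$, $y\in\mathbb{F}^{d_2}$, $z\in\mathbb{F}^{d_3}$, $$\left<Ux,\,Vy,\,Wz\right>=\sum_{\alpha,\beta,\gamma}\Bigl(\sum_{\ell\in[t]}a^{(\ell)}_\alpha b^{(\ell)}_\beta c^{(\ell)}_\gamma\Bigr)x_\alpha y_\beta z_\gamma,$$ so $\left<Ux,Vy,Wz\right>=\sum_{\alpha,\beta,\gamma}T_{\alpha,\beta,\gamma}x_\alpha y_\beta z_\gamma$ for all $x,y,z$ precisely when $U,V,W$ arise this way from a rank-$(\le t)$ decomposition of $T$. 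Since passing from a list of $t$ rows to a matrix and back is a bijection, we conclude: $\rk(T)\le t$ if and only if there are matrices $U,V,W$ with $t$ rows whose associated trilinear form agrees with that of $T$.

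The second step specializes $T=\left<m_0,n_0,p_0\right>$. By the layout of its nonzero entries in Definition \ref{def: mm tensor}, its trilinear form, evaluated on the vectorizations $x=\vec A$, $y=\vec B$, $z=\vec C$ of matrices $A\in\mathbb{F}^{m_0\times n_0}$, $B\in\mathbb{F}^{n_0\times p_0}$, $C\in\mathbb{F}^{p_0\times m_0}$, equals $\sum_{i,j,k}A_{ij}B_{jk}C_{ki}=\tr(ABC)$; moreover $A\mapsto\vec A$ is a bijection between $\mathbb{F}^{m_0\times n_0}$ and $\mathbb{F}^{m_0n_0}$, and similarly for $B,C$. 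Substituting into the conclusion of the first step, $\rk\left(\left<m_0,n_0,p_0\right>\right)\le t$ if and only if there exist $U,V,W$ with $t$ rows such that $\tr(ABC)=\left<U\vec A,\,V\vec B,\,W\vec C\right>$ for all $A,B,C$. By Fact \ref{fact: trilinear form of mm algorithm} this is exactly the statement that $\left<U,V,W\right>$ is an $\left<m_0,n_0,p_0;t\right>$-algorithm, and hence equivalent to the existence of such an algorithm. (In the direction from an algorithm to a tensor decomposition, one first passes to the algorithm's matrices $U,V,W$ using Claim \ref{claim: encoding and decoding matrices}.) Chaining the two equivalences proves the claim.

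The only genuinely technical point is hidden in the second step: checking that the index convention of Definition \ref{def: mm tensor} together with the vectorization of Definition \ref{def: vectorization} really does turn the trilinear form of $\left<m_0,n_0,p_0\right>$ into $\tr(ABC)=\sum_{i,j,k}A_{ij}B_{jk}C_{ki}$ — that is, that the coordinate encodings line up so the three triple sums match term by term. I expect this bookkeeping, complicated by the transpose conventions in play (recall the algorithm computes $\vec{C^{T}}$, etc.), to be the main obstacle; once the conventions are pinned down, the rest is just the definition of tensor rank transported across the linear — hence bijective — correspondence between tensors and trilinear forms.
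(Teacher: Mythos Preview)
The paper does not prove this claim at all; it is stated with a citation to \cite{Strassen1973} and no argument is given. Your proof is the standard one: unwind the definition of tensor rank into the existence of $t$ rank-one summands, repackage those as the rows of $U,V,W$, observe that this is equivalent to the trilinear form identity, and invoke Fact~\ref{fact: trilinear form of mm algorithm}. This is exactly the argument one would expect and it is correct; the only caveat, which you already flag, is that the indexing conventions in Definition~\ref{def: mm tensor} and Definition~\ref{def: vectorization} need to be reconciled (and indeed the paper's own index formula in Definition~\ref{def: mm tensor} looks slightly garbled), so the bookkeeping really does require care, but the structure of your argument is sound.
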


\begin{definition}[\cite{pan80}]
    Let $\left<U,V,W\right>$ be a matrix multiplication algorithm.
    We say that the $i$-th and $j$-th rows are \emph{kin} if the $i$-th and $j$-th rows are equal to one another in at least two of the three matrices, namely, at least two of the following three equalities hold: $[U]_{i}=[U]_j, [V]_{i}=[V]_j, [W]_{i}=[W]_j$.
\end{definition}

\begin{lemma}[\cite{pan80}]
    \label{lemma: kin terms reduce multiplications}
    If $\left<U,V,W\right>$ is an $\left<m_0,n_0,p_0;t\right>$-algorithm that has $s$ disjoint pairs of kin rows, then there exists an $\left<m_0,n_0,p_0;t-s\right>$-algorithm.
\end{lemma}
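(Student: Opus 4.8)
The plan is to show that a single pair of kin rows can be fused into one row without changing the computed function, thereby saving one multiplication, and then to iterate this over the $s$ disjoint pairs. First I would reduce to a normalized situation using the cyclic symmetry. Suppose rows $i$ and $j$ are kin, so at least two of $[U]_i=[U]_j$, $[V]_i=[V]_j$, $[W]_i=[W]_j$ hold. By Corollary~\ref{cor: cyclicness of UVW}, cyclically rotating $\left<U,V,W\right>$ to $\left<V,W,U\right>$ (and once more to $\left<W,U,V\right>$) yields an algorithm for a cyclically permuted matrix product; this rotation merely permutes the roles of the three matrices, so a kin pair stays a kin pair and the two coinciding rows can always be arranged to be the first two matrices of the rotated triple. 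Hence it suffices to treat the case $[U]_i=[U]_j$ and $[V]_i=[V]_j$: once we produce an algorithm with $t-1$ multiplications there, applying Corollary~\ref{cor: cyclicness of UVW} twice more restores the original dimensions $\left<m_0,n_0,p_0\right>$.

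For the fusion step I would work in the trilinear form of Fact~\ref{fact: trilinear form of mm algorithm}, namely $\tr(ABC)=\left<U\cdot\vec{A},\,V\cdot\vec{B},\,W\cdot\vec{C}\right>=\sum_{r\in[t]}(U\vec A)_r (V\vec B)_r (W\vec C)_r$. Since $[U]_i=[U]_j$ and $[V]_i=[V]_j$, for every $A,B$ we have $(U\vec A)_i=(U\vec A)_j$ and $(V\vec B)_i=(V\vec B)_j$, so the $i$-th and $j$-th summands combine into $(U\vec A)_i\,(V\vec B)_i\,\bigl((W\vec C)_i+(W\vec C)_j\bigr)$, and $(W\vec C)_i+(W\vec C)_j=([W]_i+[W]_j)\vec C$. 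Define $U',V',W'$ by deleting row $j$ from each of $U,V,W$ and replacing row $i$ of $W$ by $[W]_i+[W]_j$ (leaving row $i$ of $U$ and of $V$ unchanged). Then $\left<U'\vec A,V'\vec B,W'\vec C\right>=\tr(ABC)$ for all $A,B,C$, so by Fact~\ref{fact: trilinear form of mm algorithm} the triple $\left<U',V',W'\right>$ is an $\left<m_0,n_0,p_0;t-1\right>$-algorithm.

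Finally I would iterate over the $s$ disjoint pairs by induction on $s$. Given disjoint kin pairs $(i_1,j_1),\dots,(i_s,j_s)$, fuse the first pair as above; because the pairs are disjoint, none of the remaining $s-1$ pairs uses the deleted row $j_1$ or the modified row $i_1$, so their $U$-, $V$-, and $W$-entries are untouched and they remain kin pairs in the new $\left<m_0,n_0,p_0;t-1\right>$-algorithm. Applying the induction hypothesis to these $s-1$ disjoint kin pairs gives an $\left<m_0,n_0,p_0;(t-1)-(s-1)\right>=\left<m_0,n_0,p_0;t-s\right>$-algorithm, completing the proof. The only delicate points are the bookkeeping in the first paragraph — verifying that a kin pair remains kin under cyclic rotation, and that disjointness (hence kinship of the untouched pairs) is preserved after each fusion — but both follow directly from the definitions; the genuine computational content is just the one-line identification of two equal products inside the trilinear identity, so I do not expect a real obstacle here.
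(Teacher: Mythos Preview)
Your proposal is correct and follows essentially the same approach as the paper: the paper first proves the single-pair case by using Corollary~\ref{cor: cyclicness of UVW} to reduce W.L.O.G.\ to $[U]_i=[U]_j$, $[V]_i=[V]_j$, then deletes row $j$ and replaces $[W]_i$ by $[W]_i+[W]_j$, and finally iterates $s$ times, noting that disjointness preserves the remaining kin pairs. Your write-up is slightly more explicit about the rotation-and-back bookkeeping and phrases the iteration as an induction, but the argument is identical in content.
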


\begin{lemma}[\cite{pan80}]
    If $\left<U,V,W\right>$ is a an $\left<m_0, n_0, p_0; t\right>$-algorithm that has a pair of kin rows, then there exists a $\left<m_0, n_0, p_0, t-1\right>$-algorithm.
\end{lemma}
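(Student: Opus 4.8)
The statement is exactly the $s=1$ instance of Lemma \ref{lemma: kin terms reduce multiplications}, so if one proves that lemma directly for arbitrary $s$ nothing further is needed; below I sketch the self-contained argument, which is the atomic step that the general lemma iterates. The plan is to \emph{merge} the two kin rows into a single row, producing encoding/decoding matrices with $t-1$ rows that still satisfy the trilinear identity of Fact \ref{fact: trilinear form of mm algorithm}.

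First I would normalise the kin pair. Let $i \ne j$ be the kin rows, so at least two of $[U]_i=[U]_j$, $[V]_i=[V]_j$, $[W]_i=[W]_j$ hold. The kin condition is symmetric in the three slots, and by Corollary \ref{cor: cyclicness of UVW} a cyclic rotation $\langle U,V,W\rangle \mapsto \langle V,W,U\rangle$ again yields a matrix multiplication algorithm with the same number of multiplications and with the same rows in permuted roles; iterating it realises all three two-element subsets of $\{U,V,W\}$. Hence I may assume without loss of generality that the two equalities in force are $[U]_i=[U]_j$ and $[V]_i=[V]_j$, and apply the corollary once more at the end to rotate the reduced algorithm back to the shape $\langle m_0,n_0,p_0;t-1\rangle$. (Alternatively one checks the three symmetric cases by hand; each merges equally easily.)

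With $[U]_i=[U]_j$ and $[V]_i=[V]_j$, define $\langle U',V',W'\rangle$ by deleting rows $i$ and $j$ from each of $U,V,W$ and appending one new row: $[U]_i$ to $U'$, $[V]_i$ to $V'$, and $[W]_i+[W]_j$ to $W'$, keeping all other rows. Then $U'\in\mathbb{F}^{(t-1)\times m_0n_0}$, $V'\in\mathbb{F}^{(t-1)\times n_0p_0}$, $W'\in\mathbb{F}^{(t-1)\times p_0m_0}$. To check correctness I would invoke Fact \ref{fact: trilinear form of mm algorithm}: for all $A\in\mathbb{F}^{m_0\times n_0}$, $B\in\mathbb{F}^{n_0\times p_0}$, $C\in\mathbb{F}^{p_0\times m_0}$, the triple inner product $\langle U'\vec A,V'\vec B,W'\vec C\rangle$ is obtained from $\langle U\vec A,V\vec B,W\vec C\rangle$ by replacing the two summands indexed by $i$ and $j$, namely $([U]_i\vec A)([V]_i\vec B)([W]_i\vec C)$ and $([U]_j\vec A)([V]_j\vec B)([W]_j\vec C)$, with the single summand $([U]_i\vec A)([V]_i\vec B)\big(([W]_i+[W]_j)\vec C\big)$; these agree because $[U]_i\vec A=[U]_j\vec A$ and $[V]_i\vec B=[V]_j\vec B$. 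Thus $\langle U'\vec A,V'\vec B,W'\vec C\rangle=\tr(ABC)$ for all such $A,B,C$, so by Fact \ref{fact: trilinear form of mm algorithm} the triple $\langle U',V',W'\rangle$ is an $\langle m_0,n_0,p_0;t-1\rangle$-algorithm.

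The argument is routine; the only place that demands care is the ``without loss of generality'' reduction, i.e.\ spelling out that a cyclic rotation permutes the roles of $U,V,W$ while preserving both the algorithm property (Corollary \ref{cor: cyclicness of UVW}) and the row pairing, so that the cases ``$[U]_i=[U]_j$ and $[W]_i=[W]_j$'' or ``$[V]_i=[V]_j$ and $[W]_i=[W]_j$'' reduce to the canonical ``$U,V$ equal'' case. Everything else is a one-line substitution in the trilinear identity.
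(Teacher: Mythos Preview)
Your proposal is correct and follows essentially the same approach as the paper: reduce via cyclic rotation (Corollary \ref{cor: cyclicness of UVW}) to the case $[U]_i=[U]_j$, $[V]_i=[V]_j$, then merge the two rows into one by summing the corresponding $W$-entries. Your write-up is more explicit than the paper's about verifying the trilinear identity via Fact \ref{fact: trilinear form of mm algorithm} and about rotating back to recover the original dimensions, but the underlying argument is identical.
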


\begin{proof}
    Let the $i$-th and $j$-th rows be kin where $i<j$. W.L.O.G. (see Corollary \ref{cor: cyclicness of UVW}) $[U]_i=[U]_j$, and $[V_i]=[V_j]$. The algorithm $\left<U',V',W'\right>$ obtained from $\left<U,V,W\right>$ by removing the $j$-th row of $U,V$, and $W$ and setting $[W']_i=[W]_i+[W]_j$ is an $\left<m_0, n_0, p_0; t-1\right>$-algorithm.
\end{proof}

\begin{proof} [Proof of Lemma \ref{lemma: kin terms reduce multiplications}]
    Apply the previous lemma to $\left<U,V,W\right>$ $s$ times. Each time, we reduce the number of multiplications by $1$. Since the kin rows are disjoint, at the $i$-th iterations we have $s-i$ disjoint pairs of kin rows. Thus, applying the corollary $s$ times yields an algorithm that requires $t-s$ multiplications.
\end{proof}

We next introduce a theorem that allows creating $\left<2,2,2;7\right>$-algorithms with first rows of our choosing. Combining these algorithms into Pan's algorithms, we generate kin rows. These allow reducing the number of multiplications (see Section \ref{sec: improving pans algorithms}).

\begin{theorem}
    \label{theorem: we have 2227 alg with U0=KU and V0=KV}
    Let $K_U,K_V\in\mathbb{F}^{2\times 2}$ be invertible matrices. Then there exists a triplet $\left<U,V,W\right>$ representing a $\left<2,2,2;7\right>$-algorithm such that $[U]_0=\overrightarrow{K_U}$ and $[V]_0=\overrightarrow{K_V}$.
\end{theorem}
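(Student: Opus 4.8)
The plan is to use the de Groote equivalence for Strassen's $\langle 2,2,2;7\rangle$-algorithm: the group $(\mathrm{GL}_2(\mathbb{F}))^3$ acts on the set of $\langle 2,2,2;7\rangle$-algorithms via sandwiching, and this action is transitive on optimal algorithms, so we have a rich supply of such algorithms to manipulate. Concretely, if $\langle U_0,V_0,W_0\rangle$ is Strassen's algorithm (say with rows indexed by $[7]$), then for any $P,Q,R\in\mathrm{GL}_2(\mathbb{F})$ the triplet $\langle U_0(P\otimes Q^{-1}),\, V_0(Q\otimes R^{-1}),\, W_0(R\otimes P^{-1})\rangle$ (with the tensor-product action on the $4$-dimensional row space, matching the vectorization of $2\times 2$ matrices) is again a $\langle 2,2,2;7\rangle$-algorithm. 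This is the statement that conjugating the three matrix arguments $A\mapsto P A Q^{-1}$, $B\mapsto Q B R^{-1}$, $C\mapsto R C P^{-1}$ leaves $\operatorname{tr}(ABC)$ invariant (Fact \ref{fact: trilinear form of mm algorithm} and Fact \ref{fact:trace is cyclic}), so the transformed triplet still computes the trilinear form.

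First I would fix Strassen's algorithm $\langle U_0,V_0,W_0\rangle$ and recall that its first row (in the standard presentation) is a rank-one $2\times 2$ matrix — in Strassen's scheme the products are of the form $(A_{11}+A_{22})(B_{11}+B_{22})$ etc., and at least one product uses a rank-one combination; more robustly, by applying a suitable element of the equivalence we may assume $[U_0]_0 = \overrightarrow{E}$ and $[V_0]_0 = \overrightarrow{E}$ where $E$ is any fixed rank-one matrix, e.g. $E = e_1 e_1^T$. Then, given the target invertible $K_U, K_V$, I would look for $P,Q,R\in\mathrm{GL}_2(\mathbb{F})$ such that the $(P\otimes Q^{-1})$-action sends $\overrightarrow{E}$ to $\overrightarrow{K_U}$ and the $(Q\otimes R^{-1})$-action sends $\overrightarrow{E}$ to $\overrightarrow{K_V}$. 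Since $(P\otimes Q^{-1})\overrightarrow{E} = \overrightarrow{P E (Q^{-1})^T}$ (transpose appearing because of how vectorization interacts with the Kronecker product — I would pin down the exact convention against Definition \ref{def: vectorization}), the condition becomes $P E Q^{-T} = K_U$ and $Q E R^{-T} = K_V$, i.e. two equations in the three unknowns $P, Q, R$.

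The key step is solving these equations while keeping $P,Q,R$ invertible. This is where the freedom in the choice of $E$ and the extra unknown matter: if $K_U$ were required rank-one this would be automatic, but $K_U, K_V$ are allowed to be arbitrary invertible $2\times 2$ matrices, so a single rank-one $E$ cannot be mapped to $K_U$ by an invertible sandwiching — $P E Q^{-T}$ always has rank $1$. So the plan must instead start from a Strassen-equivalent algorithm whose first row is a \emph{full-rank} $2\times 2$ matrix. I would therefore first establish that there exists a $\langle 2,2,2;7\rangle$-algorithm with $[U]_0 = \overrightarrow{I}$ and $[V]_0 = \overrightarrow{I}$ (an easy check on a concrete scheme, or via the equivalence applied to Strassen starting from any product that is a sum of two rank-one terms, hence generically full rank), and then apply the equivalence with $P = K_U$, $Q = I$, $R = K_V^{-1}$: this sends $\overrightarrow{I}\mapsto\overrightarrow{K_U I} = \overrightarrow{K_U}$ in the $U$-slot and $\overrightarrow{I}\mapsto\overrightarrow{I K_V} = \overrightarrow{K_V}$ in the $V$-slot (up to transposes, absorbed by choosing the transpose of the targets if needed), giving the desired triplet. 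The main obstacle, then, is not the group action — which is routine — but verifying the base fact that an optimal $2\times 2$ scheme with both leading rows equal to $\overrightarrow{I}$ exists; I expect this to be handled by exhibiting one explicitly (for instance, massaging Strassen's scheme so that $M_1 = (A_{11}+A_{22})(B_{11}+B_{22})$ becomes the first product, whose encoding rows are exactly $\overrightarrow{I}$) and checking Brent's equations, or simply citing de Groote's transitivity theorem to move Strassen's known such product into first position.
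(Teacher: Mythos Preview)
Your proposal is correct and follows essentially the same approach as the paper: start from Strassen's algorithm, whose first product $(A_{11}+A_{22})(B_{11}+B_{22})$ gives $[U]_0=[V]_0=\overrightarrow{I}$, and apply the de Groote $(\mathrm{GL}_2)^3$-action to carry these to $\overrightarrow{K_U}$ and $\overrightarrow{K_V}$. The only difference is cosmetic: the paper isolates a one-parameter piece of the action (Claim~\ref{claim:version of algorithm}, sending $A\mapsto AK$, $B\mapsto K^{-1}B$) into a separate lemma and applies it twice with a cyclic rotation in between, whereas you apply the full three-parameter action in a single step with $Q=I$; your route is arguably cleaner, and your transpose caveats are exactly the bookkeeping the paper tracks via Corollary~\ref{cor: multiply from the left or right by K}.
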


We sketch a proof of this theorem. de Groote \cite{DEGROOTE1978pt1, DEGROOTE1978PT2} introduces equivalence classes for matrix multiplication, and shows that all $\left<2,2,2;7\right>$-algorithms are equivalent. We search the equivalence class of Strassen's $\left<2,2,2;7\right>$-algorithm \cite{strassen1969} and find the desired algorithm. The full proof is provided in Appendix \ref{app: 222-7 algorithms}.

\subsection{Trilinear Aggregation Based Algorithms}
We next briefly describe Pan's technique of trilinear aggregation \cite{pan1972schemes, pan80, pan82, laderman1992practical}. Following Pan's conventions \cite{pan78, pan80, pan82, pan1972schemes}, we use the notation of the trilinear form. That is, we describe trilinear algorithms that compute $\tr(ABC)$ given three matrices (see Fact \ref{fact: trilinear form of mm algorithm}). For completeness, we briefly explain the technique of implicit canceling as well.

Recall that for the multiplication of two $n_0\times n_0$ matrices one needs to compute the sum of $n_0^3$ terms of the form $A_{ij}B_{jk}C_{ki}$, denoted \emph{desirable terms}. All other terms are denoted \emph{undesirable terms}. Trilinear aggregation is a technique to compute several desirable terms using a single multiplication. For example, the product $(A_{0,0}+A_{1,1})(B_{0,1}+B_{1,2})(C_{1,0}+C_{2,1})$ includes two desirable terms $A_{0,0}B_{0,1}C_{1,0}$ and $A_{1,1}B_{1,2}C_{2,1}$, and six undesirable terms.

Undesirable terms need to be canceled. They may be canceled either \emph{explicitly} or \emph{implicitly}. Explicit canceling means subtracting the undesirable terms one by one. However, this requires many multiplications. Thus, correction terms are usually \emph{united}. If, for example, we have to correct terms of the form $a_{0,1}b_{0,1}c_{0,1}$ and $a_{0,1}b_{0,1}c_{0,2}$, they may both be computed using a single multiplication: $a_{0,1}b_{0,1}(c_{0,1}+c_{0,2})$. This union is possible since the terms are \emph{kin}.

Implicit canceling is more complicated. It defines matrices $A^\ast, B^\ast, C^\ast$, such that the elements of $A^\ast$ are linear combinations of elements from $A$, and similarly for $B^\ast, C^\ast$. These matrices also satisfy the property $\tr(A^\ast B^\ast C^\ast) = \tr(ABC)$. The matrices $A^\ast, B^\ast, C^\ast$ have desirable properties that allow for a simpler description of matrix multiplication algorithms. An example of such property is that the sum of each row and column of $A^\ast, B^\ast, C^\ast$ is $0$. This allows undesirable terms to disappear implicitly.

Some undesirable terms cannot be united with other terms and cannot be implicitly canceled. Thus, Pan \cite{pan78, pan80, pan82, laderman1992practical} defines special aggregation tables. When using these tables, many undesirable terms can be united and canceled either explicitly or implicitly. The remaining undesirable terms appear twice in different tables: once with a positive sign and another time with a negative sign. This special structure of the aggregation tables requires the input matrices to be of even dimension.

We use the construction of Hadas and Schwartz \cite{hs23}, which is similar to Pan's \cite{pan82} and obtains the same exponent. The algorithm can be roughly divided into three steps:

\begin{enumerate}
    \itemsep0em
    
    \item Apply a linear transformation $\varphi$ to transform the input matrices $A,B,C\in\mathbb{F}^{n_0 \times n_0}$ into $A^\ast, B^\ast, C^\ast\in\mathbb{F}^{n_0+2\times n_0 + 2}$.
    \item \label{step: sum over aggregation tables} Sum over the aggregation tables. This saves around $\frac{2}{3}$ of the required multiplications, but generates many undesirable terms. Many of them are canceled due to the properties of $A^\ast, B^\ast, C^\ast$ or due to the special structure of the aggregation tables. The remaining undesirable terms can be represented by a tensor of a low rank.
    \item \label{step: cancel remaining terms} Explicitly cancel the remaining terms, utilizing a $\left<2,2,2;7\right>$-algorithm to save multiplications.
\end{enumerate}

The algorithms in \cite{pan82} are $\left<n_0,n_0,n_0,t^{Pan}\right>$-algorithms where $t^{Pan}=\frac{n_0^3}{3} + \frac{15}{4}n_0^2 + \frac{32}{3}n_0 + 9$.

Pan's algorithms split each of the matrices $A^\ast, B^\ast, C^\ast$ into $2\times 2$ blocks of size $\left(\frac{n_0}{2}+1\right)\times \left(\frac{n_0}{2}+1\right)$ each. Thus, we introduce a notation that allows intuitive description of these blocks.

\begin{definition} [cf. \cite{pan78, pan80, pan82, laderman1992practical}]
    Let $i\in[n_0+2]$. We denote $\bar{i}=i+\frac{n_0}{2}+1\mod{n_0+2}$. The value of $n_0$ will always be clear from context.
\end{definition}

The terms added in Step \ref{step: cancel remaining terms} can be written as
\begin{equation*}
\sum_{i,j\in\left[\frac{n_0}{2}+1\right]} \tr\left(d
\begingroup 
\setlength\arraycolsep{2pt}
\begin{pmatrix}
    \gamma_{i,j} A^\ast_{i,j} & A^\ast_{\bar{i}, j} \\ A^\ast_{i, \bar{j}} & A^\ast_{\bar{i},\bar{j}}
\end{pmatrix}
\begin{pmatrix}
    B^\ast_{i,j} & \frac{1}{\gamma_{i,j}}B^\ast_{\bar{i}, j} \\ B^\ast_{i, \bar{j}} & B^\ast_{\bar{i},\bar{j}}
\end{pmatrix}
\begin{pmatrix}
    C^\ast_{i,j} & -C^\ast_{\bar{i}, j} \\ -C^\ast_{i, \bar{j}} & \gamma_{i,j} C^\ast_{\bar{i},\bar{j}}
\end{pmatrix}
\endgroup
\right)
\hfilneg
\end{equation*}
where $d=\left(\frac{n_0}{2}+1\right)$ and $\gamma_{i,j}= 1 - \frac{9}{\frac{n_0}{2}+1}\delta_{i,j}$.

This sum can be computed using $7\cdot\left(\frac{n_0}{2}+1\right)^2$ multiplications by using a $\left<2,2,2;7\right>$-algorithm. Note that even though all $\left<2,2,2;7\right>$-algorithms are de Groote-equivalent \cite{DEGROOTE1978pt1}, there are many different triplets $\left<U,V,W\right>$ that represent a $\left<2,2,2;7\right>$-algorithm. Any of these triplets can be used in Step \ref{step: cancel remaining terms}. Further, not all traces must be computed using the same $\left<2,2,2;7\right>$-algorithm.

We show that we can make some of the $7\cdot\left(\frac{n_0}{2}+1\right)^2$ multiplications kin to multiplications in Step \ref{step: sum over aggregation tables} by carefully choosing the $\left<2,2,2;7\right>$-algorithm we use. Thus, we are able to reduce the total number of multiplications and reduce the exponent of the algorithms.

\section{A New Trilinear Aggregation Based Family of Algorithms}
\label{sec: improving pans algorithms}
In this section we present our trilinear aggregation based algorithms. They have reduced number of multiplications, thus smaller exponents, compared to Pan's \cite{pan82} algorithms (see Table \ref{tab: comparing exponents of small algorithm}). That is, we reduce the number of multiplications required for the same base case from $t^{Pan}=\frac{n_0^3}{3} + \frac{15}{4}n_0^2 + \frac{32}{3}n_0 + 9$ to $t^{New}=\frac{n_0^3}{3} + \frac{15}{4}n_0^2 + \frac{61}{6}n_0 + 8$, where $n_0$ is the matrix dimension ($n_0\neq 16$ is even\footnote{The algorithms in \cite{hs23} require division by $1-\frac{9}{\frac{n_0}{2}+1}$. When $n_0=16$, this is equal to zero and thus the algorithm is not well defined.}).

Our algorithms are based on Pan's algorithms \cite{pan82}, and we describe them using the notations of Hadas and Schwartz \cite{hs23}. We use the same implicit canceling transformations, and the same aggregation tables of \cite{hs23}, which are equivalent to Pan's aggregation tables \cite{pan82} but allow for a more concise description of the algorithm. Our canceling step is similar to that of \cite{hs23}. However, wereas Pan utilizes a single $\left<2,2,2;7\right>$-algorithm, in some cases we choose a different $\left<2,2,2;7\right>$-algorithm for trace computations. This allows creating terms that are kin (identical lines in $U$ and $V$) to terms from the aggregation step. Uniting these terms reduces the required number of multiplications.

We next provide a high level description of the construction. For completeness, Appendix \ref{app: explicit description of algorithms} contains an explicit description of our algorithms in the trilinear form. The $\left<U,V,W\right>$ encoding and decoding matrices are provided as supplemental material to this work \footnote{\supplementalurl}.

Let $n_0\neq 16$ be an even positive integer. We begin by applying the same transformations as Pan \cite{pan82}, and using the aggregation tables of Hadas and Schwartz \cite{hs23} (which are equivalent to Pan's). We describe in detail the cancellation step, which is different.

During the \emph{aggregation step}, some (but not all) multiplications take the form
$$(-A^\ast_{i,j}  + A^\ast_{\j, k} + A^\ast_{k, \i})(B^\ast_{j, \k} + B^\ast_{k,i} + B^\ast_{\i, j}) (-C^\ast_{\k, i} + C^\ast_{i, \j} + C^\ast_{j,k})$$
for $i,j,k\in\left[\frac{n_0}{2} + 1\right]$.
Consider these terms where $i=j=k$ for all $i\in\left[\frac{n_0}{2}+1\right]$, namely
\begin{align}
(-A^\ast_{i,i}  + A^\ast_{\i, i} + A^\ast_{i, \i})(B^\ast_{i, \i} + B^\ast_{i,i} + B^\ast_{\i, i}) (-C^\ast_{\i, i} + C^\ast_{i, \i} + C^\ast_{i,i}) \label{eq: aggregation where i=j=k}
\end{align}

We next show how terms from the \emph{cancellation step} can take a form that is kin to the above terms, hence we can apply Lemma \ref{lemma: kin terms reduce multiplications} to reduce the number of multiplications.

Recall that the cancellation step computes
\begin{align*}
    \tr\left(d
    \begin{pmatrix}
        \gamma_{i,j} A^\ast_{i,j} & A^\ast_{\bar{i}, j} \\ A^\ast_{i, \bar{j}} & A^\ast_{\bar{i},\bar{j}}
    \end{pmatrix}
    \begin{pmatrix}
        B^\ast_{i,j} & \frac{1}{\gamma_{i,j}}B^\ast_{\bar{i}, j} \\ B^\ast_{i, \bar{j}} & B^\ast_{\bar{i},\bar{j}}
    \end{pmatrix}
    \begin{pmatrix}
        C^\ast_{i,j} & -C^\ast_{\bar{i}, j} \\ -C^\ast_{i, \bar{j}} & \gamma_{i,j} C^\ast_{\bar{i},\bar{j}}
    \end{pmatrix}
    \right)
\end{align*}
for all $i,j\in\left[\frac{n_0}{2}+1\right]$, where $d=\frac{n_0}{2}+1$ and $\gamma_{i,j}=1-\frac{9}{d}\delta_{i,j}$.

Notice that $n_0\neq 16$ implies $\gamma_{i,j}\neq 0$ for all $i,j\in\left[\frac{n_0}{2}+1\right]$ and thus this is well defined. 

When $i=j$, this is
\begin{align}
    \tr\left(d
    \begin{pmatrix}
        \gamma_{i,i} A^\ast_{i,i} & A^\ast_{\bar{i}, i} \\ A^\ast_{i, \bar{i}} & A^\ast_{\bar{i},\bar{i}}
    \end{pmatrix}
    \begin{pmatrix}
        B^\ast_{i,i} & \frac{1}{\gamma_{i,i}}B^\ast_{\bar{i}, i} \\ B^\ast_{i, \bar{i}} & B^\ast_{\bar{i},\bar{i}}
    \end{pmatrix}
    \begin{pmatrix}
        C^\ast_{i,i} & -C^\ast_{\bar{i}, i} \\ -C^\ast_{i, \bar{i}} & \gamma_{i,i} C^\ast_{\bar{i},\bar{i}}
    \end{pmatrix}
    \right)
    \label{eq: trace where i=j}
\end{align}

That is, the cancellation step contains a sub-tensor equivalent to the tensor of matrix multiplication. Recall that Pan \cite{pan82} suggested that this trace can be computed using $7$ multiplications (instead of the trivial $8$ multiplication) by utilizing a $\left<2,2,2;7\right>$-algorithm.

We next show that there exist $\left<2,2,2;7\right>$-algorithms such that the first multiplication when computing the trace in Equation (\ref{eq: trace where i=j}) is kin to the multiplication in Equation (\ref{eq: aggregation where i=j=k}). To prove the existence of such algorithms, we use Theorem \ref{theorem: we have 2227 alg with U0=KU and V0=KV}.

\begin{claim}
\label{claim: first row of multiplication}
Let $\left<U,V,W\right>$ be a $\left<2,2,2;7\right>$-algorithm. If $\left<U, V, W\right>$ is used to compute Equation (\ref{eq: trace where i=j}) for some $i\in\left[\frac{n_0}{2}+1\right]$ in the cancellation step, Then the multiplication
\begin{equation}
\begin{aligned}
\label{eq: multiplication in cancellation step}
[U]_0\begin{pmatrix}
    \gamma_{i,i}A^\ast_{i,i} \\ A^\ast_{\bar{i},i} \\ A^\ast_{i,\bar{i}} \\ A^\ast_{\bar{i},\bar{i}}
\end{pmatrix} \cdot [V]_0 \begin{pmatrix}
    B^\ast_{i,i} \\ \frac{1}{\gamma_{i,i}}B^\ast_{\bar{i},i} \\ B^\ast_{i,\bar{i}} \\ B^\ast_{\bar{i},\bar{i}}
\end{pmatrix} \cdot [W]_0 \begin{pmatrix}
    dC^\ast_{i,i} \\ -dC^\ast_{\bar{i},i} \\ -dC^\ast_{i,\bar{i}} \\ d\gamma_{i,i}C^\ast_{\bar{i},\bar{i}}
\end{pmatrix}
\end{aligned}
\end{equation}
appears in the cancellation step.
\end{claim}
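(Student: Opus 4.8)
The plan is to unwind what it means to ``compute the trace in Equation~(\ref{eq: trace where i=j}) with a $\langle 2,2,2;7\rangle$-algorithm'' and then simply read off the first of the seven products that this entails.

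First, by Fact~\ref{fact: trilinear form of mm algorithm} (and the Remark following it), a triplet $\langle U,V,W\rangle$ is a $\langle 2,2,2;7\rangle$-algorithm precisely when, for all $2\times 2$ matrices $X$, $Y$, $Z$, we have $\tr(XYZ)=\langle U\vec X,\,V\vec Y,\,W\vec Z\rangle=\sum_{r\in[7]}([U]_r\vec X)([V]_r\vec Y)([W]_r\vec Z)$. Hence using $\langle U,V,W\rangle$ to compute such a trace amounts to forming these seven triple products and summing them; the product indexed by $r=0$ is $([U]_0\vec X)([V]_0\vec Y)([W]_0\vec Z)$. (When $X$, $Y$, $Z$ have matrix blocks as entries, as is the case here, each product is realized recursively, but the top-level linear combinations being multiplied are exactly the ones just written.)

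Next I would instantiate this with the three $2\times 2$ block matrices appearing in Equation~(\ref{eq: trace where i=j}): take $X$ to be the first, $Y$ the second, and $Z$ the third scaled by $d$, which is legitimate because the trace is linear in each argument, so $d$ may be attached to any one factor. Then $\tr(XYZ)$ is exactly the expression in Equation~(\ref{eq: trace where i=j}). It remains to vectorize: applying Definition~\ref{def: vectorization}, which lists the four blocks of a $2\times 2$ block matrix in row-major order, yields $\vec X=(\gamma_{i,i}A^\ast_{i,i},\,A^\ast_{\bar i,i},\,A^\ast_{i,\bar i},\,A^\ast_{\bar i,\bar i})^T$, $\vec Y=(B^\ast_{i,i},\,\tfrac{1}{\gamma_{i,i}}B^\ast_{\bar i,i},\,B^\ast_{i,\bar i},\,B^\ast_{\bar i,\bar i})^T$, and $\vec Z=(dC^\ast_{i,i},\,-dC^\ast_{\bar i,i},\,-dC^\ast_{i,\bar i},\,d\gamma_{i,i}C^\ast_{\bar i,\bar i})^T$. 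Substituting these into the $r=0$ product $([U]_0\vec X)([V]_0\vec Y)([W]_0\vec Z)$ gives exactly Equation~(\ref{eq: multiplication in cancellation step}); since this product is one of the seven formed by the algorithm, it appears in the cancellation step.

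The statement is really just a substitution into Fact~\ref{fact: trilinear form of mm algorithm}, so I do not expect a genuine obstacle. The only things needing care are conventional: matching the row-major order of Definition~\ref{def: vectorization} when writing out $\vec X,\vec Y,\vec Z$, and noting that placing the scalar $d$ on the third factor (rather than in front of the whole product) is permitted by multilinearity of the trace. Everything else is direct.
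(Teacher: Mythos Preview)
Your proof is correct and takes essentially the same approach as the paper: both simply observe that the displayed product is the first multiplication induced by $\langle U,V,W\rangle$ on the three $2\times2$ block matrices (with the scalar $d$ absorbed into the third factor). The paper states this in one line, whereas you spell out the vectorization and the appeal to Fact~\ref{fact: trilinear form of mm algorithm} explicitly.
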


\begin{proof}
    This is exactly the first multiplication induced by $\left<U,V,W\right>$ when the input matrices are $$
    \begin{pmatrix}
        \gamma_{i,i} A^\ast_{i,i} & A^\ast_{\bar{i}, i} \\ A^\ast_{i, \bar{i}} & A^\ast_{\bar{i},\bar{i}}
    \end{pmatrix}
    \begin{pmatrix}
        B^\ast_{i,i} & \frac{1}{\gamma_{i,i}}B^\ast_{\bar{i}, i} \\ B^\ast_{i, \bar{i}} & B^\ast_{\bar{i},\bar{i}}
    \end{pmatrix}
    \begin{pmatrix}
        dC^\ast_{i,i} & -dC^\ast_{\bar{i}, i} \\ -dC^\ast_{i, \bar{i}} & d\gamma_{i,i} C^\ast_{\bar{i},\bar{i}}
    \end{pmatrix}
    $$
\end{proof}

\begin{corollary} 
    \label{cor: using 2-2-2-7 creates kin rows}
    If $\left<U, V, W\right>$ is used to compute Equation (\ref{eq: trace where i=j}) for $i\in\left[\frac{n_0}{2}+1\right]$ in the cancellation step, and
    \begin{equation}
    \begin{aligned}
    \label{eq: U0, V0 requirements}
        [U]_0 &= \begin{pmatrix}-\frac{1}{\gamma_{i,i}} & 1 & 1 & 0\end{pmatrix} \\
        [V]_0 &= \begin{pmatrix}1 & \gamma_{i,i} & 1 & 0\end{pmatrix}
    \end{aligned}
    \end{equation}
    then there exist a row in the cancellation step that is kin to a row in the aggregation step.
\end{corollary}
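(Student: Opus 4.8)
The plan is to combine Claim~\ref{claim: first row of multiplication} with a one-line substitution and then read off the conclusion from the definition of kin rows. Since $\left<U,V,W\right>$ is used to compute Equation~(\ref{eq: trace where i=j}) for the given $i$, Claim~\ref{claim: first row of multiplication} guarantees that the product displayed in Equation~(\ref{eq: multiplication in cancellation step}) is one of the multiplications performed in the cancellation step. I would substitute the hypothesised rows $[U]_0 = \left(-\tfrac{1}{\gamma_{i,i}},\,1,\,1,\,0\right)$ and $[V]_0 = \left(1,\,\gamma_{i,i},\,1,\,0\right)$ from Equation~(\ref{eq: U0, V0 requirements}) into that product and simplify: the $A$-factor collapses to $-\tfrac{1}{\gamma_{i,i}}\gamma_{i,i}A^\ast_{i,i} + A^\ast_{\bar i,i} + A^\ast_{i,\bar i} = -A^\ast_{i,i} + A^\ast_{\bar i,i} + A^\ast_{i,\bar i}$, and the $B$-factor collapses to $B^\ast_{i,i} + \gamma_{i,i}\tfrac{1}{\gamma_{i,i}}B^\ast_{\bar i,i} + B^\ast_{i,\bar i} = B^\ast_{i,i} + B^\ast_{\bar i,i} + B^\ast_{i,\bar i}$. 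The cancellations of the $\gamma_{i,i}$ factors are exactly what is engineered by placing $\gamma_{i,i}$ and $\tfrac{1}{\gamma_{i,i}}$ in the two block matrices, and they rely on $\gamma_{i,i}\neq 0$, i.e. on $n_0\neq 16$.

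Next I would compare with the aggregation-step multiplication of Equation~(\ref{eq: aggregation where i=j=k}), whose $A$-factor is $-A^\ast_{i,i}+A^\ast_{\bar i,i}+A^\ast_{i,\bar i}$ and whose $B$-factor is $B^\ast_{i,\bar i}+B^\ast_{i,i}+B^\ast_{\bar i,i}$. The two $A$-factors are literally the same linear form in the entries of $A^\ast$, and the two $B$-factors are the sum of the same three entries $B^\ast_{i,i}, B^\ast_{\bar i,i}, B^\ast_{i,\bar i}$ of $B^\ast$, hence equal as linear forms. Recalling that a row of the encoding matrix $U$ of the overall algorithm is precisely the coefficient vector of the $A$-linear combination of the corresponding multiplication (and likewise $V$ for the $B$-combination), it follows that the row of $U$ produced by the multiplication in Equation~(\ref{eq: multiplication in cancellation step}) equals the row of $U$ produced by the multiplication in Equation~(\ref{eq: aggregation where i=j=k}), and similarly for $V$. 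By the definition of kin rows these two rows are kin: they coincide in $U$ and in $V$, which is two of the three matrices, so their $W$-rows --- whatever the chosen $\left<2,2,2;7\right>$-algorithm dictates --- are irrelevant. This exhibits a row of the cancellation step kin to a row of the aggregation step, as required; running the same argument over all $i\in\left[\tfrac{n_0}{2}+1\right]$ with a single $\left<2,2,2;7\right>$-algorithm (note $\gamma_{i,i}$ does not actually depend on $i$) produces $\tfrac{n_0}{2}+1$ disjoint such pairs, which is what feeds Lemma~\ref{lemma: kin terms reduce multiplications}.

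To see the hypothesis on $[U]_0,[V]_0$ is realisable I would invoke Theorem~\ref{theorem: we have 2227 alg with U0=KU and V0=KV} with $K_U=\begin{pmatrix}-1/\gamma_{i,i}&1\\1&0\end{pmatrix}$ and $K_V=\begin{pmatrix}1&\gamma_{i,i}\\1&0\end{pmatrix}$, whose vectorized forms $\overrightarrow{K_U},\overrightarrow{K_V}$ are exactly the right-hand sides of Equation~(\ref{eq: U0, V0 requirements}); both matrices are invertible since $\det K_U=-1$ and $\det K_V=-\gamma_{i,i}\neq 0$. There is no genuine obstacle here --- the argument is a substitution followed by matching two linear forms --- so the only point that needs care is the $\gamma_{i,i}$ bookkeeping in the two factors and the observation that the $B$-factor of Equation~(\ref{eq: aggregation where i=j=k}) lists the same three monomials as the simplified $B$-factor of Equation~(\ref{eq: multiplication in cancellation step}), merely in a different order.
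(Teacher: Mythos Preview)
Your proof is correct and follows essentially the same approach as the paper: invoke Claim~\ref{claim: first row of multiplication}, substitute the hypothesised $[U]_0,[V]_0$, and match the resulting $A$- and $B$-factors with those of Equation~(\ref{eq: aggregation where i=j=k}). You spell out the substitution more explicitly than the paper does, and you also fold in the realisability argument (which the paper separates into Claim~\ref{claim: there exist the 2-2-2-7 alg that creates kin row}) and the disjoint-pairs count (which the paper defers to Theorem~\ref{thm: num of mults of ALG_1}); those extras are correct but go beyond what the corollary itself asserts.
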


\begin{proof}
The multiplication in Equation (\ref{eq: aggregation where i=j=k}) appears in the aggregation step. 
By Claim \ref{claim: first row of multiplication}, the multiplication in Equation (\ref{eq: multiplication in cancellation step}) appears in the cancellation step. By assumption, Equation (\ref{eq: U0, V0 requirements}) holds. Thus, these two multiplications are kin.
\end{proof}

\begin{claim}
    \label{claim: there exist the 2-2-2-7 alg that creates kin row}
    There exists a $\left<2,2,2;7\right>$-algorithm $\left<U, V, W\right>$ such that Equation (\ref{eq: U0, V0 requirements}) holds.
\end{claim}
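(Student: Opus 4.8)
The plan is to invoke Theorem \ref{theorem: we have 2227 alg with U0=KU and V0=KV} directly, after recognizing the two prescribed rows as vectorizations of invertible $2\times 2$ matrices. Recall that for a $2\times 2$ matrix $M$ (so each block in Definition \ref{def: vectorization} is a scalar) we have $\overrightarrow{M} = \left(M_{1,1},\, M_{1,2},\, M_{2,1},\, M_{2,2}\right)^T$. Hence the row $[U]_0 = \begin{pmatrix}-\frac{1}{\gamma_{i,i}} & 1 & 1 & 0\end{pmatrix}$ required by Equation (\ref{eq: U0, V0 requirements}) is exactly $\overrightarrow{K_U}$ for $K_U = \begin{pmatrix}-\frac{1}{\gamma_{i,i}} & 1 \\ 1 & 0\end{pmatrix}$, and $[V]_0 = \begin{pmatrix}1 & \gamma_{i,i} & 1 & 0\end{pmatrix}$ is $\overrightarrow{K_V}$ for $K_V = \begin{pmatrix}1 & \gamma_{i,i} \\ 1 & 0\end{pmatrix}$.

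Next I would check invertibility of $K_U$ and $K_V$. We have $\det K_U = -1 \neq 0$, regardless of $i$ and $n_0$. For $K_V$ we get $\det K_V = -\gamma_{i,i}$, so invertibility is equivalent to $\gamma_{i,i} \neq 0$. Since $\gamma_{i,i} = 1 - \frac{9}{\frac{n_0}{2}+1}$, this quantity vanishes precisely when $\frac{n_0}{2}+1 = 9$, that is, when $n_0 = 16$; as the construction of this section assumes $n_0 \neq 16$ (the same degeneracy already noted for the algorithms of \cite{hs23}), we conclude $\gamma_{i,i} \neq 0$ and hence $K_V$ is invertible as well.

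Finally, applying Theorem \ref{theorem: we have 2227 alg with U0=KU and V0=KV} to the invertible pair $K_U, K_V$ yields a triplet $\left<U,V,W\right>$ representing a $\left<2,2,2;7\right>$-algorithm with $[U]_0 = \overrightarrow{K_U}$ and $[V]_0 = \overrightarrow{K_V}$, which are exactly the two rows demanded by Equation (\ref{eq: U0, V0 requirements}); this completes the proof. There is no real obstacle here: the only points to keep straight are matching the vectorization convention of Definition \ref{def: vectorization} to the row layout, and confirming the $n_0 = 16$ case is excluded so that $\gamma_{i,i}$ is invertible. The substantive work has already been discharged in Theorem \ref{theorem: we have 2227 alg with U0=KU and V0=KV}, whose proof (via searching the de Groote equivalence class of Strassen's algorithm) is given in the appendix.
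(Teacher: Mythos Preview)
Your proof is correct and follows exactly the paper's approach: define $K_U=\begin{pmatrix}-\frac{1}{\gamma_{i,i}} & 1 \\ 1 & 0\end{pmatrix}$ and $K_V=\begin{pmatrix}1 & \gamma_{i,i} \\ 1 & 0\end{pmatrix}$, observe they are invertible (you are slightly more explicit about the determinants and the role of $n_0\neq 16$), and invoke Theorem~\ref{theorem: we have 2227 alg with U0=KU and V0=KV}.
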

\begin{proof}
Let $K_U=\begin{pmatrix}-\frac{1}{\gamma_{i,i}} & 1 \\ 1 & 0\end{pmatrix}$ and $K_V=\begin{pmatrix}1 & \gamma_{i,i} \\ 1 & 0\end{pmatrix}$. Notice that $K_U,K_V$ are invertible. By Theorem \ref{theorem: we have 2227 alg with U0=KU and V0=KV}, there exists a $\left<2,2,2;7\right>$-algorithm such that Equation (\ref{eq: U0, V0 requirements}) is satisfied.
\end{proof}

 Thus, we create algorithms with kin terms. We unite these terms using Lemma \ref{lemma: kin terms reduce multiplications} to reduce the number of multiplications. 

\begin{theorem}
    \label{thm: num of mults of ALG_1}
    There exists an $\left<n_0, n_0, n_0; t\right>$-algorithm such $t=t^{Pan}-\frac{n_0}{2}-1=\frac{n_0^3}{3} + \frac{15}{4}n_0^2 + \frac{61}{6}n_0 + 8$ for every even $n_0 \neq 16$.
\end{theorem}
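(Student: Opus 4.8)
The plan is to build the $\langle n_0,n_0,n_0;t\rangle$-algorithm by modifying the cancellation step of the $\langle n_0,n_0,n_0;t^{Pan}\rangle$-algorithm of Hadas and Schwartz \cite{hs23} (which has the same rank bound as Pan's \cite{pan82}) so that it acquires $\frac{n_0}{2}+1$ disjoint pairs of kin rows, and then to invoke Lemma \ref{lemma: kin terms reduce multiplications}. Concretely, fix a positive even $n_0\neq 16$. For each $i\in[\frac{n_0}{2}+1]$ the cancellation step must compute the $(i,i)$ trace of Equation (\ref{eq: trace where i=j}); since $n_0\neq 16$ we have $\gamma_{i,i}=1-\frac{9}{n_0/2+1}\neq 0$, so this trace is well defined, and by Pan's observation it may be computed with any $\langle 2,2,2;7\rangle$-algorithm. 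For each such $i$ the choice would be the specific $\langle 2,2,2;7\rangle$-algorithm $\langle U,V,W\rangle$ furnished by Claim \ref{claim: there exist the 2-2-2-7 alg that creates kin row}, namely one with $[U]_0=(-\tfrac{1}{\gamma_{i,i}},1,1,0)$ and $[V]_0=(1,\gamma_{i,i},1,0)$; the existence of such an algorithm is exactly Theorem \ref{theorem: we have 2227 alg with U0=KU and V0=KV} applied to $K_U,K_V$ the reshapings of these two rows. Different diagonal traces may use different $\langle 2,2,2;7\rangle$-algorithms, and the off-diagonal traces keep Pan's choice; this is harmless because the cancellation step is, block by block, an independent application of a $\langle 2,2,2;7\rangle$-scheme.

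With this choice, Corollary \ref{cor: using 2-2-2-7 creates kin rows} (via Claim \ref{claim: first row of multiplication}) says that, for each $i$, the first multiplication of the $i$-th diagonal cancellation block has $A$-combination $-A^\ast_{i,i}+A^\ast_{\bar i,i}+A^\ast_{i,\bar i}$ and $B$-combination $B^\ast_{i,i}+B^\ast_{\bar i,i}+B^\ast_{i,\bar i}$, which coincide with the $A$- and $B$-combinations of the aggregation multiplication in Equation (\ref{eq: aggregation where i=j=k}). Hence the corresponding rows of $U$ and of $V$ are equal for these two multiplications, so the two rows are kin. This produces one kin pair per $i$, i.e. $\frac{n_0}{2}+1$ kin pairs in total.

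The next step is to verify that these $\frac{n_0}{2}+1$ pairs are disjoint, i.e. that the $2(\frac{n_0}{2}+1)$ rows involved sit at pairwise distinct positions. The aggregation rows are produced in Step \ref{step: sum over aggregation tables} and the cancellation rows in Step \ref{step: cancel remaining terms}, so an aggregation row never coincides with a cancellation row; within the aggregation step the $\frac{n_0}{2}+1$ multiplications of Equation (\ref{eq: aggregation where i=j=k}) are genuinely distinct rows, since the $U$-row for index $i$ is supported on the blocks $\{(i,i),(\bar i,i),(i,\bar i)\}$ and the sets $\{i,\bar i\}$, $i\in[\frac{n_0}{2}+1]$, partition $[n_0+2]$, so these supports are pairwise disjoint; the cancellation first-multiplication rows are likewise one per $i$ and hence distinct. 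Feeding these $\frac{n_0}{2}+1$ disjoint kin pairs into Lemma \ref{lemma: kin terms reduce multiplications} yields an $\langle n_0,n_0,n_0; t^{Pan}-(\frac{n_0}{2}+1)\rangle$-algorithm. Finally the arithmetic: substituting $t^{Pan}=\frac{n_0^3}{3}+\frac{15}{4}n_0^2+\frac{32}{3}n_0+9$ and subtracting $\frac{n_0}{2}+1$ replaces $\frac{32}{3}n_0$ by $\frac{64-3}{6}n_0=\frac{61}{6}n_0$ and $9$ by $8$, giving $t=\frac{n_0^3}{3}+\frac{15}{4}n_0^2+\frac{61}{6}n_0+8$, as claimed.

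The main obstacle is not any single hard step but the bookkeeping in the previous paragraph: confirming that the substituted $\langle 2,2,2;7\rangle$-algorithms do not disturb the rest of the construction, and that the $\frac{n_0}{2}+1$ kin pairs are genuinely distinct rows rather than accidental coincidences. The cleanest way to make this airtight is probably to give an explicit trilinear description of the modified algorithm, carrying out the substitution concretely, which I would relegate to an appendix; everything else is quoted from the lemmas and claims above or is the one-line computation just shown.
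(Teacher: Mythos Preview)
Your proposal is correct and follows essentially the same approach as the paper: substitute the $\langle 2,2,2;7\rangle$-algorithm of Claim \ref{claim: there exist the 2-2-2-7 alg that creates kin row} into the diagonal traces of the Hadas--Schwartz cancellation step, invoke Corollary \ref{cor: using 2-2-2-7 creates kin rows} to obtain $\frac{n_0}{2}+1$ kin pairs, and apply Lemma \ref{lemma: kin terms reduce multiplications}. You are in fact more careful than the paper about disjointness of the kin pairs and about the final arithmetic, and your suggestion to relegate the explicit trilinear description to an appendix is exactly what the paper does.
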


\begin{definition}
    The algorithms of the previous theorem will be referred as $\szone$.
\end{definition}

\begin{proof} [Proof of Theorem \ref{thm: num of mults of ALG_1}]
    Let $\left<U, V, W\right>$ be the $\left<n_0, n_0, n_0; t^{Pan}\right>$ described in \cite{hs23}, where the traces in Equation (\ref{eq: trace where i=j}) are computed using the $\left<2,2,2;7\right>$-algorithm of Claim \ref{claim: there exist the 2-2-2-7 alg that creates kin row}. By Corollary \ref{cor: using 2-2-2-7 creates kin rows}, the algorithm has $\frac{n_0}{2}+1$ disjoint pairs of kin rows. Thus, by applying Lemma \ref{lemma: kin terms reduce multiplications}, we obtain an $\left<n_0,n_0,n_0;t\right>$-algorithm, where $t=t^{Pan}-\frac{n_0}{2}-1$.
\end{proof}

\begin{corollary}
    \label{cor: optimal exponent of ALG_1}
    The optimal exponent for $\szone$ is $\omega_0 = \log_{44}36110\approx 2.773203$ and is given at $n_0 = 44$. 
\end{corollary}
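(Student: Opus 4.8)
The plan is to reduce the corollary to a one-variable calculus/optimization exercise over the integers. By Theorem \ref{thm: num of mults of ALG_1}, the $\szone$ family provides an $\left<n_0,n_0,n_0;t^{New}(n_0)\right>$-algorithm with $t^{New}(n_0)=\frac{n_0^3}{3}+\frac{15}{4}n_0^2+\frac{61}{6}n_0+8$ for every even $n_0\neq 16$. Using such an algorithm recursively yields a matrix multiplication algorithm with complexity $O\!\left(n^{\omega_0}\right)$ where $\omega_0=\log_{n_0}t^{New}(n_0)$ (this is the standard fact recorded in the captions of Tables \ref{tab: comparing exponents of small algorithm} and \ref{tab: comparing exponents of all algorithms}, following Claim \ref{claim: composition of algorithms} and Claim \ref{claim: tensor rank and mm}). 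So the statement to prove is simply that the function $g(n_0)=\log_{n_0}t^{New}(n_0)=\frac{\ln t^{New}(n_0)}{\ln n_0}$, restricted to even integers $n_0\neq 16$, attains its minimum at $n_0=44$, with value $\log_{44}36110\approx 2.773203$.

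First I would verify the numerical value: plugging $n_0=44$ into $t^{New}$ gives $t^{New}(44)=\frac{44^3}{3}+\frac{15}{4}\cdot 44^2+\frac{61}{6}\cdot 44+8 = 36110$ (this matches the bold entry of Table \ref{tab: comparing exponents of small algorithm}), and $\log_{44}36110=2.7732026\ldots$, which rounds to $2.773203$. Next, I would treat $g$ as a smooth function of a real variable $x>1$ and locate its critical points. Writing $g(x)=\frac{\ln t(x)}{\ln x}$ with $t(x)=\frac{x^3}{3}+\frac{15}{4}x^2+\frac{61}{6}x+8$, the derivative condition $g'(x)=0$ is equivalent to
\begin{equation*}
\frac{x\,t'(x)}{t(x)}\ln x = \ln t(x),
\end{equation*}
i.e. the logarithmic-derivative identity $\frac{d\ln t}{d\ln x}=\ln_x t(x)$. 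Since $t(x)$ is a cubic with positive coefficients, $\frac{x t'(x)}{t(x)}$ increases monotonically from near $1$ (for small $x$, where the constant term dominates) toward $3$ (as $x\to\infty$, where the cubic term dominates), while $\log_x t(x)$ decreases toward $3$ from above; hence $g'$ has a unique sign change, so $g$ has a unique interior minimum $x^\ast$ on $(1,\infty)$, and one checks $x^\ast\in(43,45)$ by evaluating the sign of $g'$ at those endpoints. Because the only even integer in that interval is $44$ (and $16$ is far away and thus irrelevant), and because $g$ is strictly decreasing for $n_0<x^\ast$ and strictly increasing for $n_0>x^\ast$ among the integers, the minimum over the admissible even integers is attained at $n_0=44$. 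It then suffices to compare $g(44)$ against its neighbors $g(42)$ and $g(46)$ — exactly the adjacent rows of Table \ref{tab: comparing exponents of small algorithm}, which give $2.773258>2.773203$ on both sides — to confirm $n_0=44$ is the unconstrained integer optimum, and a finite check of the handful of even $n_0$ outside $(42,46)$ (or the monotonicity argument just given) rules out any other candidate, including $n_0=16$ were it admissible.

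The only real subtlety — and the step I expect to require the most care — is rigorously pinning down that the real-variable minimizer $x^\ast$ lies strictly between the even integers $42$ and $46$, so that $44$ is forced; this is where one must actually bound $g'$ (equivalently, compare $\frac{x t'(x)}{t(x)}\ln x$ with $\ln t(x)$) at a couple of concrete points rather than wave at asymptotics. Everything else is routine arithmetic with the explicit cubic $t^{New}$, and can be discharged by citing the computed entries of Table \ref{tab: comparing exponents of small algorithm}. Since $16\notin(42,46)$ and is excluded anyway, the excluded-value caveat causes no difficulty.
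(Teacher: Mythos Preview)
Your reduction to minimizing $g(n_0)=\log_{n_0} t^{New}(n_0)$ over even $n_0\neq 16$ is correct, as is the arithmetic at $n_0=44$. The gap is in the unimodality step. You assert that $\log_x t(x)$ ``decreases toward $3$ from above'' and pair this with the increase of $h(x)=\frac{x\,t'(x)}{t(x)}$ toward $3$ to force a unique crossing; but $g(44)\approx 2.773<3$, so $g$ does not stay above $3$, and in any case the critical-point equation you derived is $h(x)=g(x)$, so the ``decreasing'' curve on the right is $g$ itself --- precisely the function whose shape you are trying to pin down, making the argument circular. Checking the sign of $g'$ at $43$ and $45$ only locates \emph{a} critical point in that interval; without unimodality you cannot conclude monotonicity on either side of $x^\ast$, and your fallback of ``a finite check of the handful of even $n_0$ outside $(42,46)$'' does not handle the infinitely many large $n_0$.

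The paper sidesteps calculus entirely: from $t^{New}(n_0)>n_0^3/3$ one gets $g(n_0)>3-\log_{n_0}3$, which for $n_0\ge 3^5=243$ already exceeds $2.8>g(44)$; this confines the search to the finitely many even $n_0<243$ with $n_0\neq 16$, which are then checked directly. Your route is salvageable --- for instance, at any zero $x_0$ of $\phi(x):=h(x)-g(x)$ one has $\phi'(x_0)=h'(x_0)-g'(x_0)=h'(x_0)>0$ (strict since $h$ is strictly increasing for a polynomial with positive coefficients, by Cauchy--Schwarz), which forces $\phi$ to have at most one zero and hence $g$ to be genuinely unimodal --- but the paper's bound-then-enumerate argument is shorter and avoids exactly the subtlety you flagged.
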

For completeness, a proof of this claim is provided in Appendix \ref{app: finding optimal n0}.

This concludes the analysis of $\szone$. For the sake of constructing $\sztwo$ (see Section \ref{sec: using kaporins algorithm}), we account for the number of $\left<2,2,2;7\right>$-algorithms within $\szone$.

\begin{claim}
    \label{claim: the number of 2,2,2;7 algorithms in ALG1}

    Let $\left<U,V,W\right>$ be the encoding and decoding matrices of $\szone$ with base case $n_0$. Then $\left<U,V,W\right>$ contains $\frac{n_0^2}{4}+\frac{n_0}{2}$ computations of $\left<2,2,2;7\right>$-algorithms.
\end{claim}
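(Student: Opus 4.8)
The claim counts the $\langle 2,2,2;7\rangle$-algorithms embedded in $\szone$, so the plan is to trace exactly where $2\times 2$ matrix multiplication appears in the construction of Hadas and Schwartz \cite{hs23} that underlies $\szone$. There are two sources. First, the cancellation step computes the trace in Equation (\ref{eq: trace where i=j}) — and more generally the trace appearing in the displayed sum over $i,j\in\left[\frac{n_0}{2}+1\right]$ — for every pair $(i,j)$. Each such trace is a $2\times 2$ matrix product, computed by a $\langle 2,2,2;7\rangle$-algorithm, contributing $\left(\frac{n_0}{2}+1\right)^2$ instances. Second, I would check whether the aggregation step (Step \ref{step: sum over aggregation tables}) itself invokes any $\langle 2,2,2;7\rangle$-algorithms; in the Hadas–Schwartz description the $2\times 2$ structure is confined to the cancellation step, so the count should come entirely from there. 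Expanding $\left(\frac{n_0}{2}+1\right)^2 = \frac{n_0^2}{4} + n_0 + 1$, which does not match the target $\frac{n_0^2}{4} + \frac{n_0}{2}$, so some bookkeeping correction is needed — most likely the diagonal cases $i=j$ (there are $\frac{n_0}{2}+1$ of them) are treated differently, or the uniting of kin rows in the proof of Theorem \ref{thm: num of mults of ALG_1} effectively merges one multiplication of each diagonal $\langle 2,2,2;7\rangle$-algorithm into the aggregation step, removing it from the count of full $\langle 2,2,2;7\rangle$-blocks.

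Concretely, I would proceed as follows. First I would recall from the construction that the cancellation step consists precisely of the $\left(\frac{n_0}{2}+1\right)^2$ traces indexed by $(i,j)$, each realized by one $\langle 2,2,2;7\rangle$-algorithm, and argue that no other part of $\szone$ contains such a block. Next I would separate the $(i,j)$ with $i\neq j$ from the diagonal $i=j$: the off-diagonal traces (there are $\left(\frac{n_0}{2}+1\right)^2 - \left(\frac{n_0}{2}+1\right)$ of them) remain intact as $\langle 2,2,2;7\rangle$-algorithms in $\szone$. For the diagonal traces, I would invoke Claim \ref{claim: there exist the 2-2-2-7 alg that creates kin row} and the proof of Theorem \ref{thm: num of mults of ALG_1}: in $\szone$ the first multiplication of each diagonal $\langle 2,2,2;7\rangle$-algorithm is united (via Lemma \ref{lemma: kin terms reduce multiplications}) with a multiplication from the aggregation step, so this row no longer belongs to a self-contained $\langle 2,2,2;7\rangle$-block — hence each diagonal trace contributes a "$\langle 2,2,2;6\rangle$-remnant," not a full $\langle 2,2,2;7\rangle$-algorithm. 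I would then need to decide how the claim wishes to count these remnants; the arithmetic $\left(\frac{n_0}{2}+1\right)^2 - \left(\frac{n_0}{2}+1\right) = \frac{n_0^2}{4} + \frac{n_0}{2}$ suggests the intended count is exactly the off-diagonal full blocks, so the diagonal remnants are not counted. Summing gives $\frac{n_0^2}{4}+\frac{n_0}{2}$, as required.

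The main obstacle is pinning down the precise convention: what exactly counts as "a computation of a $\langle 2,2,2;7\rangle$-algorithm" inside the fixed encoding/decoding matrices $\langle U,V,W\rangle$ of $\szone$. After the kin-row uniting, the diagonal blocks literally have only $6$ multiplications surviving, so whether they count depends on whether we identify the block by its original structure or by its residual footprint in $\szone$. I would resolve this by stating the convention explicitly — a $\langle 2,2,2;7\rangle$-computation is a set of seven rows of $\langle U,V,W\rangle$ forming (after possibly splitting a united row back out) a valid $\langle 2,2,2;7\rangle$-algorithm on a $2\times 2$ block of the $A^\ast,B^\ast,C^\ast$ data — and then verifying that exactly the $\frac{n_0^2}{4}+\frac{n_0}{2}$ off-diagonal blocks satisfy it while the $\frac{n_0}{2}+1$ diagonal blocks do not (their would-be seventh row having been absorbed). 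A secondary check, quick but necessary, is confirming that the aggregation tables of \cite{hs23} genuinely contain no hidden $\langle 2,2,2;7\rangle$-structure; this follows by inspecting that the aggregation-step multiplications are indexed by triples $(i,j,k)\in\left[\frac{n_0}{2}+1\right]^3$ and combined only by plain summation, not by any nontrivial bilinear scheme.
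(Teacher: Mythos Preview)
Your proposal is correct and follows essentially the same argument as the paper: start with the $\left(\frac{n_0}{2}+1\right)^2$ trace computations in the cancellation step, observe that the $\frac{n_0}{2}+1$ diagonal ones have had one multiplication united with the aggregation step and hence no longer constitute intact $\langle 2,2,2;7\rangle$-algorithms, and subtract to get $\left(\frac{n_0}{2}+1\right)^2-\left(\frac{n_0}{2}+1\right)=\frac{n_0^2}{4}+\frac{n_0}{2}$. The paper's proof is simply this computation stated in three sentences, without the additional discussion of conventions.
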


\begin{proof}
    $\szone$ computes $\left(\frac{n_0}{2}+1\right)^2$ traces using $\left<2,2,2;7\right>$-algorithms. The first multiplication of $\frac{n_0}{2}+1$ of these algorithms is united with terms from the aggregation step. Thus, they are no longer $\left<2,2,2;7\right>$-algorithms. Therefore, $\left(\frac{n_0}{2}+1\right)^2-\left(\frac{n_0}{2}+1\right)=\frac{n_0^2}{4}+\frac{n_0}{2}$ of the $\left<2,2,2;7\right>$-algorithms remain intact.
\end{proof}

\section{A Trilinear Aggregation Based Family of Algorithms With Better Complexity but Larger Base Case}
\label{sec: using kaporins algorithm}

    In this section, we describe a second family of matrix multiplication algorithms with lower exponents than those of $\szone$, albeit larger base cases (see Table \ref{tab: comparing exponents of all algorithms}). To this end, we consider two recursive calls of the algorithms of $\szone$. We notice that these algorithms contain sub-tensors that are equivalent to $\left<4,4,4;49\right>$-algorithms. We replace these with a $\left<4,4,4;48\right>$-algorithm \cite{kaporin4_4_4_48, alphaevolve, dumas2025noncommutative}. This process is described formally in the following definition:

\begin{definition}
    Let $\left<U, V, W\right>$ be an $\left<m_0, m_0, m_0;t'\right>$-algorithm from the family $\szone$. 
    
    Denote by $\left<U\otimes U, V\otimes V, W\otimes W\right>$ the $\left<m_0^2, m_0^2, m_0^2, t'^2\right>$ algorithm obtained from a composition of $\left<U, V, W\right>$ with itself. Consider the algorithm $\left<U\otimes U, V\otimes V, W\otimes W\right>$ and replace each occurrence of a $\left<4,4,4;49\right>$-algorithm with a $\left<4,4,4;48\right>$-algorithm \cite{kaporin4_4_4_48, alphaevolve, dumas2025noncommutative}. Denote the resulting algorithm by $\sztwo$.
\end{definition}

\begin{theorem} 
    Let $m_0\neq 16$ be an even number and let $n_0=m_0^2$. $\sztwo$ with base case $n_0$ is an $\left<n_0, n_0, n_0; t\right>$-algorithm where $t=\frac{n_0^3}{9} + \frac{5n_0^{2.5}}{2}+\frac{187n_0^2}{9} + \frac{244n_0^3}{1.5} + \frac{1468n_0}{9}+\frac{488\sqrt{n_0}}{3} + 64$ 
\end{theorem}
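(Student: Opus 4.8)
The plan is to read the rank of $\sztwo$ directly off its construction and then reduce the resulting closed form to a polynomial in $n_0$. Set $t' := \frac{m_0^3}{3}+\frac{15}{4}m_0^2+\frac{61}{6}m_0+8$. Since $m_0\neq 16$ is even, Theorem~\ref{thm: num of mults of ALG_1} says $\szone$ with base case $m_0$ is an $\left<m_0,m_0,m_0;t'\right>$-algorithm $\left<U,V,W\right>$. Applying Claim~\ref{claim: composition of algorithms} to $\left<U,V,W\right>$ and itself, $\left<U\otimes U,\,V\otimes V,\,W\otimes W\right>$ is an $\left<m_0^2,m_0^2,m_0^2;t'^2\right>$-algorithm, i.e. a rank-$t'^2$ decomposition of $\left<m_0,m_0,m_0\right>\otimes\left<m_0,m_0,m_0\right>\cong\left<m_0^2,m_0^2,m_0^2\right>$ whose rank-one terms are indexed by pairs $(\ell,\ell')\in[t']\times[t']$, with $\ell$ indexing the outer copy of $\szone$ and $\ell'$ the inner one.

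Next I would locate and count the embedded $\left<4,4,4;49\right>$-algorithms. By Claim~\ref{claim: the number of 2,2,2;7 algorithms in ALG1}, the outer copy of $\szone$ contains $N:=\frac{m_0^2}{4}+\frac{m_0}{2}$ pairwise disjoint index sets $S_1,\dots,S_N\subseteq[t']$, each of size $7$, such that the corresponding $7$ rank-one terms form (an embedded copy of) the $\left<2,2,2\right>$ tensor — these are the $\left<2,2,2;7\right>$-algorithms used for the cancellation-step traces; the inner copy contains $N$ such sets $S'_1,\dots,S'_N$ as well. For any pair $(S_a,S'_b)$, the $49$ rank-one terms $\{(\ell,\ell'):\ell\in S_a,\ \ell'\in S'_b\}$ factor (after reindexing the tensor bases) as the tensor product of an embedded $\left<2,2,2\right>$ with an embedded $\left<2,2,2\right>$, hence form an embedded copy of $\left<2,2,2\right>\otimes\left<2,2,2\right>\cong\left<4,4,4\right>$; this is exactly Claim~\ref{claim: composition of algorithms} instantiated with parameters $\left<2,2,2;7\right>$ and $\left<2,2,2;7\right>$. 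Because the $S_a$ are pairwise disjoint and the $S'_b$ are pairwise disjoint, the product sets $S_a\times S'_b$ are pairwise disjoint as $(a,b)$ ranges over $[N]\times[N]$, so $\left<U\otimes U,\,V\otimes V,\,W\otimes W\right>$ contains $N^2$ embedded $\left<4,4,4;49\right>$-algorithms supported on disjoint index sets, each interacting with the rest of the decomposition only through the bilinear forms (equivalently, the $4\times4$ matrix product) it computes.

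Using $\rk(\left<4,4,4\right>)\le 48$ (a $\left<4,4,4;48\right>$-algorithm exists by \cite{kaporin4_4_4_48, alphaevolve, dumas2025noncommutative}), I would push a rank-$48$ decomposition of $\left<4,4,4\right>$ through each embedding, replacing each $49$-term block by a $\le 48$-term one; disjointness of the blocks makes the $N^2$ replacements simultaneously consistent, and correctness is preserved since each new block computes the same $\left<4,4,4\right>$ tensor. Each replacement drops the total rank by one, so $\sztwo$ with base case $n_0=m_0^2$ is an $\left<n_0,n_0,n_0;t\right>$-algorithm with
\[
t \;=\; t'^2-N^2 \;=\; \left(\tfrac{m_0^3}{3}+\tfrac{15}{4}m_0^2+\tfrac{61}{6}m_0+8\right)^{2}-\left(\tfrac{m_0^2}{4}+\tfrac{m_0}{2}\right)^{2}.
\]
Expanding and substituting $m_0^{k}=n_0^{k/2}$ then yields $t=\frac{n_0^3}{9}+\frac{5n_0^{2.5}}{2}+\frac{187n_0^2}{9}+\frac{244n_0^{1.5}}{3}+\frac{1468n_0}{9}+\frac{488\sqrt{n_0}}{3}+64$, which is the asserted formula; the residual arithmetic (chiefly $\frac{3001}{144}-\frac{1}{16}=\frac{187}{9}$, $\frac{979}{12}-\frac14=\frac{244}{3}$, $\frac{5881}{36}-\frac14=\frac{1468}{9}$) is routine.

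The main obstacle is the middle step: one must verify rigorously that composing an outer cancellation-step $\left<2,2,2;7\right>$-algorithm with an inner one produces, inside $\left<U\otimes U,\,V\otimes V,\,W\otimes W\right>$, a genuinely self-contained $\left<4,4,4;49\right>$-algorithm — so that the linear maps encoding ``entries of the $2\times2$ blocks in terms of the original inputs'' compose cleanly under $\otimes$ and the $49$ relevant rank-one terms really sum to an embedded $\left<4,4,4\right>$ — and that the $N^2$ such sub-algorithms, though all drawn from the common pool of $t'^2$ multiplications, are supported on pairwise disjoint index sets, so the $N^2$ unit savings genuinely add. Once this localisation-and-disjointness claim is in place, the rank bookkeeping $t=t'^2-N^2$ and the polynomial identity are immediate; I would also recall, as in the excerpt, that $m_0$ even and $m_0\neq 16$ is exactly what is needed for $\szone$ (hence $\sztwo$) to be well defined.
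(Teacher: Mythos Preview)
Your proposal is correct and follows essentially the same approach as the paper: both compute $t=t'^2-N^2$ with $N=\frac{m_0^2}{4}+\frac{m_0}{2}$ by observing that the self-composition of $\szone$ contains $N^2$ embedded $\left<4,4,4;49\right>$-algorithms, each replaceable by a $\left<4,4,4;48\right>$-algorithm. If anything, you are more careful than the paper about the disjointness of the $N^2$ blocks (via the product sets $S_a\times S'_b$) and about the arithmetic expansion, and you correctly write the $n_0^{1.5}$ term where the theorem statement has a typo.
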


\begin{proof}
    Let $\left<U, V, W\right>$ be an $\left<n_0, n_0, n_0; t'\right>$-algorithm from the family $\szone$.
    By Claim \ref{claim: the number of 2,2,2;7 algorithms in ALG1}, one call to $\left<U,V,W\right>$ requires $h$ computations of $\left<2,2,2;7\right>$-algorithms, where $h=\frac{m_0^2}{4}+\frac{m_0}{2}$. It follows that one recursive step of $\left<U\otimes U, V\otimes V, W\otimes W\right>$ requires $h^2$ computations of $\left<4,4,4;49\right>$-algorithms. Thus, $\sztwo$ requires $t=t'^2-h^2=\frac{m_0^6}{9} + \frac{5m_0^5}{2}+\frac{187m_0^4}{9} + \frac{244m_0^3}{3} + \frac{1468m_0^2}{9}+\frac{488m_0^3}{3} + 64$ multiplications.
\end{proof}

\begin{corollary}
    \label{cor: optimal exponent of ALG_2}
    The optimal exponent given by $\sztwo$ is $\omega_0 = \log_{1936} 1303676064 \approx 2.773177$ and is obtained for $n_0=44^2=1936$.  
    
    For completeness, a proof of this claim is provided in Appendix \ref{app: finding optimal n0}.\looseness=-1
\end{corollary}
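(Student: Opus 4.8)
The plan is to regard the resulting exponent as a one-variable function of the inner base case $m_0$, bound it below by a simple increasing function for all large $m_0$, and finish with a finite check — the same scheme as the proof of Corollary~\ref{cor: optimal exponent of ALG_1}. Write $t'(m_0)=\tfrac{m_0^3}{3}+\tfrac{15}{4}m_0^2+\tfrac{61}{6}m_0+8$ for the number of multiplications of $\szone$ with base case $m_0$ (Theorem~\ref{thm: num of mults of ALG_1}), and $h(m_0)=\tfrac{m_0^2}{4}+\tfrac{m_0}{2}$ for the number of $\left<2,2,2;7\right>$-algorithms it contains (Claim~\ref{claim: the number of 2,2,2;7 algorithms in ALG1}). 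By the theorem above, $\sztwo$ with base case $n_0=m_0^2$ is an $\left<n_0,n_0,n_0;M(m_0)\right>$-algorithm with $M(m_0)=t'(m_0)^2-h(m_0)^2$, so using it recursively gives complexity $O(n^{\omega_0})$ where
\[
\omega_0(m_0)=\log_{m_0^2}M(m_0)=\tfrac12\log_{m_0}M(m_0).
\]
It suffices to show that over the admissible range (even $m_0\neq 16$) this is minimized at $m_0=44$, where $M(44)=36110^2-506^2=1303676064$ and hence $\omega_0(44)=\tfrac12\log_{44}1303676064\approx 2.773177$.

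The first step disposes of all large $m_0$ at once. Since $t'(m_0)>\tfrac{m_0^3}{3}$ while $h(m_0)^2=O(m_0^4)$ is dominated by $t'(m_0)^2-\tfrac{m_0^6}{9}=\Theta(m_0^5)$, one gets $M(m_0)>\tfrac{m_0^6}{9}$ for every admissible $m_0$, whence
\[
\omega_0(m_0)=\frac{\ln M(m_0)}{2\ln m_0}>\frac{6\ln m_0-\ln 9}{2\ln m_0}=3-\frac{\ln 9}{2\ln m_0}.
\]
The right-hand side increases with $m_0$, so for every admissible $m_0\geq 128$ we have $\omega_0(m_0)>3-\tfrac{\ln 9}{2\ln 128}>2.7735>\omega_0(44)$. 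Hence the minimum of $\omega_0$ is attained among the finitely many even $m_0\neq 16$ with $m_0\leq 126$.

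It then remains to evaluate $\omega_0(m_0)=\tfrac12\log_{m_0}\!\bigl(t'(m_0)^2-h(m_0)^2\bigr)$ at these finitely many values — a direct arithmetic computation, whose relevant neighborhood of the minimum appears in the last column of Table~\ref{tab: comparing exponents of all algorithms} — and to observe that the smallest value occurs at $m_0=44$. Combined with the previous step, this establishes global optimality at $n_0=44^2=1936$, proving the claim. There is no genuine obstacle here: the only point requiring attention is choosing the cutoff in the tail bound so that $3-\tfrac{\ln 9}{2\ln m_0}$ already exceeds $\omega_0(44)$ before the finite range ends (the value $128$ does so, if with a slim margin; a larger cutoff such as $256$ is safer at the cost of a few more cases), and the finite check, while spanning several dozen cases, is purely mechanical and parallels the computation underlying Corollary~\ref{cor: optimal exponent of ALG_1}.
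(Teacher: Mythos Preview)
Your proposal is correct and follows essentially the same approach as the paper: bound $M(m_0)>m_0^6/9$, deduce $\omega_0(m_0)>3-\log_{m_0}3$, use this to discard all large $m_0$, and finish with a finite check. The only cosmetic difference is your choice of cutoff ($m_0\geq 128$) versus the paper's ($m_0>3^5=243$, giving the cleaner bound $3-\tfrac15=2.8$); both are valid.
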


\section{Discussion}
\label{sec: discussion}
We describe two new families of algorithms based on Pan's 1982 algorithms \cite{pan82}. The first (recall Table \ref{tab: comparing exponents of small algorithm}) has the smallest exponents for many even feasible base cases of size $n_0 \geq 28$. One of the algorithms in this family is a $\left<44,44,44; 36110\right>$-algorithm that has complexity of $O\left(n^{2.773203}\right)$, the asymptotically fastest matrix multiplication algorithm with base case size smaller than $1000$. However, its additive complexity is quite large. That is, the leading coefficient (namely, the hidden constant in the $O$-notation) of our algorithm is about $736$, making it impractical. Similarly, Pan's algorithm \cite{pan82} has a leading coefficient of $734$ \cite{hs23}. Hadas and Schwartz \cite{hs23} reduced the leading coefficient of Pan's 1982 algorithms \cite{pan82} to about $8$. We reduce the leading coefficient to about $8$ by using a similar technique (see Table \ref{tab:leading coefficients of algorithms}). We detail the construction and analysis in Appendix \ref{app: sparse decomposition}.\looseness=-1

Our second family of algorithms (recall Table \ref{tab: comparing exponents of all algorithms}) has better exponents, with larger base cases. The best algorithm of that family has base case $1936$ and complexity $O(n^{2.773177})$. This is the best exponent of all known feasible algorithms.

\paragraph{Future Work}
Our $\left<44,44,44;36110\right>$-algorithm has an exponent of $\omega_0\approx 2.773203$. The lower bound of Bl{\"a}ser is $4708$ multiplications \cite{blaser2003_lower_bound}. Similar gaps exists for other dimensions. Further research is needed to close this gap by either finding new algorithms or proving larger bounds.

We improve the additive complexity of our algorithms, and significantly reduce their leading coefficient. This was achieved by using the sparse decomposition method of \cite{bs19, hs23}. However, this method increases IO-complexity. That is, it increases communication costs within memory hierarchy and between parallel processors \cite{bs19}. The leading coefficient of the algorithms can also be reduced using the methods of \cite{ks} and \cite{martensson_number_of_the_beast}, resulting in an improved leading coefficient while preserving communication costs. We leave this for future work.

The next step is to implement and benchmark our algorithms against state-of-the-art implementations of other algorithms. We also leave for future work the parallelization of our algorithms, their adaptation to matrices of arbitrary dimensions, and the study of their numerical stability, which can be handled in a way similar to the works of \cite{noa_pebbling, dumas2024strassen}.

Drevet et al. \cite{drevet2011optimization} use the trilinear aggregation method to construct algorithms for even base cases, and then adapt them to odd base cases. It may be possible to use their method to adapt our algorithms for odd base cases. Generalizing our algorithms for rectangular matrix multiplication is of interest for practical use as well. We leave this for future research.

To obtain base cases with minimal number of multiplications, one can combine several algorithms (cf. Drevet et al. \cite{drevet2011optimization}). For example, to multiply $88\times 88$ matrices with low multiplication count, our method yields an $\left<88, 88, 88; 257100\right>$-algorithm while a composition of Strassen's algorithm with our $\left<44, 44, 44, 36110\right>$-algorithm requires $252770$ multiplications (see Appendix \ref{app: composition of algorithms}). When the base case is not an exact multiple of other algorithms' base cases, techniques such as zero padding \cite{drevet2011optimization} may be used to obtain new improved base cases.
We leave for future research the construction of further algorithms with low exponents by combining our algorithms with existing ones.

Recursive fast matrix multiplication algorithms often switch to the na\"{i}ve algorithm on small blocks, as for smaller blocks, the na\"{i}ve algorithm outperforms the recursive algorithm. Similarly, one can call different fast matrix multiplication algorithms at distinct recursive levels to optimize the complexity \cite{noa_pebbling, moran_2_2_sub_blocks}. This allows reducing the additive complexity. We leave this analysis for future research.

\begin{table}
    \centering
    \begin{tabular}{|c|c|c|c|c|c|c|}
    \hline
    \multirow{3}{*}{Base Case $n_0$} & \multicolumn{3}{c|}{\cite{pan82}} & \multicolumn{3}{c|}{[Here]} \\
    \cline{2-7}
    & \multirow{2}{*}{Exponent $\omega_0$} & \multicolumn{2}{c|}{Leading Coefficient} & \multirow{2}{*}{Exponent $\omega_0$} & \multicolumn{2}{c|}{Leading Coefficient} \\
    \cline{3-4}\cline{6-7}
    && Original & Reduced \cite{hs23} && Original & Reduced \\
    \hline\hline

$20$ & $2.799602$ & $308.098$ & $8.175$ & $2.798764$ & $308.588$ & $8.419$ \\ 
\hline
$30$ & $2.778337$ & $486.410$ & $8.120$ & $2.777967$ & $486.035$ & $8.265$ \\ 
\hline
$40$ & $2.773655$ & $665.605$ & $8.091$ & $2.773450$ & $664.699$ & $8.193$ \\ 
\hline
$42$ & $2.773444$ & $701.493$ & $8.087$ & $2.773258$ & $700.505$ & $8.183$ \\ 
\hline
$44$ & $2.773372$ & $737.392$ & $8.083$ & $2.773203$ & $736.328$ & $8.174$ \\ 
\hline
$46$ & $2.773412$ & $773.301$ & $8.080$ & $2.773258$ & $772.166$ & $8.165$ \\ 
\hline
$48$ & $2.773543$ & $809.217$ & $8.076$ & $2.773403$ & $808.017$ & $8.158$ \\ 
\hline
$50$ & $2.773749$ & $845.141$ & $8.073$ & $2.773620$ & $843.880$ & $8.151$ \\ 
\hline
$60$ & $2.775496$ & $1024.842$ & $8.062$ & $2.775408$ & $1023.328$ & $8.124$ \\ 
\hline
    
    \end{tabular}
    \caption{Comparison of Pan's algorithms \cite{pan82} and our algorithms (Section \ref{sec: improving pans algorithms}). We use the decomposition techniques of Hadas and Schwartz \cite{hs23} to reduce the leading coefficient. Our algorithms obtain leading coefficients that are comparable to \cite{hs23} while having better exponent.}
    
    \label{tab:leading coefficients of algorithms}
\end{table}

{
\printbibliography
}
\newpage
\appendix
\section{$\left<2,2,2;7\right>$-algorithms}
\label{app: 222-7 algorithms}

We next prove Theorem \ref{theorem: we have 2227 alg with U0=KU and V0=KV}, which states the given two invertible matrices $K_U,K_V\in\mathbb{R}^{2\times 2}$, there exists a $\left<2,2,2;7\right>$-algorithm with encoding/decoding matrices $\left<U,V,W\right>$ such that $[U]_0=\vec{K_U}$ and $[V]_0=\vec{K_V}$.

We begin by introduction the inverse of the vectorization operator (Definition \ref{def: vectorization}).

\begin{definition}
    Let $\ell \in \mathbb{N}$ and $v \in \mathbb{F}^{m_0^\ell \cdot  n_0^\ell}$. The \emph{matrix represented by $v$} is $\mathcal{M}_{m_0,n_0}(v)\in \mathbb{F}^{m_0^\ell \times n_0 ^{\ell}}$ satisfying $\overrightarrow{\mathcal{M}_{m_0,n_0}(v)}=v$.
    
    For ease of notation, we omit the parameters $n_0,m_0$ when they are clear from context. To simplify notation further, if $v \in \mathbb{F}^{1\times m_0^\ell\cdot n_0^\ell}$, we denote $\mathcal{M}_{m_0,n_0}(v)=\mathcal{M}_{m_0,n_0}(v^T)$.
\end{definition}

We next introduce the following lemma that allows modifying the first row of $U$ without affecting $W$.
\begin{lemma}
    \label{lemma: modify first row of u only}
    Let $\left<U,V,W\right>$ be a $\left<2,2,2;7\right>$-algorithm such that $\mathcal{M}([U]_0)$ is invertible and let $K\in\mathbb{F}^{2\times 2}$ be an invertible matrix. Then there exists a $\left<2,2,2;7\right>$-algorithm $\left<U',V',W'\right>$ such that $W'=W$ and $[U']_0=\vec{K}$.
    Further, if $\mathcal{M}([V]_0)$ is invertible then $\mathcal{M}([V']_0)$ is also invertible.
\end{lemma}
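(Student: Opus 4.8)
The plan is to exploit the known structure of the de Groote equivalence of $\langle 2,2,2;7\rangle$-algorithms, which says that any two such algorithms are related by a triple of invertible $2\times 2$ changes of basis acting on the three matrix "slots." Concretely, if $\langle U,V,W\rangle$ is a $\langle 2,2,2;7\rangle$-algorithm and $P,Q,R\in\mathbb{F}^{2\times 2}$ are invertible, then $\langle U(Q^{-1}\otimes P), V(R^{-1}\otimes Q), W(P^{-1}\otimes R)\rangle$ (up to the exact convention for how $2\times 2$ basis changes act on vectorized $2\times 2$ matrices, possibly with transposes) is again a $\langle 2,2,2;7\rangle$-algorithm. I want to choose $P,Q,R$ so that the $W$-slot transformation $(P^{-1}\otimes R)$ — or whichever Kronecker factor governs $W$ — is the identity, hence $W'=W$, while the transformation on the $U$-slot is free enough to map $[U]_0$ to the prescribed $\vec{K}$.

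First I would fix the precise form of the de Groote action in the paper's vectorization convention (Definition \ref{def: vectorization}): acting on $A\in\mathbb{F}^{2\times 2}$ by $A\mapsto P A Q^{-1}$ corresponds on vectorized form to $\vec{A}\mapsto (P\otimes Q^{-T})\vec{A}$ or similar, and the requirement $\tr(A^\ast B^\ast C^\ast)=\tr(ABC)$ forces the three basis-change matrices to "telescope." The key observation is that the $U$-slot and $W$-slot basis changes are controlled by \emph{independent} invertible matrices (say $P$ for the row index of $A$, $Q$ for its column index), and $W$ involves $P$ paired with a third matrix $R$ that does not appear in $U$ at all. So I can first set the matrix acting on $W$ to be the identity by choosing the two factors governing $W$ to be mutually inverse, which pins down one relation among $P,Q,R$ but still leaves, effectively, one full invertible $2\times 2$ matrix of freedom acting purely on $U$'s slot. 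Then $[U']_0 = [U]_0 \cdot (\text{that matrix} \otimes I)$ or $[U]_0\cdot(I\otimes\text{that matrix})$, and since $\mathcal{M}([U]_0)$ is invertible by hypothesis, the map sending the free matrix to $[U']_0$ (viewed via $\mathcal{M}$) is a bijection onto invertible $2\times 2$ matrices; hence I can hit $\vec{K}$ exactly. Finally, for the "further" clause, I would track how $[V]_0$ transforms under the same triple: it gets multiplied by a Kronecker product of two invertible matrices, so $\mathcal{M}([V']_0)$ equals $\mathcal{M}([V]_0)$ times invertible factors and therefore stays invertible.

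The main obstacle I anticipate is bookkeeping: getting the exact Kronecker-product / transpose conventions right so that (a) the claimed algorithm really satisfies Brent's equations (equivalently Fact \ref{fact: trilinear form of mm algorithm}), and (b) the freedom left after forcing $W'=W$ is genuinely a full $\mathrm{GL}_2$ acting on $[U]_0$ and not something smaller. A clean way to sidestep delicate index chasing is to phrase everything at the level of the trilinear form: a change of variables $A\mapsto PAQ^{-1}$, $B\mapsto QBR^{-1}$, $C\mapsto RCP^{-1}$ preserves $\tr(ABC)$ for all invertible $P,Q,R$, so composing the given algorithm with this substitution yields a new $\langle 2,2,2;7\rangle$-algorithm whose matrices are $U$, $V$, $W$ post-multiplied by the induced linear maps on the respective vectorized matrix spaces; then $W'=W$ amounts to choosing $R=P$ (so the induced map on $C$'s space is trivial), leaving $P$ — i.e. $Q$ is still free — wait, more carefully: with $R=P$ the map on $A$ is $\vec A\mapsto(\text{stuff in }P,Q)\vec A$ and one checks this realizes every invertible transformation of $[U]_0$ as $Q$ ranges over $\mathrm{GL}_2$ with $P$ chosen appropriately; I would verify this last surjectivity claim explicitly, as it is the crux. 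Once that is in hand, invertibility of $\mathcal{M}([U]_0)$ guarantees $\vec K$ is attainable, and the $V$-clause follows for free.
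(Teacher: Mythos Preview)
Your overall strategy is the paper's: apply a de Groote basis change that fixes $W$ and use the remaining freedom to set $[U]_0$. However, the specific step ``$W'=W$ amounts to choosing $R=P$'' is wrong, and this is the crux you flag as needing verification. With $C\mapsto RCP^{-1}$ the induced linear map on $\vec C$ is $R\otimes P^{-T}$ (using the paper's convention $\overrightarrow{XAY}=(X\otimes Y^T)\vec A$), and $R\otimes P^{-T}=I_4$ forces $P$ and $R$ to be \emph{scalar} multiples of $I_2$, not merely equal; taking $R=P$ non-scalar yields conjugation $C\mapsto PCP^{-1}$, which does change $W$. So after imposing $W'=W$ you do not have two free $\mathrm{GL}_2$ parameters $P,Q$ as your plan assumes, only one.

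The fix is to set $P=R=I$ and vary only the middle matrix, which is exactly what the paper does. It isolates this one-parameter move as a separate claim: for any invertible $S$, the triple $\langle U(I_2\otimes S^T),\,V(S^{-1}\otimes I_2),\,W\rangle$ is again a $\langle 2,2,2;7\rangle$-algorithm, proved via the substitution $A\mapsto AS$, $B\mapsto S^{-1}B$, which leaves $C$ untouched. One then computes $\mathcal{M}([U']_0)=\mathcal{M}([U]_0)\cdot S^T$; since $\mathcal{M}([U]_0)$ is invertible by hypothesis and $S$ ranges over $\mathrm{GL}_2$, a unique $S$ hits the prescribed target $K$. The ``further'' clause follows immediately from $\mathcal{M}([V']_0)=S^{-T}\mathcal{M}([V]_0)$. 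So once the $R=P$ slip is corrected to $P=R=I$, your plan collapses to precisely the paper's argument, and the surjectivity you were worried about becomes the trivial fact that right-multiplication by $\mathrm{GL}_2$ acts transitively on invertible $2\times 2$ matrices.
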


\begin{claim}
    [\cite{strassen1969}]
    \label{claim: Strassens Algorithm} If
    $$U^{Str}=\begin{pmatrix}
        1 & 0 & 0 & 1\\
        0 & 0 & 0 & 1\\
        0 & 1 & 0 & 1\\
        1 & -1 & 0 & 0\\
        1 & 0 & 1 & 0\\
        1 & 0 & 0 & 0\\
        0 & 0 & 1 & -1
    \end{pmatrix},
    V^{Str}=\begin{pmatrix}
        1 & 0 & 0 & 1\\
        1 & 0 & 1 & 0\\
        0 & 0 & -1 & 1\\
        0 & 0 & 0 & 1\\
        1 & -1 & 0 & 0\\
        0 & -1 & 0 & -1\\
        1 & 0 & 0 & 0
    \end{pmatrix},
    W^{Str}=\begin{pmatrix}
    1 & 0 & 0 & 1\\
    -1 & 1 & 0 & 0\\
    -1 & 0 & 0 & 0\\
    -1 & 0 & -1 & 0\\
    0 & 0 & 0 & -1\\
    0 & 0 & -1 & 1\\
    0 & 1 & 0 & 1
    \end{pmatrix}$$
    then $\left<U^{Str}, V^{Str}, W^{Str}\right>$ is a $\left<2,2,2;7\right>$-algorithm. Further, both $\mathcal{M}([U^{Str}]_0)$ and $ \mathcal{M}([V^{Str}]_0)$ are invertible.
\end{claim}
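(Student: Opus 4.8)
The plan is to verify the claim by a direct finite computation, since $\left<U^{Str}, V^{Str}, W^{Str}\right>$ is just Strassen's classical seven-product scheme \cite{strassen1969} transcribed into the encoding/decoding format of Definition~\ref{def: encoding decoding matrices}. Concretely, I would use Fact~\ref{fact: trilinear form of mm algorithm}: it suffices to show that $\tr(ABC) = \left<U^{Str}\vec{A},\, V^{Str}\vec{B},\, W^{Str}\vec{C}\right>$ for all $A,B,C \in \mathbb{F}^{2\times 2}$. Both sides are trilinear forms over an arbitrary field in the twelve entries of $A,B,C$, so the identity holds iff the coefficient of every monomial $A_{ij}B_{jk}C_{ki}$ agrees on the two sides. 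The left side equals $\sum_{i,j,k\in[2]} A_{ij}B_{jk}C_{ki}$ by the trace identity; the right side is $\sum_{r\in[7]} ([U^{Str}]_r\cdot\vec{A})\,([V^{Str}]_r\cdot\vec{B})\,([W^{Str}]_r\cdot\vec{C})$, i.e.\ the sum of Strassen's seven products. Expanding this sum and collecting monomials reproduces $\tr(ABC)$. Equivalently, one may check the $2^6 = 64$ scalar equations of the Brent criterion \cite{brent_equations}, $\left<u_{2k_2+i_1},\, v_{2i_2+j_1},\, w_{2j_2+k_1}\right> = \delta_{i_1,i_2}\delta_{j_1,j_2}\delta_{k_1,k_2}$, on the columns of the three $7\times 4$ matrices. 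In either formulation this is a routine mechanical check, which I would carry out explicitly — for instance as a short product-by-product table — rather than belabor here.

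For the invertibility statements I would simply read off the first rows. We have $[U^{Str}]_0 = (1,0,0,1)$ and $[V^{Str}]_0 = (1,0,0,1)$, so by the vectorization convention of Definition~\ref{def: vectorization} (here with $\ell = 1$, so that the blocks are scalars) both $\mathcal{M}_{2,2}([U^{Str}]_0)$ and $\mathcal{M}_{2,2}([V^{Str}]_0)$ equal the $2\times 2$ identity matrix, which is invertible.

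The only point that needs care — a bookkeeping hazard rather than a genuine obstacle — is keeping all the conventions consistent: the entry ordering fixed by the vectorization operator, the index ranges $i_1,i_2,j_1,j_2,k_1,k_2 \in [2]$ in Brent's equations, and the normalization by which this algorithm outputs $\vec{C^T}$ rather than $\vec{C}$ (Definition~\ref{def: encoding decoding matrices}); it is precisely that built-in transposition that makes the clean trilinear identity of Fact~\ref{fact: trilinear form of mm algorithm} hold. Once these are pinned down the verification is immediate, and one may alternatively just invoke \cite{strassen1969} directly, since the three matrices encode exactly Strassen's $2\times 2$ scheme in the present format.
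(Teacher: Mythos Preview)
Your proposal is correct. The paper does not supply a proof of this claim at all; it is stated with a citation to \cite{strassen1969} and used as a black box in the proof of Theorem~\ref{theorem: we have 2227 alg with U0=KU and V0=KV}. Your plan --- direct verification via Fact~\ref{fact: trilinear form of mm algorithm} (or equivalently the Brent equations) together with the observation that $\mathcal{M}([U^{Str}]_0)=\mathcal{M}([V^{Str}]_0)=I_2$ --- is exactly the routine check the citation stands in for, and your remark that one may alternatively just invoke \cite{strassen1969} matches the paper's own treatment.
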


\begin{proof}[Proof of Theorem \ref{theorem: we have 2227 alg with U0=KU and V0=KV}]
    Let $\left<U^{Str},V^{Str},W^{Str}\right>$ be the $\left<2,2,2;7\right>$-algorithm described in Claim \ref{claim: Strassens Algorithm}. By Lemma \ref{lemma: modify first row of u only}, there exists a triplet $\left<U', V', W'\right>$ representing a $\left<2,2,2;7\right>$-algorithm such that $\mathcal{M}([U'']_0)=K_U$ and $\mathcal{M}([V']_0)$ is invertible. By Corollary \ref{cor: cyclicness of UVW},  $\left<V', U', W'\right>$ is a $\left<2,2,2;7\right>$-algorithm.
    
    Applying Lemma \ref{lemma: modify first row of u only} again we obtain $\left<V, W, U\right>$ representing a $\left<2,2,2;7\right>$-algorithm, with $\mathcal{M}([V]_0)=K_V$ and $\mathcal{M}([U]_0)=\mathcal{M}([U']_0)=K_U$. Finally, by Corollary \ref{cor: cyclicness of UVW}, $\left<U,V,W\right>$ is a $\left<2,2,2;7\right>$-algorithm.
\end{proof}

To prove the lemma, we create $\left<2,2,2;7\right>$-algorithms using the de Groote operator.
Before formally introducing the de Groote operator, we provide a claim regarding the relations between tensor operations and vectorized matrices.

\begin{claim}
    Let $A,K,L\in\mathbb{F}^{n_0 \times n_0}$. Then $(K\otimes L^T)\vec{A}=\overrightarrow{KAL}$.
\end{claim}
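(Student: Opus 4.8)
The claim to prove is the standard identity $(K\otimes L^T)\vec{A}=\overrightarrow{KAL}$, where $\vec{\cdot}$ is the vectorization from Definition \ref{def: vectorization}. The plan is to verify this entrywise by tracking indices carefully, since the paper's vectorization convention is row-major (it lists the blocks $\vec A_{1,1}\dots\vec A_{1,n_0}\dots$, i.e. the first index varies slowest), which differs from the column-major convention under which the more familiar identity $\vec{KAL}=(L^T\otimes K)\vec A$ holds. So the very first thing to pin down is exactly which convention is in force at the base level $\ell=1$: for a scalar-entry matrix $A\in\mathbb F^{n_0\times n_0}$, Definition \ref{def: vectorization} gives $(\vec A)_{n_0 i + j}=A_{i,j}$ for $i,j\in[n_0]$ (0-indexed). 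This is the row-major stacking, and it is the mismatch with the classical formula that forces the $L^T$ rather than $L$ in the Kronecker factor.

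The main computation is then a short index chase. Writing $p=n_0 i+j$ and $q=n_0 k+\ell$ for the row/column multi-indices of the $n_0^2\times n_0^2$ matrix $K\otimes L^T$, the definition of the Kronecker product gives $(K\otimes L^T)_{p,q}=K_{i,k}\,(L^T)_{j,\ell}=K_{i,k}\,L_{\ell,j}$. Hence
\[
\bigl((K\otimes L^T)\vec A\bigr)_{p}=\sum_{q}(K\otimes L^T)_{p,q}(\vec A)_{q}=\sum_{k,\ell}K_{i,k}\,L_{\ell,j}\,A_{k,\ell}=\sum_{k}K_{i,k}\Bigl(\sum_{\ell}A_{k,\ell}L_{\ell,j}\Bigr)=(KAL)_{i,j},
\]
which is exactly $(\overrightarrow{KAL})_{p}$ by the same vectorization convention applied to the matrix $KAL\in\mathbb F^{n_0\times n_0}$. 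Since the two vectors agree in every coordinate $p=n_0i+j$, they are equal, proving the claim.

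The only subtlety — and the step I would be most careful about — is the indexing convention, because getting it backwards would produce $L^T$ versus $L$ (or $K$ versus $K^T$) errors that silently invalidate the downstream use of this claim in constructing the de Groote operator. I would therefore state explicitly at the outset that I am using the 0-indexed identification $(\vec M)_{n_0 i+j}=M_{i,j}$ coming from Definition \ref{def: vectorization} at $\ell=1$, and note that the claim as stated is for matrices with scalar entries (so $\ell=1$), which suffices for its intended application. If one wanted the general block/recursive version it would follow by an identical argument with the scalar products $A_{k,\ell}$ replaced by block products and $\otimes$ understood blockwise, but that is not needed here. Everything else is routine: expand the Kronecker product, swap the order of summation, recognize the matrix product.
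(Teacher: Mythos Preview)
Your proof is correct. The paper takes a slightly different but equally elementary route: rather than a direct entrywise index chase, it invokes linearity to reduce to the case $A=e_ie_j^T$, observes that then $KAL=k_i\,l_j^T$ (with $k_i$ the $i$-th column of $K$ and $l_j$ the $j$-th column of $L^T$) so that $\overrightarrow{KAL}=k_i\otimes l_j$, and separately notes that $(K\otimes L^T)\vec A$ is the $(n_0 i+j)$-th column of $K\otimes L^T$, which is again $k_i\otimes l_j$. Your explicit double-sum computation and the paper's basis-reduction are really two presentations of the same verification; yours has the minor advantage of making transparent exactly why the row-major convention forces $L^T$ (rather than $L$) in the second Kronecker factor, which is the point you rightly flag as the only genuine subtlety.
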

\begin{proof}
    Since $K\otimes L^T$ is a linear transformation, it is sufficient to show the correctness of the claim for $A=e_ie_j^T$ for all $i,j\in[n_0]$. Let $\vec{A} = e_i \otimes e_j$. Then $KAL=k_il_j^T$ where $k_i$ is the $i$-th column of $K$ and $l_j$ is the $j$-th column of $L^T$. This implies $\overrightarrow{KAL}=k_i\otimes l_j$.
    
    In addition, $(K\otimes L^T)\vec{A}$ is the $n_0\cdot i + j$-th column of $K\otimes L^T$. By the definition of $K\otimes L^T$, this is also $k_i \otimes l_j$.
\end{proof}
\begin{corollary}
    \label{cor: multiply from the left or right by K}
    Let $A,K\in\mathbb{F}^{n_0\times n_0}$. Then
    \begin{align*}
        \overrightarrow{AL} &= \left(I_{n_0} \otimes L^T \right)\vec{A} \\
        \overrightarrow{KA} &= \left(K\otimes I_{n_0}\right)\vec{A}
    \end{align*}
\end{corollary}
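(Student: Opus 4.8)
The plan is to derive both identities as immediate specializations of the preceding claim, which asserts that $(K \otimes L^T)\vec{A} = \overrightarrow{KAL}$ for all $A, K, L \in \mathbb{F}^{n_0 \times n_0}$. There is essentially no work beyond substituting the identity matrix for one of the two factors.

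First I would prove the first identity. Apply the claim with the role of ``$K$'' played by $I_{n_0}$ and the role of ``$L$'' played by $L$. This gives $(I_{n_0} \otimes L^T)\vec{A} = \overrightarrow{I_{n_0} A L} = \overrightarrow{AL}$, which is exactly what is wanted. Next I would prove the second identity. Apply the claim with the role of ``$L$'' played by $I_{n_0}$ and the role of ``$K$'' played by $K$. Since $I_{n_0}^T = I_{n_0}$, this gives $(K \otimes I_{n_0})\vec{A} = \overrightarrow{K A I_{n_0}} = \overrightarrow{KA}$, as desired.

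There is no real obstacle here; the only point worth a moment of care is the transpose in the second case, namely observing that $I_{n_0}^T = I_{n_0}$ so that the Kronecker factor appearing in the claim is literally $I_{n_0}$ and not some other matrix. Everything else is a direct quotation of the claim, so the corollary follows at once.

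\begin{proof}
    Both identities follow from the preceding claim, which states that $(K' \otimes L'^T)\vec{A} = \overrightarrow{K' A L'}$ for all $A, K', L' \in \mathbb{F}^{n_0 \times n_0}$.

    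For the first identity, take $K' = I_{n_0}$ and $L' = L$ in the claim:
    \[
        \left(I_{n_0} \otimes L^T\right)\vec{A} = \overrightarrow{I_{n_0} A L} = \overrightarrow{AL}.
    \]
    For the second identity, take $K' = K$ and $L' = I_{n_0}$ in the claim, and note that $I_{n_0}^T = I_{n_0}$:
    \[
        \left(K \otimes I_{n_0}\right)\vec{A} = \left(K \otimes I_{n_0}^T\right)\vec{A} = \overrightarrow{K A I_{n_0}} = \overrightarrow{KA}. \qedhere
    \]
\end{proof}
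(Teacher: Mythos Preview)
Your proof is correct and takes exactly the same approach as the paper: the corollary is an immediate specialization of the preceding claim $(K\otimes L^T)\vec{A}=\overrightarrow{KAL}$ with one factor set to the identity. The paper does not even spell out the proof, since the substitution you perform is the only thing there is to do.
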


We now introduce the de Groote operator.

\begin{claim}[\cite{DEGROOTE1978pt1}] \label{claim:version of algorithm}
    Let $n\in\mathbb{N}$, Let $\left<U, V, W\right>$ be a $\left<n_0,n_0,n_0;t\right>$-algorithm. Let $K\in\mathbb{F}^{n_0\times n_0}$ be an invertible matrix.
    Let $U'=U\cdot(I_{n_0} \otimes K^T),V'=V\cdot(K^{-1}\otimes I_n)$ and $W'=W$.
    
    Then $\left<U', V', W'\right>$ is also a $\left<n_0,n_0,n_0;t\right>$-algorithm.
\end{claim}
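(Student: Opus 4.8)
The plan is to verify the trilinear characterization of Fact \ref{fact: trilinear form of mm algorithm}: since $U'$, $V'$, $W'$ have exactly the same number of rows $t$ as $U$, $V$, $W$, the claimed algorithm performs $t$ multiplications, so it suffices to show that
$$\tr(ABC) = \left<U'\cdot\vec{A},\, V'\cdot\vec{B},\, W'\cdot\vec{C}\right>$$
for all $A,B,C\in\mathbb{F}^{n_0\times n_0}$, and then invoke the converse direction of Fact \ref{fact: trilinear form of mm algorithm}.

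First I would rewrite the encoded inputs using Corollary \ref{cor: multiply from the left or right by K}. From $U' = U\cdot(I_{n_0}\otimes K^T)$ we get $U'\cdot\vec{A} = U\cdot\bigl((I_{n_0}\otimes K^T)\vec{A}\bigr) = U\cdot\overrightarrow{AK}$; from $V' = V\cdot(K^{-1}\otimes I_{n_0})$ we get $V'\cdot\vec{B} = V\cdot\bigl((K^{-1}\otimes I_{n_0})\vec{B}\bigr) = V\cdot\overrightarrow{K^{-1}B}$; and $W'\cdot\vec{C} = W\cdot\vec{C}$ trivially. Now applying Fact \ref{fact: trilinear form of mm algorithm} to the original algorithm $\left<U,V,W\right>$ with the (shape-compatible) input matrices $AK$, $K^{-1}B$, $C$ gives
$$\left<U\cdot\overrightarrow{AK},\, V\cdot\overrightarrow{K^{-1}B},\, W\cdot\vec{C}\right> = \tr\bigl((AK)(K^{-1}B)C\bigr).$$
Finally, invertibility of $K$ yields $AKK^{-1}B = AB$, hence $\tr\bigl((AK)(K^{-1}B)C\bigr) = \tr(ABC)$. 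Chaining these equalities gives exactly the required identity, so $\left<U',V',W'\right>$ is an $\left<n_0,n_0,n_0;t\right>$-algorithm.

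I do not anticipate a real obstacle; the only points requiring care are (i) matching the Kronecker-product conventions of Corollary \ref{cor: multiply from the left or right by K} — right multiplication by $L$ corresponds to $I_{n_0}\otimes L^T$, left multiplication by $K$ to $K\otimes I_{n_0}$ — with the placements of $K^T$ and $K^{-1}$ in the definitions of $U'$ and $V'$, and (ii) reading the $I_n$ in the statement as $I_{n_0}$. Alternatively one could substitute the definitions of $U',V',W'$ directly into Brent's equations, but the trilinear-trace route above is shorter and avoids index bookkeeping.
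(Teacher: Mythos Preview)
Your proposal is correct and follows essentially the same approach as the paper: both arguments use Corollary~\ref{cor: multiply from the left or right by K} to rewrite $(I_{n_0}\otimes K^T)\vec{A}=\overrightarrow{AK}$ and $(K^{-1}\otimes I_{n_0})\vec{B}=\overrightarrow{K^{-1}B}$, then cancel $KK^{-1}$. The only cosmetic difference is that the paper verifies the bilinear identity of Definition~\ref{def: encoding decoding matrices} (computing $W^T((U'\vec{A})\odot(V'\vec{B}))=\overrightarrow{(AB)^T}$) rather than the equivalent trilinear identity of Fact~\ref{fact: trilinear form of mm algorithm} that you use.
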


\begin{proof}
    Let $A,B\in\mathbb{F}^{n_0\times n_0}$. Then
    \begin{align*}  
        W^T\cdot(U'A \odot V'B) & = W^T \left(U (I_{n_0} \otimes K^T) \vec{A} \odot V(K^{-1} \otimes I_{n_0})\vec{B}\right) \\
        & = W^T \left( U\left(\overrightarrow{AK}\right)\odot V\left(\overrightarrow{K^{-1}B}\right)\right) \\
        & = \overrightarrow{(AKK^{-1}B)^T} = \overrightarrow{(AB)^T}
    \end{align*}
    where the second equality follows from Corollary \ref{cor: multiply from the left or right by K}.
\end{proof}

\begin{proof}[Proof of Lemma \ref{lemma: modify first row of u only}]
    Denote $T_U=\mathcal{M}([U]_0)$. Let $R=K^T\cdot T_U^{-T}$. By assumption, $T_U, K$ are invertible and therefore $R$ is invertible. 
    We define $$\left<U',V',W'\right>=\left<U \cdot (I_2 \otimes R^T), V\cdot (R^{-1}\otimes I_2), W\right>$$

    By Claim \ref{claim:version of algorithm}, $\left<U', V' ,W'\right>$ is a $\left<2,2,2;7\right>$-algorithm.

    We now show that $[U']_0=\vec{K}^T$.
    Notice that $[U']_0^T = (I_2\otimes R)[U]_0^T$. By Corollary \ref{cor: multiply from the left or right by K}, $\mathcal{M}([U']_0)=\mathcal{M}([U]_0)R^T$. Thus, $\mathcal{M}([U']_0)=K$.
 
    We now assume that $\mathcal{M}([V_0])$ is invertible and show that $\mathcal{M}([V_0'])$ is invertible.
    Recall that $[V']_0^T =(R^{-T} \otimes I_2) [V]_0^T$. By Corollary \ref{cor: multiply from the left or right by K}, $\mathcal{M}([V']_0) = R^{-T} \mathcal{M}([V]_0)$.    
    Recall that $R$ is invertible.
    Thus, $\mathcal{M}([V']_0)$ is invertible.

\end{proof}

\section{Explicit Description of Our Algorithms}
\label{app: explicit description of algorithms}
For completeness, we provide a description of the trilinear aggregation based algorithms described in Section \ref{sec: improving pans algorithms}. The description is provided in the trilinear form (see Fact \ref{fact: trilinear form of mm algorithm}). For a more detailed description, please see \cite{pan82, hs23}. We use the version and notations of \cite{hs23}. The algorithm consists of three stages: transformation, aggregation and cancellation.

Begin by transforming the three matrices $A,B,C$. As Pan notes \cite{pan82}, the transformation is not unique. We explicitly describe one possible transformation using the notations of \cite{kaporin_implementation}.

Let $u\in\mathbb{F}^{\frac{n_0}{2}+1}$ be the all $1$s vector. Let
\begin{align*}
L&=\left(\begin{array}{c}
     I_{\frac{n_0}{2}}\\
     \hline
     -u^T
\end{array}\right) \\
R&=\left(\begin{array}{c|c}
     I_{\frac{n_0}{2}} - \frac{1}{\frac{n_0}{2}+1}uu^T& - \frac{1}{\frac{n_0}{2}+1}u  
\end{array}\right)
\end{align*}

First, we define the transformation $\varphi(X)=(I_2\otimes L)X(I_2 \otimes R)$ and compute
\begin{align*}
    A^\ast &= \varphi(A) \\ 
    B^\ast &= \varphi(B) \\
    C^\ast &= \varphi(C)
\end{align*}

Next, define 
\begin{align*}
    \dot{S} &= \left\{(i,j,k),(\bar{i},\bar{j},\bar{k}):i,j,k\in\left[\frac{n_0}{2}+1\right]\right\} \\
    \dot{S}^1 &= \left\{(i,j,k)\in \left[\frac{n_0}{2}+1\right]^3:i=j=k\right\} \\
    \dot{\hat{S}} &= \left\{(i,j,k), (\bar{i}, \bar{j},\bar{k})\in \left[\frac{n_0}{2}+1\right]^3:i\leq j < k \text{ or } k<j\leq i \right\} \\
    \tilde{S} &= \left\{(i,j)\in \left[\frac{n_0}{2}+1\right]^2:i\neq j\right\} \\
    \tilde{S}^1 &= \left\{(i, i): i\in\left[\frac{n_0}{2}+1\right]\right\} \\
    \gamma &= 1-\frac{9}{\frac{n_0}{2}+1} \\
    d &= \frac{n_0}{2} + 1
\end{align*}

And 
\begin{align*}
    \mathbf{R}(i) &= \left(A^{\ast}_{\bar{i},i} + A^{\ast}_{i,\bar{i}} - A^{\ast}_{i,i}\right) \cdot \left(B^{\ast}_{\bar{i},i} + B^{\ast}_{i,\bar{i}} + B^{\ast}_{i,i}\right)  \\& \qquad\qquad\qquad\qquad\quad\cdot \left(\frac{C^{\ast}_{\bar{i},\bar{i}} d \left(1 - \gamma\right)}{\gamma} - \frac{C^{\ast}_{\bar{i},i} \left(\gamma - d\right)}{\gamma} - \frac{C^{\ast}_{i,\bar{i}} \left(- \gamma + d\right)}{\gamma} + C^{\ast}_{i,i} \left(1 - d\right)\right) \\
    &+\left(A^{\ast}_{i,\bar{i}}\right) \cdot \left(\frac{B^{\ast}_{\bar{i},\bar{i}} \left(- \gamma - 1\right)}{\gamma} - \frac{B^{\ast}_{\bar{i},i}}{\gamma} + B^{\ast}_{i,\bar{i}} \left(1 - \frac{1}{\gamma^{2}}\right) + \frac{B^{\ast}_{i,i} \left(\gamma - 1\right)}{\gamma}\right) \\& \qquad\qquad\qquad\qquad\qquad\qquad\qquad\qquad\qquad\qquad\qquad\cdot \left(C^{\ast}_{\bar{i},\bar{i}} d + C^{\ast}_{\bar{i},i} d + \frac{C^{\ast}_{i,\bar{i}} d}{\gamma} + C^{\ast}_{i,i} d\right) \\
    &+ \left(A^{\ast}_{i,\bar{i}} + A^{\ast}_{i,i} \gamma\right) \cdot \left(\frac{B^{\ast}_{\bar{i},\bar{i}} \left(\gamma + 1\right)}{\gamma} + \frac{B^{\ast}_{\bar{i},i} \left(\gamma + 1\right)}{\gamma} + \frac{B^{\ast}_{i,\bar{i}}}{\gamma^{2}} + \frac{B^{\ast}_{i,i}}{\gamma}\right) \cdot \left(\frac{C^{\ast}_{i,\bar{i}} d}{\gamma} + C^{\ast}_{i,i} d\right) \\
    &+ \left(A^{\ast}_{\bar{i},i} + A^{\ast}_{i,i} \left(- \gamma - 1\right)\right) \cdot \left(B^{\ast}_{\bar{i},\bar{i}} + B^{\ast}_{\bar{i},i} + \frac{B^{\ast}_{i,\bar{i}}}{\gamma} + B^{\ast}_{i,i}\right) \cdot \left(\frac{C^{\ast}_{i,\bar{i}} d}{\gamma^{2}} + C^{\ast}_{i,i} \left(d + \frac{d}{\gamma}\right)\right) \\
    &+ \left(A^{\ast}_{\bar{i},\bar{i}} + A^{\ast}_{\bar{i},i} - \frac{A^{\ast}_{i,\bar{i}}}{\gamma} - A^{\ast}_{i,i}\right) \cdot \left(B^{\ast}_{\bar{i},\bar{i}} \left(- \gamma - 1\right) - \frac{B^{\ast}_{i,\bar{i}}}{\gamma}\right) \cdot \left(\frac{C^{\ast}_{\bar{i},\bar{i}} d \left(\gamma - 1\right)}{\gamma} - \frac{C^{\ast}_{\bar{i},i} d}{\gamma}\right) \\
    &+ \left(A^{\ast}_{\bar{i},i} - A^{\ast}_{i,i}\right) \cdot \left(B^{\ast}_{\bar{i},\bar{i}} \left(- \gamma - 1\right) - B^{\ast}_{\bar{i},i} + \frac{B^{\ast}_{i,\bar{i}} \left(- \gamma - 1\right)}{\gamma} - B^{\ast}_{i,i}\right) \\&\qquad\qquad\qquad\qquad\qquad\qquad\cdot \left(\frac{C^{\ast}_{\bar{i},\bar{i}} d \left(1 - \gamma\right)}{\gamma} + \frac{C^{\ast}_{\bar{i},i} d}{\gamma} - \frac{C^{\ast}_{i,\bar{i}} d \left(\gamma - 1\right)}{\gamma^{2}} + \frac{C^{\ast}_{i,i} d}{\gamma}\right) \\
    &+ \left(A^{\ast}_{\bar{i},\bar{i}} + \frac{A^{\ast}_{i,\bar{i}} \left(- \gamma - 1\right)}{\gamma}\right) \cdot \left(- B^{\ast}_{\bar{i},\bar{i}} + \frac{B^{\ast}_{i,\bar{i}} \left(\gamma - 1\right)}{\gamma}\right) \cdot \left(\frac{C^{\ast}_{\bar{i},\bar{i}} d}{\gamma} - C^{\ast}_{\bar{i},i} \left(- d - \frac{d}{\gamma}\right)\right)
\end{align*}

Finally, compute
\begin{align*}
    \tr(ABC) &= \sum_{(i,j,k)\in \dot{\hat{S}}} (A^\ast_{i,j} + A^\ast_{j,k} + A^\ast _{k,i})(B^\ast_{j,k} + B^\ast _{k,i} + B^\ast _{i,j})(C^\ast _{k,i} + C^\ast_{i,j} + C^\ast_{j,k}) \\
    &+ \sum_{(i,j,k)\in \dot{S} \setminus \dot{S}^1} (-A^\ast_{i,j} + A^\ast_{\bar{j},k} + A^\ast _{k,\bar{i}})(B^\ast_{j,\bar{k}} + B^\ast _{k,i} + B^\ast _{\bar{i},j})(-C^\ast _{\bar{k},i} + C^\ast_{i,\bar{j}} + C^\ast_{j,k}) \\
    & - \left(\frac{n_0}{2}+1\right)\sum_{(i,j)\in \tilde{S} \setminus \tilde{S}^1} \tr\left(
    \begin{pmatrix}
        A^\ast_{i,j} & A^\ast_{\bar{i}, j} \\ A^\ast_{i, \bar{j}} & A^\ast_{\bar{i},\bar{j}}
    \end{pmatrix}
    \begin{pmatrix}
        B^\ast_{i,j} & B^\ast_{\bar{i}, j} \\ B^\ast_{i, \bar{j}} & B^\ast_{\bar{i},\bar{j}}
    \end{pmatrix}
    \begin{pmatrix}
        C^\ast_{i,j} & -C^\ast_{\bar{i}, j} \\ -C^\ast_{i, \bar{j}} & C^\ast_{\bar{i},\bar{j}}
    \end{pmatrix}
    \right) \\
    & + \sum_{i\in\left[\frac{n_0}{2}+1\right]} \mathbf{R}(i)
\end{align*}
where the traces in the third row may be computed using any $\left<2,2,2;7\right>$-algorithm.

\section{Finding the Optimal Base Case Size}
\label{app: finding optimal n0}
In Sections \ref{sec: improving pans algorithms} and \ref{sec: using kaporins algorithm} we introduce two new families of algorithms, and introduce formulas for the asymptotic complexity, as a function of the base case $n_0$. We claim that the first and second families obtain an optimal exponent at $n_0=44$ and $n_0=1936$, respectively. For completeness, we provide formal proofs for these claims.

\begin{proof}[Proof of Corollary \ref{cor: optimal exponent of ALG_1}]
    To find the minimum of $f(n_0)=\log_{n_0}\left(\frac{n_0^3}{3} + \frac{15}{4}n_0^2 + \frac{61}{6}n_0 + 8\right)$ where $n_0\neq 16$ is a positive even number, we first show that the minimum is obtained for some $n_0 < 243$. Then, we search all possible even values for $n_0\neq 16$ where $0<n_0<243$ and conclude that the minimum is $f(44)\approx2.773203$.

    For every even $n_0 > 3^5=243$, $$f(n_0) > \log_{n_0}\frac{n_0^3}{3} = 3 - \log_{n_0}3 \geq 3-\log_{3^5}3=2.8>f(44)$$ Thus, the minimum is obtained for some $n_0 < 243$. Checking all even values of $n_0\neq 16$ where $0<n_0<243$, we see that the minimum is indeed obtained for $n_0=44$.
\end{proof}

\begin{proof}[Proof of Corollary \ref{cor: optimal exponent of ALG_2}]
    
\end{proof}
    To find the minimum of $f(m_0)=\log_{m_0^2}\left(\frac{m_0^6}{9} + \frac{5m_0^5}{2}+\frac{187m_0^4}{9} + \frac{244m_0^3}{3} + \frac{1468m_0^2}{9}+\frac{488m_0^3}{3} + 64\right)$ where $m_0$ is a positive even number, we first show that the minimum is obtained for some $m_0 < 243$. Then, we search all possible even values for $n_0\neq 16$ where $0<m_0<243$ and conclude that the minimum is $f(44)\approx2.773177$.

    For every even $m_0 > 3^5=243$, $$f(m_0) > \log_{m_0^2}\frac{m_0^6}{9} = 3 - \log_{m_0}3 \geq 3-\log_{3^5}3=2.8>f(44)$$ Thus, the minimum is obtained for some $m_0 < 243$. Checking all even values of $m_0\neq 16$ where $0<m_0<243$, we see that the minimum is indeed obtained for $n_0=44$.

\section{Additive Complexity: Reducing the Leading Coefficient}
\label{app: sparse decomposition}
We next describe how to reduce the leading coefficient of the algorithms by improving the additive complexity. We begin by presenting the key ideas behind the sparse decomposition technique \cite{bs19}. This technique is a generalization of the alternative basis method \cite{ks}. 

\subsection{Decomposed Bilinear Algorithms}
\begin{definition}[\cite{bs19}]
A decomposed recursive bilinear algorithm $\left<U_\phi, V_\psi, W_\nu\right>_{\phi,\psi,\nu}$ has the following form:
\begin{enumerate}
    \item Apply fast basis transformations $\phi, \psi$ on inputs $A,B$. Denote the transformed inputs as $\tilde{A},\tilde{B}$.
    \item Apply a recursive bilinear algorithm $\left<U_\phi, V_\psi, W_\nu\right>$ on $\tilde{A},\tilde{B}$ and obtain $\tilde{C}$.
    \item Apply a fast basis transformation $\nu^T$ to $\tilde{C}$ to obtain the result $C$.
\end{enumerate}

For ease of notation, if $\phi=\psi=\nu$ we simply denote $\left<U_\phi, V_\phi, W_\phi\right>_\phi$.
\end{definition}

\begin{definition}[\cite{bs19}]
    Let $\varphi_1:\mathbb{R}^{s_1}\to\mathbb{R}^{s_2}$ be a linear transformation. Let $l\in\mathbb{N}$ and denote $S_1=(s_1)^l,S_2=(s_2)^l$. Let $v\in\mathbb{R}^{s_1}$. Denote $v^{(i)}=\begin{pmatrix}
        v_{\frac{S_1}{s_1}(i)},\dots,v_{\frac{S_1}{s_1}(i+1)-1}
    \end{pmatrix}$. The linear map $\varphi_l(v)$ is recursively defined as
    $$\varphi_l(v)=\begin{pmatrix}
        \varphi_{l-1}(v^{(0)}) \\ \vdots \\ \varphi_{l-1}(v^{(s_1-1)})
    \end{pmatrix}$$
\end{definition}

\begin{claim}[\cite{bs19}]
    Let $\left<U,V,W\right>$ be a matrix multiplication algorithm. Let $\left<U_\phi, V_\psi, W_\nu\right>_{\phi,\psi,\nu}$ be a decomposed recursive bilinear algorithm. If $U=U_\phi\phi, V=V_\psi\psi$, and $W=W_\nu\nu$ then $\left<U_\phi, V_\psi, W_\nu\right>_{\phi,\psi,\nu}$ computes matrix multiplication.
\end{claim}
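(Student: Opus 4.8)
The statement to prove is the final claim: if $\left<U,V,W\right>$ is a matrix multiplication algorithm and $\left<U_\phi, V_\psi, W_\nu\right>_{\phi,\psi,\nu}$ is a decomposed recursive bilinear algorithm with $U=U_\phi\phi$, $V=V_\psi\psi$, $W=W_\nu\nu$, then the decomposed algorithm computes matrix multiplication.

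\medskip

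\textbf{Proof proposal.} The plan is to unwind the definition of the decomposed recursive bilinear algorithm and show that the basis transformations telescope away, leaving exactly the original algorithm $\left<U,V,W\right>$, which is assumed to compute matrix multiplication. First I would set up notation: fix the recursion depth $\ell$, let $A, B$ be the input matrices, and write $\tilde A = \phi_\ell(\vec A)$, $\tilde B = \psi_\ell(\vec B)$ for the transformed inputs, where $\phi_\ell, \psi_\ell$ are the recursively-lifted transformations from the preceding definition. The decomposed algorithm runs $\left<U_\phi, V_\psi, W_\nu\right>$ recursively on $\tilde A, \tilde B$ to get $\tilde C$, then applies $\nu_\ell^T$ to recover $C$. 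The core identity I need is that one recursive step of $\left<U_\phi, V_\psi, W_\nu\right>$ satisfies $\vec{\tilde C^T} = W_\nu^T\!\left((U_\phi \cdot \vec{\tilde A}) \odot (V_\psi \cdot \vec{\tilde B})\right)$ with the recursion applied blockwise to the Hadamard factors.

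\medskip

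The key step is a commutation/telescoping argument: because $\phi_\ell$ is defined as the blockwise recursive application of $\phi$, multiplication by $U_\phi$ followed by recursing, then transforming, is the same as first transforming globally by $\phi_\ell$ and then applying the untransformed $U = U_\phi \phi$ at each level. Concretely, I would prove by induction on $\ell$ that the output of the depth-$\ell$ decomposed algorithm on inputs $A, B$ equals the output of the depth-$\ell$ ordinary algorithm $\left<U, V, W\right>$ on $A, B$. The base case $\ell = 0$ (or $\ell = 1$, the flat bilinear algorithm) is immediate from $U = U_\phi \phi$, $V = V_\psi \psi$, $W = W_\nu \nu$: substituting gives $W_\nu^T((U_\phi \phi \vec A)\odot(V_\psi\psi\vec B))\cdot$ — wait, we need $W_\nu\nu$, so the output is $(W_\nu\nu)^T(\cdots) = \nu^T W_\nu^T(\cdots)$, which matches ``transform inputs by $\phi,\psi$, run $\left<U_\phi,V_\psi,W_\nu\right>$, transform output by $\nu^T$''. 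For the inductive step I would use that $\phi_\ell$ acts as $\phi$ on the outer block structure and $\phi_{\ell-1}$ on each sub-block, together with the fact that the recursive bilinear algorithm's structure respects this block decomposition (the Hadamard product and the encoding/decoding matrices act at the outer level, the recursion handles sub-blocks), so the induction hypothesis applies to each sub-problem.

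\medskip

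The main obstacle I anticipate is bookkeeping: correctly matching the block/index structure of $\phi_\ell$ (as defined via the $v^{(i)}$ splitting) with the recursive structure implicit in how $\left<U,V,W\right>$ is applied recursively, and in particular verifying that $U_\phi \phi$ lifted to depth $\ell$ really does factor as ``global $\phi_\ell$, then $U_\phi$ at each node of the recursion tree.'' This requires care about the order of the Kronecker-type factorizations and the vectorization convention of Definition~\ref{def: vectorization}, but it is a routine (if tedious) verification once the induction is set up; there is no deep difficulty, which is presumably why the authors state it as a claim citing \cite{bs19}. I would organize the writeup so that the block-commutation lemma for $\phi_\ell$ is stated and proved first, and then the main induction is a short application of it together with the hypotheses $U = U_\phi\phi$, $V = V_\psi\psi$, $W = W_\nu\nu$ and the assumption that $\left<U,V,W\right>$ computes matrix multiplication.
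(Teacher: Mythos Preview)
The paper does not actually prove this claim: it is stated with the citation \cite{bs19} and no proof is given; the paper moves directly to the next subsection. So there is no ``paper's own proof'' to compare against.

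That said, your approach is correct and is the standard one (and is essentially what \cite{bs19} does): establish the depth-$1$ identity from $U=U_\phi\phi$, $V=V_\psi\psi$, $W=W_\nu\nu$, then induct on the recursion depth using the block structure of $\phi_\ell,\psi_\ell,\nu_\ell$ to show that the basis transformations commute through the recursion and telescope. Your identification of the bookkeeping (matching the recursive definition of $\phi_\ell$ to the block structure of the bilinear recursion under the vectorization convention) as the only real work is accurate; there is no missing idea.
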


\subsection{Analysis}
We next analyze the additive complexity of our decomposed recursive bilinear algorithm. To this end, we separately analyze the two components making up the decomposed recursive bilinear algorithm: the fast transformations, and the recursive bilinear part. In our case it is sufficient to analyze in the case where $\phi=\psi=\nu$, as we will use the same transformation for both encoding and decoding. We provide the required definitions and claims, taken from \cite{bs19}.

\begin{definition}[\cite{bs19}]
    Let $U\in\mathbb{R}^{t\times r}$. We define the number of non-zeros ($nnz$), the number of non-singletons ($nns$), the number of rows ($nrows$), and the number of columns ($ncols$) as follows:
    \begin{align*}
        nnz(U) &= \left|\left\{(i,j)\in[t]\times [r] : U_{i,j} \neq 0\right\}\right| \\
        nns(U) &= \left|\left\{(i,j)\in[t]\times [r] : U_{i,j} \notin \left\{-1,0,1\right\}\right\}\right| \\
        nrows(U) &= t \\
        ncols(U) &= r
    \end{align*}
\end{definition}

\begin{claim} [\cite{bs19}]
\label{claim: qu qv qw number of linear operations}
    Let $\left<U,V,W\right>$ be a billinear algorithm. Let $q_U, q_V, q_W$ be the number of linear operations incurred by encoding/decoding using $U,V,W$ respectively. Then
    \begin{align*}
        q_U &= nnz(U)+nns(U)-nrows(U) \\
        q_V &= nnz(V)+nns(V)-nrows(V) \\
        q_W &= nnz(W)+nns(W)-ncols(W)
    \end{align*}
\end{claim}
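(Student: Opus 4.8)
The plan is to prove Claim~\ref{claim: qu qv qw number of linear operations} by counting, row by row, the arithmetic operations needed to apply each encoding/decoding matrix to a vectorized input. First I would fix a single row $[U]_i$ of $U$ and ask how many scalar additions and scalar multiplications are incurred when computing the inner product $[U]_i \cdot \vec{x}$. The nonzero entries of that row index exactly the coordinates of $\vec{x}$ that participate; if the row has $k$ nonzeros then combining them into a single scalar costs $k-1$ additions. Separately, every nonzero entry whose value lies outside $\{-1,0,1\}$ forces a genuine scalar multiplication before the addition, whereas entries equal to $\pm 1$ contribute with at most a sign flip, which is folded into the additions (or is free when the entry is $+1$ and the row has another term). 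So the per-row cost is $(\#\text{nonzeros in the row} - 1) + (\#\text{non-singleton entries in the row})$, with the convention that a row of all zeros or a single $\pm 1$ costs nothing, which the formula $k-1$ already accounts for since then $k \le 1$.

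Next I would sum this per-row count over all $t = nrows(U)$ rows. Summing the nonzero counts over rows gives $nnz(U)$; summing the non-singleton counts gives $nns(U)$; summing the ``$-1$'' terms gives $-nrows(U)$. Hence $q_U = nnz(U) + nns(U) - nrows(U)$, which is the claimed identity. The same argument applies verbatim to $V$, since $V$ is also applied on the left to a vectorized input $\vec{y}$, yielding $q_V = nnz(V) + nns(V) - nrows(V)$. For $W$ the situation is dual: the decoding step forms $W^T$ times the vector of products, so the relevant linear combinations are indexed by the \emph{columns} of $W$ rather than its rows; repeating the count with rows and columns interchanged gives $q_W = nnz(W) + nns(W) - ncols(W)$.

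The step I expect to require the most care is the bookkeeping of the $\pm 1$ entries and the ``$-1$ per combination'' term: one must be careful that a row consisting of a single nonzero entry (singleton or not) genuinely costs zero linear operations, that a leading $-1$ in a multi-term row is absorbed into a subtraction rather than charged as a multiplication, and that the $-nrows(U)$ correction correctly cancels the spurious ``$-1$'' contributed by empty rows. Since this is a restatement of a result from \cite{bs19}, I would present the counting argument at the level of detail above and refer to \cite{bs19} for the routine verification that no linear operation is double-counted across the recursion, invoking the recursive structure only to note that at the top level the transformations act on blocks, so the same per-row/per-column accounting applies blockwise and then scales by the block count without changing the leading-order constants.
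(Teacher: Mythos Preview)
The paper does not supply its own proof of this claim; it is stated with attribution to \cite{bs19} and used as a black box. Your counting argument is the standard one and is essentially correct: each row of $U$ (respectively $V$) with $k$ nonzeros costs $k-1$ additions plus one scalar multiplication per non-singleton entry, and summing over rows yields $nnz+nns-nrows$; for $W$ the decoding applies $W^T$, so the roles of rows and columns swap and the subtraction is $ncols(W)$.

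Two small inaccuracies in your edge-case discussion are worth cleaning up. First, a row consisting of a single \emph{non-singleton} entry does not cost zero operations: it costs exactly one scalar multiplication, and the formula correctly returns $(1-1)+1=1$, so your parenthetical ``(singleton or not)'' overstates the case. Second, the $-nrows(U)$ term does not ``cancel'' anything for an all-zero row; on the contrary, such a row would contribute $-1$ to the total, making the formula undercount. The formula is valid precisely because in any nondegenerate bilinear algorithm every row of $U$ and $V$ and every column of $W$ has at least one nonzero entry (an all-zero row would correspond to a vacuous product, and an all-zero column of $W$ would leave an output entry undefined). State that nondegeneracy hypothesis explicitly rather than suggesting the formula self-corrects.
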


\begin{claim} [\cite{bs19}]
    Let $U,V,W\in\mathbb{R}^{t_0\times s_0}$. Let $ALG=\left<U,V,W\right>$ be a recursive bilinear algorithm. Let $q=q_U+q_V+q_W$ be the number of linear operations performed at the base case. Then the additive complexity of $ALG$ is
    $$F_{ALG}(s)=\left(1+\frac{q}{t_0-s_0}\right)s^{\log_{s_0}t_0}-\left(\frac{q}{t_0-s_0}\right)s$$
\end{claim}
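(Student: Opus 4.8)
The plan is to reduce the statement to solving one linear recurrence in the recursion depth. First I would pin down the uniform recursive structure: since $U,V,W\in\mathbb{R}^{t_0\times s_0}$, a problem on inputs of vectorized size $s=s_0^{\ell}$ is partitioned into $s_0$ blocks of size $s/s_0$, the encodings replace these $s_0$ blocks by $t_0$ linear combinations of blocks, the $t_0$ resulting subproblems are solved recursively, and the decoding $W$ reassembles the output from $t_0$ blocks of size $s/s_0$. By Claim~\ref{claim: qu qv qw number of linear operations}, the numbers of scalar linear operations performed by one application of $U$, $V$, $W$ to scalar entries are $q_U,q_V,q_W$; applied instead to blocks of size $s/s_0$, each such operation costs $s/s_0$ scalar operations, so the linear work at a single recursion level on an instance of size $s$ is exactly $q\cdot\frac{s}{s_0}$ with $q=q_U+q_V+q_W$. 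Counting each of the $t_0$ bottom-level scalar multiplications as one arithmetic operation, this yields the recurrence $F_{ALG}(s)=t_0\,F_{ALG}(s/s_0)+q\,\frac{s}{s_0}$ with base case $F_{ALG}(1)=1$.

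Next I would solve this recurrence explicitly. Writing $s=s_0^{\ell}$ and unrolling gives $F_{ALG}(s_0^{\ell})=t_0^{\ell}+\frac{q}{s_0}\sum_{k=0}^{\ell-1}t_0^{k}s_0^{\ell-k}$. The sum is geometric: $\sum_{k=0}^{\ell-1}t_0^{k}s_0^{\ell-k}=s_0^{\ell}\sum_{k=0}^{\ell-1}(t_0/s_0)^{k}=\frac{s_0\left(t_0^{\ell}-s_0^{\ell}\right)}{t_0-s_0}$, where one uses $t_0\neq s_0$ (which holds for every nontrivial algorithm, as $t_0>s_0$). Substituting back yields $F_{ALG}(s_0^{\ell})=t_0^{\ell}+\frac{q\left(t_0^{\ell}-s_0^{\ell}\right)}{t_0-s_0}$. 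Alternatively, this closed form can be verified directly by induction on $\ell$: the case $\ell=0$ reads $1=\bigl(1+\tfrac{q}{t_0-s_0}\bigr)-\tfrac{q}{t_0-s_0}$, and the inductive step is a one-line substitution into the recurrence.

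Finally I would translate back to the variable $s$. Since $t_0^{\ell}=(s_0^{\ell})^{\log_{s_0}t_0}=s^{\log_{s_0}t_0}$ and $s_0^{\ell}=s$, the expression becomes $F_{ALG}(s)=s^{\log_{s_0}t_0}+\frac{q\left(s^{\log_{s_0}t_0}-s\right)}{t_0-s_0}=\left(1+\frac{q}{t_0-s_0}\right)s^{\log_{s_0}t_0}-\frac{q}{t_0-s_0}\,s$, as claimed; here $s^{\log_{s_0}t_0}$ is precisely the number of scalar multiplications, so the first term records the multiplicative cost and the remainder the additive cost.

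The step I expect to demand the most care is the bookkeeping in setting up the recurrence: arguing that the per-level linear-operation count is exactly $q$ independently of the recursion level (a consequence of the uniform recursive structure together with Claim~\ref{claim: qu qv qw number of linear operations} being stated for the base matrices), and that applying a base-case linear operation to a block of size $s/s_0$ multiplies its cost by exactly $s/s_0$. Once the recurrence $F_{ALG}(s)=t_0F_{ALG}(s/s_0)+q\,s/s_0$ is justified, the remaining manipulations are routine geometric-series algebra.
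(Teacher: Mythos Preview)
The paper does not prove this claim; it is quoted verbatim from \cite{bs19} and used as a black box. Your derivation is correct and is precisely the standard argument one finds in \cite{bs19} and its predecessors: set up the recurrence $F_{ALG}(s)=t_0\,F_{ALG}(s/s_0)+q\,s/s_0$ from the uniform recursive structure, unroll it as a geometric series, and rewrite $t_0^{\ell}$ as $s^{\log_{s_0}t_0}$.

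One small remark on terminology rather than substance: the phrase ``additive complexity'' in the statement is somewhat misleading, since the closed form manifestly includes the $s^{\log_{s_0}t_0}$ scalar multiplications (this is the ``$1$'' in the leading coefficient). You handled this correctly by taking $F_{ALG}(1)=1$ so that the bottom-level multiplications are counted; had you taken $F_{ALG}(1)=0$ you would have obtained only the $\tfrac{q}{t_0-s_0}\bigl(s^{\log_{s_0}t_0}-s\bigr)$ part. This is consistent with how the paper later uses the formula (Corollary~\ref{cor: leading coefficient of bs} reads off the full leading coefficient $1+\tfrac{q}{t_0-s_0}$), so your choice of base case is the intended one.
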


\begin{claim} [\cite{bs19}]
    Let $\varphi_1:\mathbb{R}^{s_1} \to \mathbb{R}^{s_2}$ be a linear transformation where $s_1\neq s_2$. Let $q_\varphi=nnz(\varphi)+nns(\varphi)-nrows(\varphi)$ be the additive complexity of $\varphi_1$. Then the additive complexity of $\varphi_l(v)$ is
    $$F_\varphi(s_1^l)=\frac{q_\varphi}{s_1-s_2}(s_1^l-s_2^l)$$
\end{claim}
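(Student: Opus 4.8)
The plan is to extract a one-step recurrence for the additive complexity from the recursive definition of $\varphi_l$ and then solve it by induction on $l$; since the resulting closed form is essentially a geometric sum, the solution will be immediate once the recurrence is set up correctly.

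First I would dispatch the base case $l=1$: here $\varphi_l$ is just the transformation $\varphi_1$ itself, so its additive complexity is $q_\varphi=nnz(\varphi)+nns(\varphi)-nrows(\varphi)$ by hypothesis (the specialization of Claim~\ref{claim: qu qv qw number of linear operations} to a single matrix), and the claimed closed form evaluates to $\frac{q_\varphi}{s_1-s_2}(s_1-s_2)=q_\varphi$, so the two agree.

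Next I would set up the recurrence for $l>1$. The recursive definition builds $\varphi_l(v)$, for $v\in\mathbb{R}^{s_1^l}$, from the $s_1$ independent recursive evaluations $\varphi_{l-1}(v^{(0)}),\dots,\varphi_{l-1}(v^{(s_1-1)})$, each on a block of length $s_1^{l-1}$ and producing a block of length $s_2^{l-1}$, together with a final application of the base transformation $\varphi_1$ that mixes these $s_1$ blocks into $s_2$ blocks, realizing $\varphi_1\otimes I_{s_2^{l-1}}$ on the concatenated recursive outputs and landing in $\mathbb{R}^{s_2^l}$. These two stages act on disjoint intermediate values and are performed sequentially, so their linear-operation counts add: the recursive stage costs $s_1\,F_\varphi(s_1^{l-1})$, while the final stage applies the $s_2\times s_1$ pattern of $\varphi_1$ independently at each of $s_2^{l-1}$ coordinate positions, one such application costing exactly $q_\varphi$ operations, for a total of $s_2^{l-1}q_\varphi$. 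Hence
\[
    F_\varphi(s_1^l)=s_1\,F_\varphi(s_1^{l-1})+s_2^{l-1}q_\varphi .
\]
Substituting the induction hypothesis $F_\varphi(s_1^{l-1})=\frac{q_\varphi}{s_1-s_2}(s_1^{l-1}-s_2^{l-1})$ and simplifying makes the cross terms $s_1^{l-1}s_2$ cancel, leaving $F_\varphi(s_1^l)=\frac{q_\varphi}{s_1-s_2}(s_1^l-s_2^l)$, which closes the induction. Equivalently, unrolling the recurrence gives $F_\varphi(s_1^l)=q_\varphi\sum_{j=1}^{l}s_1^{l-j}s_2^{j-1}$, and summing this geometric series (ratio $s_2/s_1$) yields the same expression; this is the only point where $s_1\neq s_2$ is used, guaranteeing the denominator is nonzero.

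I expect the main obstacle to be purely the bookkeeping in the recurrence step rather than anything conceptual: one must argue that the $s_1$ recursive applications of $\varphi_{l-1}$ and the single batched application of $\varphi_1$ involve disjoint linear operations, so that additive complexities genuinely add, and one must correctly identify that the top-level $\varphi_1$ is batched over $s_2^{l-1}$ coordinates (not $s_1^{l-1}$) — it is exactly this that produces the $s_2$-power in the recurrence and hence the symmetric final answer $s_1^l-s_2^l$. Everything after that is routine algebra.
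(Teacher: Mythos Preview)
The paper does not supply its own proof of this claim; it is quoted verbatim from \cite{bs19} and used as a black box in the additive-complexity analysis. Your argument is the standard one: read $\varphi_l$ as $\varphi_1^{\otimes l}$, factor one recursive level as $(\varphi_1\otimes I_{s_2^{l-1}})\circ(I_{s_1}\otimes\varphi_{l-1})$, obtain the recurrence $F_\varphi(s_1^l)=s_1\,F_\varphi(s_1^{l-1})+s_2^{l-1}q_\varphi$, and close it by induction or by summing the geometric series. This is correct, and your remark that the top-level application of $\varphi_1$ is batched over $s_2^{l-1}$ (not $s_1^{l-1}$) coordinates is exactly the point that makes the recurrence come out right. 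The only thing worth flagging is that the recursive definition as printed in the paper literally stacks the $\varphi_{l-1}(v^{(i)})$ without an explicit outer $\varphi_1$, which would land in $\mathbb{R}^{s_1 s_2^{l-1}}$ rather than $\mathbb{R}^{s_2^l}$; you have silently (and correctly) interpreted the intended meaning as the $l$-fold tensor power, which is consistent with how the claim and the subsequent theorem use it.
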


Using the previous claims, we can give an exact formula for the runtime of a recursive bilinear algorithm.

\begin{theorem} [\cite{bs19}]
    Let $ALG=\left<U_\phi, V_\phi, W_\phi\right>_\phi$ be a decomposed recursive matrix multiplication algorithm, where $U_\varphi, V_\varphi, W_\varphi\in\mathbb{R}^{t_0\times s_0}$ and $\varphi\in\mathbb{R}^{s_0 \times n_0^2}$. Denote $\omega_0=\log_{n_0}t_0$. Then 
    \begin{align*}
        F_{ALG}(n) &= \left(\frac{q_{U_\phi}+q_{V_\phi}+q_{W_\phi}}{t_0-s_0}+1\right)n^{\omega_0} \\
        &+ \left(\frac{2q_\phi+q_{\phi^T}}{s_0-n_0^2} - \frac{q_{U_\phi}+q_{V_\phi}+q_{W_\phi}}{t_0-s_0}\right)n^{\log_{n_0}s_0} \\
        &- \frac{2q_\phi+q_{\phi^T}}{s_0-n_0^2}n^2
    \end{align*}
\end{theorem}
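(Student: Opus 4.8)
The plan is to assemble the runtime formula by separately accounting for the cost of the two fast basis transformations applied to the inputs, the one applied to the output, and the recursive bilinear part, and then summing. First I would invoke the claim that expresses the additive complexity $F_\varphi$ of an $l$-fold transformation $\varphi_l$ as $\frac{q_\varphi}{s_1-s_2}(s_1^l - s_2^l)$; here $s_1 = n_0^2$ and $s_2 = s_0$, so each application of $\phi$ (on $A$ and on $B$) costs $\frac{q_\phi}{n_0^2 - s_0}(n_0^{2l} - s_0^l)$, and the decoding transformation $\phi^T$ costs $\frac{q_{\phi^T}}{n_0^2 - s_0}(n_0^{2l} - s_0^l)$. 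Writing $n = n_0^l$ so that $n_0^{2l} = n^2$ and $s_0^l = n^{\log_{n_0} s_0}$, the combined transformation cost is $\frac{2q_\phi + q_{\phi^T}}{n_0^2 - s_0}(n^2 - n^{\log_{n_0} s_0})$, which I would rewrite with denominator $s_0 - n_0^2$ to match the sign convention in the statement, giving the two terms $-\frac{2q_\phi+q_{\phi^T}}{s_0-n_0^2}n^2$ and $+\frac{2q_\phi+q_{\phi^T}}{s_0-n_0^2}n^{\log_{n_0}s_0}$.

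Next I would handle the recursive bilinear part. By the earlier claim, a recursive bilinear algorithm with $U,V,W \in \mathbb{R}^{t_0 \times s_0}$ performing $q = q_{U_\phi} + q_{V_\phi} + q_{W_\phi}$ linear operations at the base case has additive complexity $F(s) = \left(1 + \frac{q}{t_0 - s_0}\right)s^{\log_{s_0} t_0} - \frac{q}{t_0-s_0}s$, where here the ``size'' variable $s$ is the length of the vectorized input to the bilinear part, namely $s_0^l$. Substituting $s = s_0^l$, noting $s^{\log_{s_0} t_0} = t_0^l = n^{\log_{n_0} t_0} = n^{\omega_0}$ by the definition $\omega_0 = \log_{n_0} t_0$, and noting $s = s_0^l = n^{\log_{n_0} s_0}$, this contributes $\left(1 + \frac{q_{U_\phi}+q_{V_\phi}+q_{W_\phi}}{t_0 - s_0}\right)n^{\omega_0} - \frac{q_{U_\phi}+q_{V_\phi}+q_{W_\phi}}{t_0-s_0}n^{\log_{n_0} s_0}$.

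Finally I would add the transformation cost and the bilinear cost. The $n^{\omega_0}$ coefficient comes entirely from the bilinear part and is $\frac{q_{U_\phi}+q_{V_\phi}+q_{W_\phi}}{t_0-s_0}+1$. The $n^{\log_{n_0} s_0}$ coefficient receives a contribution $+\frac{2q_\phi+q_{\phi^T}}{s_0-n_0^2}$ from the transformations and $-\frac{q_{U_\phi}+q_{V_\phi}+q_{W_\phi}}{t_0-s_0}$ from the bilinear part, matching the claimed middle term. The $n^2$ term is $-\frac{2q_\phi+q_{\phi^T}}{s_0-n_0^2}$, entirely from the transformations. One subtlety worth checking is that the two input transformations both use $\phi$ (hence the factor $2q_\phi$) while the output uses $\phi^T$ and that $q_{\phi^T}$ need not equal $q_\phi$ since $nnz$, $nns$, and the row/column count can differ under transpose; the hypothesis $\phi = \psi = \nu$ is exactly what lets us collapse the two input costs, and the output uses $\nu^T = \phi^T$. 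I do not expect a genuine obstacle here — the result is a bookkeeping assembly of three previously stated formulas — but the one place to be careful is keeping the sign conventions ($s_0 - n_0^2$ versus $n_0^2 - s_0$, and $t_0 - s_0$) consistent across the three pieces so that the terms combine exactly as stated.
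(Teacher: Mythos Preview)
The paper does not actually prove this theorem; it is quoted from \cite{bs19} and stated immediately after the two preceding claims (one giving the additive complexity of a recursive bilinear algorithm, the other giving the additive complexity of an $l$-fold linear transformation). Your proposal is precisely the natural derivation from those two claims: add two input transformations, one output transformation, and the recursive bilinear cost, then rewrite in powers of $n$. The bookkeeping you outline---$s_0^l = n^{\log_{n_0} s_0}$, $t_0^l = n^{\omega_0}$, and the sign flip between $n_0^2 - s_0$ and $s_0 - n_0^2$---is correct, and the collected coefficients match the stated formula exactly.
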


\begin{corollary} [\cite{bs19}]
    \label{cor: leading coefficient of bs}
    The leading coefficient is $\frac{q_{U_\phi}+q_{V_\phi}+q_{W_\phi}}{t_0-s_0}+1$.
\end{corollary}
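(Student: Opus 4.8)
The plan is to obtain the corollary by reading off the dominant term of the closed-form expression for $F_{ALG}(n)$ in the preceding theorem. That expression is a sum of three monomials in $n$, with exponents $\omega_0 = \log_{n_0} t_0$, $\log_{n_0} s_0$, and $2$, and with respective coefficients $\frac{q_{U_\phi}+q_{V_\phi}+q_{W_\phi}}{t_0-s_0}+1$, $\frac{2q_\phi+q_{\phi^T}}{s_0-n_0^2} - \frac{q_{U_\phi}+q_{V_\phi}+q_{W_\phi}}{t_0-s_0}$, and $-\frac{2q_\phi+q_{\phi^T}}{s_0-n_0^2}$. Since the leading coefficient is by definition the coefficient of the term of largest exponent, the whole task reduces to checking that $\omega_0$ is strictly larger than the other two exponents.

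First I would pin down the ordering of the exponents. The matrices $U_\phi, V_\phi, W_\phi \in \mathbb{R}^{t_0\times s_0}$ encode a rank-$t_0$ bilinear algorithm acting on length-$s_0$ vectors, and for a genuine decomposed algorithm $t_0 > s_0$, so $\omega_0 = \log_{n_0} t_0 > \log_{n_0} s_0$. Next, $\phi \in \mathbb{R}^{s_0 \times n_0^2}$ is injective — it is an invertible change of basis, possibly followed by zero-padding — hence $s_0 \geq n_0^2$ and therefore $\log_{n_0} s_0 \geq \log_{n_0} n_0^2 = 2$. Chaining these gives $\omega_0 > \log_{n_0} s_0 \geq 2$, so $\omega_0$ is indeed the strictly largest of the three exponents.

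Given this ordering, the conclusion is immediate: as $n\to\infty$ the $n^{\omega_0}$ term strictly dominates the $n^{\log_{n_0} s_0}$ term and the $n^2$ term, so $F_{ALG}(n)$ equals $\left(\frac{q_{U_\phi}+q_{V_\phi}+q_{W_\phi}}{t_0-s_0}+1\right) n^{\omega_0}$ plus terms of strictly smaller order, and the parenthesized quantity is the leading coefficient. There is essentially no obstacle — which is why this is stated as a corollary rather than a theorem — but two small points deserve care: one should note that $t_0 > s_0$ makes the denominator positive and the leading coefficient well-defined and at least $1$; and in the degenerate case $s_0 = n_0^2$ the transformation $\phi$ is square, the $n^{\log_{n_0} s_0}$ and $n^2$ terms coalesce, and the formula reduces to $1 + \frac{q}{t_0-s_0}$ with $q = q_{U_\phi}+q_{V_\phi}+q_{W_\phi}$, matching the earlier non-decomposed claim.
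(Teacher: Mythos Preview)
Your proposal is correct and matches the paper's (implicit) approach: the paper states this corollary without proof, citing \cite{bs19}, because it is meant to be read off directly from the preceding theorem's closed-form expression for $F_{ALG}(n)$. Your added justification that $\omega_0 > \log_{n_0} s_0 \geq 2$ is exactly the check needed to identify the dominant term, and is more explicit than anything the paper itself provides.
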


\subsection{Decomposing $\szone$ Algorithms}
We next reduce the additive complexity of our algorithm, making the practical. This is obtained by finding a sparse decomposition \cite{bs19} of our algorithms. Hadas and Schwartz \cite{hs23} reduce the leading coefficient of Pan's 1982 algorithms by finding a sparse decomposition of Pan's algorithms. Their decomposition is based on the transformations embedded in Pan's algorithm. The improved coefficient is obtained by applying the transformations separately from the rest of the algorithm.

We similarly reduce the leading coefficient of our algorithms to about $8$ (recall Table \ref{tab:leading coefficients of algorithms}). To this end, we describe a decomposition of our algorithms by using a technique similar to that of Hadas and Schwartz \cite{hs23}. 

The description of our algorithms (Section \ref{sec: improving pans algorithms}) begins by applying a transformation $\varphi$ to input and output matrices. Thus, by computing the transformation separately to the recursive bilinear part, we obtain a decomposed recursive algorithm of the form $\left<U_\varphi, V_\varphi, W_\varphi\right>_\varphi$. The resulting leading coefficients is similar to that of Hadas and Schwartz and is found in Table \ref{tab:leading coefficients of algorithms}. The leading coefficient is obtained from Table \ref{tab:nnz nns for decomposition} and Corollary \ref{cor: leading coefficient of bs}. The decomposed matrices are provided as supplemental material to this work \footnote{\supplementalurl}. The number of non zeros and the number of non singleton in the encoding and decoding matrices $U_\varphi, V_\varphi, W_\varphi$ are listed in Table \ref{tab:nnz nns for decomposition}. 

\begin{example}
    For $n_0=44$, we have $nnz(U_\varphi)=103661$, $nns(U_\varphi)=92$, $nnz(V_\varphi)=103822$, $nns(V_\varphi)=322$, $nnz(W_\varphi)=103753$, $nns(W_\varphi)=6532$, $t_0=36110$, $s_0=2116$ (see Table \ref{tab:nnz nns for decomposition}).

    We obtain the leading coefficient using the formula in Corollary \ref{cor: leading coefficient of bs}. Note that by Claim \ref{claim: qu qv qw number of linear operations},
    \begin{align*}
        q_{U_\varphi} &= nnz(q_{U_\varphi}) + nns(q_{U_\varphi}) - nrows(U_\varphi) = 103753 - t_0 = 67643 \\
        q_{V_\varphi} &= nnz(q_{V_\varphi}) + nns(q_{V_\varphi}) - nrows(U_\varphi) = 104144 - t_0 =  68034 \\
        q_{W_\varphi} &= nnz(q_{W_\varphi}) + nns(q_{W_\varphi}) - ncols(W_\varphi) = 110285 - s_0 = 108169
    \end{align*}
    and thus
    \begin{align*}
        c=\frac{q_{U_\varphi}+q_{V_\varphi}+q_{W_\varphi}}{t_0-s_0}+1=\frac{243846}{33994} + 1 \approx 8.174
    \end{align*}
\end{example}

\begin{table}
    \centering
    \begin{tabular}{|c|c|c|c|c|c|c|c|c|c|}
    \hline
    $n_0$ & $nnz(U_\varphi)$ & $nns(U_\varphi)$ & $nnz(V_\varphi)$ & $nns(V_\varphi)$ & $nnz(W_\varphi)$ & $nns(W_\varphi)$ & $t_0$ & $s_0$ & $c$\\
    \hline\hline
$20$ & $12089$ & $44$ & $12166$ & $154$ & $12133$ & $1540$ & $4378$ & $484$ & $8.419$ \\ 
\hline
$30$ & $35824$ & $64$ & $35936$ & $224$ & $35888$ & $3200$ & $12688$ & $1024$ & $8.265$ \\ 
\hline
$40$ & $79359$ & $84$ & $79506$ & $294$ & $79443$ & $5460$ & $27748$ & $1764$ & $8.193$ \\ 
\hline
$42$ & $90970$ & $88$ & $91124$ & $308$ & $91058$ & $5984$ & $31746$ & $1936$ & $8.183$ \\ 
\hline
$44$ & $103661$ & $92$ & $103822$ & $322$ & $103753$ & $6532$ & $36110$ & $2116$ & $8.174$ \\ 
\hline
$46$ & $117480$ & $96$ & $117648$ & $336$ & $117576$ & $7104$ & $40856$ & $2304$ & $8.165$ \\ 
\hline
$48$ & $132475$ & $100$ & $132650$ & $350$ & $132575$ & $7700$ & $46000$ & $2500$ & $8.158$ \\ 
\hline
$50$ & $148694$ & $104$ & $148876$ & $364$ & $148798$ & $8320$ & $51558$ & $2704$ & $8.151$ \\ 
\hline
$60$ & $249829$ & $124$ & $250046$ & $434$ & $249953$ & $11780$ & $86118$ & $3844$ & $8.124$ \\ 
\hline
    
    \end{tabular}
    \caption{Detailed analysis of our decomposed algorithms. The algorithm uses a fast basis transformation $\varphi:\mathbb{R}^{n_0^2}\to\mathbb{R}^{s_0}$, and encoding/decoding matrices $U_\varphi, V_\varphi, W_\varphi \in \mathbb{R}^{t_0\times s_0}$. The resulting algorithm has an additive complexity of $c\cdot n^{\log_{n_0}t_0} + o\left(n^{\log_{n_0}t_0}\right)$. Recall that $nnz$ is the number of elements in the matrix that are non-zeros, $nns$ is the number of elements in the matrix that are non-singletons, that is, not in $\left\{-1,0,1\right\}$, $U_\varphi, V_\varphi, W_\varphi$ are $t_0\times s_0$ matrices and $c$ is the resulting leading coefficient.}
    
    \label{tab:nnz nns for decomposition}
\end{table}

\section{Composition of Algorithms}
\label{app: composition of algorithms}
In this section we analyze the composition of our algorithms with other algorithms. Specifically, we show that composing our $\left<44,44,44;36110\right>$-algorithm with Strassen's $\left<2,2,2;7\right>$-algorithm will require less multiplication compared to our $\left<88,88,88;257100\right>$

\begin{claim}
    The algorithm $\szone_{88}$ is an $\left<88,88,88;257100\right>$-algorithms.
\end{claim}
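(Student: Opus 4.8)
The plan is to compute the number of multiplications used by $\szone_{88}$ directly from Theorem \ref{thm: num of mults of ALG_1}. That theorem states that for every even $n_0 \neq 16$, there is an $\left<n_0,n_0,n_0;t\right>$-algorithm in the family $\szone$ with $t = \frac{n_0^3}{3} + \frac{15}{4}n_0^2 + \frac{61}{6}n_0 + 8$. Since $n_0 = 88$ is even and not equal to $16$, the theorem applies directly, so all that remains is to substitute $n_0 = 88$ into this closed form and verify the arithmetic.

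First I would evaluate each term: $\frac{88^3}{3} = \frac{681472}{3}$, $\frac{15}{4}\cdot 88^2 = \frac{15}{4}\cdot 7744 = 29040$, $\frac{61}{6}\cdot 88 = \frac{5368}{6} = \frac{2684}{3}$, and the constant $8$. Adding, $t = \frac{681472}{3} + 29040 + \frac{2684}{3} + 8 = \frac{681472 + 2684}{3} + 29048 = \frac{684156}{3} + 29048 = 228052 + 29048 = 257100$. So $t = 257100$, as claimed.

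The only point requiring care is confirming that the closed-form expression indeed yields an integer (which it must, since it counts multiplications): the fractional terms $\frac{n_0^3}{3}$ and $\frac{61}{6}n_0$ combine to $\frac{2n_0^3 + 61 n_0}{6}$, which for $n_0 = 88$ equals $\frac{2\cdot 681472 + 5368}{6} = \frac{1368312}{6} = 228052$, an integer; together with $\frac{15}{4}n_0^2 = 29040$ (integer since $4 \mid 88^2$) and the constant, everything is integral. I do not anticipate any real obstacle here — the statement is an immediate instantiation of the already-established Theorem \ref{thm: num of mults of ALG_1}, and the proof is a one-line substitution followed by the arithmetic above.
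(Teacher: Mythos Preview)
Your proposal is correct and follows the same approach as the paper: substitute $n_0=88$ into the multiplication-count formula from Theorem~\ref{thm: num of mults of ALG_1}. (In fact, the paper's own proof contains a typo, citing Theorem~\ref{theorem: we have 2227 alg with U0=KU and V0=KV} instead of Theorem~\ref{thm: num of mults of ALG_1}; you reference the correct theorem.)
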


\begin{proof}
    Substituting $n_0=88$ in Theorem \ref{theorem: we have 2227 alg with U0=KU and V0=KV} we get that $\szone_{88}$ requires $t=257100$ multiplications. 
\end{proof}

\begin{claim}
    Let $ALG$ be the algorithm obtained by composing $\szone_{44}$ with Strassen's $\left<2,2,2;7\right>$-algorithm. Then $ALG$ is an $\left<88,88,88,252770\right>$-algorithm.
\end{claim}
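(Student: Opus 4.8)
The plan is to compute the multiplication count of the composed algorithm directly from Claim~\ref{claim: composition of algorithms} together with the bookkeeping already established for $\szone$. Composing a $\left<44,44,44;t'\right>$-algorithm with Strassen's $\left<2,2,2;7\right>$-algorithm yields a $\left<88,88,88;7t'\right>$-algorithm, so the naive count is $7\cdot 36110 = 252770$, which matches the claimed bound. Thus the statement follows immediately from Claim~\ref{claim: composition of algorithms} applied to the $\left<44,44,44;36110\right>$-algorithm $\szone_{44}$ (Corollary~\ref{cor: optimal exponent of ALG_1}) and Strassen's $\left<2,2,2;7\right>$-algorithm (Claim~\ref{claim: Strassens Algorithm}). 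The only thing to verify is that the base case of the composed algorithm is indeed $88\times 88$: a $\left<44,44,44\right>$ base case composed with a $\left<2,2,2\right>$ base case produces a $\left<88,88,88\right>$ base case, and $44^\ell\cdot 2^\ell = 88^\ell$, so the recursive structure is consistent and the resulting exponent is $\log_{88}252770$.

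First I would invoke $\rk(\left<44,44,44\right>)\le 36110$, which is exactly what Corollary~\ref{cor: optimal exponent of ALG_1} gives (equivalently, Theorem~\ref{thm: num of mults of ALG_1} with $n_0=44$: $t^{Pan}-\frac{n_0}{2}-1 = \frac{44^3}{3}+\frac{15}{4}\cdot 44^2+\frac{61}{6}\cdot 44 + 8 = 36110$). Next I would invoke $\rk(\left<2,2,2\right>)\le 7$ from Strassen's algorithm. Then Claim~\ref{claim: composition of algorithms} with $(m_0,n_0,p_0;t)=(44,44,44;36110)$ and $(m_0',n_0',p_0';t')=(2,2,2;7)$ produces an $\left<88,88,88;36110\cdot 7\right>$-algorithm. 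Finally I would record the arithmetic $36110\cdot 7 = 252770$ to conclude.

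There is essentially no obstacle here: the statement is a one-line corollary of the composition claim, and the ``hard part'' is merely confirming the numerology and making sure one composes with Strassen on the \emph{inside} (or outside — composition of matrix multiplication tensors is commutative in the relevant sense, since both orders give a $\left<88,88,88;252770\right>$-algorithm). The mild subtlety worth a sentence of justification is \emph{why} this beats the direct construction: $\szone_{88}$ is a $\left<88,88,88;257100\right>$-algorithm by the previous claim (Theorem~\ref{thm: num of mults of ALG_1} with $n_0=88$, noting $88\neq 16$ is even), and $252770 < 257100$, so the composition route is strictly better for this particular base case even though both give the same asymptotic exponent $\omega_0$ as recursing with $\szone_{44}$ alone. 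I would close by remarking that this illustrates the point made in the discussion: for base cases that factor nicely, combining smaller algorithms can lower the multiplication count relative to the dedicated trilinear-aggregation construction at that size.
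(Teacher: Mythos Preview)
Your proposal is correct and follows exactly the paper's approach: the paper's own proof is the single sentence ``The claim follows immediately from Claim~\ref{claim: composition of algorithms},'' and your write-up is simply a more detailed unpacking of that same application (plus some extra contextual commentary that the paper omits).
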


\begin{proof}
    The claim follows immediately from Claim \ref{claim: composition of algorithms}.
\end{proof}

\end{document}